\tikzstyle{knoten}=[circle,minimum size=4pt,draw,inner sep=1pt,fill=black]
\tikzstyle{alter}=[circle, minimum size=16pt, draw, inner sep=1pt]
\tikzstyle{majarr}=[draw=black]
\tikzstyle{matched}=[ultra thick]
\pgfplotsset{compat=1.15} %
\definecolor{italyGreen}{RGB}{0, 146, 70}
\definecolor{italyRed}{RGB}{206, 43, 55}
\pgfplotsset{
    discard if not/.style 2 args={
        x filter/.append code={
            \edef\tempa{\thisrow{#1}}
            \edef\tempb{#2}
            \ifx\tempa\tempb
            \else
                \def\pgfmathresult{inf}
            \fi
        }
    },
    discard if/.style 2 args={
        x filter/.append code={
            \edef\tempa{\thisrow{#1}}
            \edef\tempb{#2}
            \ifx\tempa\tempb
                \def\pgfmathresult{inf}
            \else
            \fi
        }
    },
    replace if not/.style n args={3}{
        y filter/.append code={
            \edef\tempa{\thisrow{#1}}
            \edef\tempb{#2}
            \ifx\tempa\tempb
            \else
                \def\pgfmathresult{#3}
            \fi
        }
    },
    replace if/.style n args={3}{
        y filter/.append code={
            \edef\tempa{\thisrow{#1}}
            \edef\tempb{#2}
            \ifx\tempa\tempb
                \def\pgfmathresult{#3}
            \else
            \fi
        }
    },
    filter if empty/.style 2 args={
		x filter/.code={
			\pgfplotstablegetelem{\coordindex}{#1}\of{#2}
			\edef\tempb{}
			\ifx\pgfplotsretval\tempb
				\def\pgfmathresult{}
			\else%
			\fi
		},
	}
}
            \def\pgfplotstable@loc@TMPd{\pgfplotstablegetelem{##1}{#1}\of}
            \edef\tempa{\pgfplotsretval}
            \edef\tempb{#2}
            \def\pgfplotstable@loc@TMPd{\pgfplotstablegetelem{##1}{#1}\of}
            \edef\tempa{\pgfplotsretval}
            \edef\tempb{#2}
\pgfplotsset{
    boxplot prepared from table/.code={
        \def\tikz@plot@handler{\pgfplotsplothandlerboxplotprepared}%
        \pgfplotsset{
            /pgfplots/boxplot prepared from table/.cd,
            #1,
        }
    },
    /pgfplots/boxplot prepared from table/.cd,
        table/.code={\pgfplotstablecopy{#1}\to\boxplot@datatable},
        row/.initial=0,
        make style readable from table/.style={
            #1/.code={
                \pgfplotstablegetelem{\pgfkeysvalueof{/pgfplots/boxplot prepared from table/row}}{##1}\of\boxplot@datatable
                \pgfplotsset{boxplot/#1/.expand once={\pgfplotsretval}}
            }
        },
        make style readable from table=lower whisker,
        make style readable from table=upper whisker,
        make style readable from table=lower quartile,
        make style readable from table=upper quartile,
        make style readable from table=median,
        make style readable from table=lower notch,
        make style readable from table=upper notch
}
\theoremstyle{definition}
\newtheorem{theorem}{Theorem}[section]
\newtheorem{lemma}[theorem]{Lemma}
\newtheorem{observation}[theorem]{Observation}
\newtheorem{rrule}{Reduction Rule}[section]
\crefname{rrule}{Reduction Rule}{Reduction Rules}
\Crefname{rrule}{RR}{RR}
\Crefname{observation}{Obs}{Obs}
\crefname{observation}{Observation}{Observations}
\crefname{cond}{Condition}{Conditions}
\crefname{step}{Step}{Steps}
\newtheorem{definition}[theorem]{Definition}
\theoremstyle{remark}
\newtheorem{remark}{Remark}
\theoremstyle{plain}
\newcommand{\problemdef}[3]{
	\smallskip
	\begin{center}
		\begin{minipage}{0.95\textwidth}
			\textsc{#1}
			
			\vspace{3pt}
			
			\setlength{\tabcolsep}{3pt}
			\begin{tabularx}{\textwidth}{@{}lX@{}}
				\textbf{Input:} 	& #2 \\
				\textbf{Question:} 	& #3
			\end{tabularx}
		\end{minipage}
	\end{center}
	\smallskip
}
\newcommand{\CMatch}{\textsc{Maximum-Cardinality Matching}\xspace}
\newcommand{\WMatch}{\textsc{Maximum-Weight Matching}\xspace}
\newcommand{\VC}{\textsc{Vertex Cover}\xspace}
\newcommand{\paramEnv}[1]{#1}
\newcommand{\match}{\ensuremath{\text{mm}}}
\newcommand{\solSize}{s}
\newcommand{\maxpath}{maximal path}
\newcommand{\maxpaths}{maximal paths}
\newcommand{\pendcycle}{pending cycle}
\newcommand{\pendcycles}{pending cycles}
\newcommand{\N}{\mathds{N}}
\newcommand{\resultsAllTab}{data/results.csv}
\newcommand{\dataHighWeighted}{data/old-high-weighted-all-results-changed-permutation.csv}
\pgfmathtruncatemacro\TotalRowsResultsAll{\pgfplotsretval-1}
\def\zeroOffset{1}
\begin{document}

\title{Data Reduction for Maximum Matching on Real-World Graphs: Theory and Experiments%
\thanks{This work was partially supported by the DFG project FPTinP (NI 369/16).}}
\author{Tomohiro~Koana \and Viatcheslav~Korenwein \and André~Nichterlein \and Rolf~Niedermeier \and Philipp~Zschoche}
\date{Institut f\"ur Softwaretechnik und Theoretische Informatik,  TU~Berlin, Germany,\\
\texttt{\small \{tomohiro.koana,andre.nichterlein,rolf.niedermeier,zschoche\}@tu-berlin.de}}

\maketitle

\begin{abstract}
Finding a maximum-cardinality or maximum-weight matching in (edge-weighted) undirected graphs is among the most prominent problems of algorithmic graph theory. 
For $n$-vertex and~$m$-edge graphs, the best known algorithms run in $\widetilde{O}(m\sqrt{n})$ time. 
We build on recent theoretical work focusing on linear-time data reduction rules for finding maximum-cardinality matchings and complement the theoretical results by presenting and analyzing (thereby employing the kernelization methodology of parameterized complexity analysis) new (near-)linear-time data reduction rules for both the unweighted and the positive-integer-weighted case. 
Moreover, we experimentally demonstrate that these data reduction rules provide significant speedups of the state-of-the art implementations for computing matchings in real-world graphs: the average speedup factor is~4.7 in the unweighted case and 12.72 in the weighted case. 
\end{abstract}

\section{Introduction}
In their book chapter on matching, \citet{KV18} write that ``matching theory is one of the classical and most important topics in combinatorial theory and optimization''.
Correspondingly, the design and analysis of (weighted) matching algorithms plays a pivotal role in algorithm theory as well as in practical computing.
Complementing the rich literature on matching algorithms (see \citet{CDP19} and \citet{DPS18} for recent accounts, the latter also providing a literature overview), in this work we focus on efficient linear-time data reduction rules that may help to speedup superlinear-time matching algorithms.
Notably, while recent breakthrough results on (weighted) matching (including linear-time approximation algorithms~\cite{DP14}) focus on the theory side, we study theory and practice, thereby contributing to both sides.

To achieve our results, we follow and complement recent purely theoretical work~\cite{MNN20} presenting and analyzing linear-time data reductions for the unweighted case. 
More specifically, on the theoretical side we 
provide and analyze further data reduction rules for the unweighted as well as weighted case.
On the practical side, we demonstrate that these data reduction rules may serve to speedup various matching solvers (including state-of-the-art ones) due to \citet{HSt17}, \citet{KP98}, and \citet{Kol09}.

Formally, we study the following two problems; note that we formulate them as decision problems since this better fits with presenting our theoretical part where we prove kernelization results (thereby employing the framework of parameterized complexity analysis).
However, all our data reduction rules are ``parameter-oblivious'' 
and thus also work and are implemented for the optimization versions where the solution size is not known in advance. %
 
\problemdef{\CMatch}
	{An undirected graph~$G=(V,E)$ and $\solSize \in \N$.}
	{Is there a size-$\solSize$ subset~$M \subseteq E$ of nonoverlapping (that is, pairwise vertex-disjoint) edges?}

\vspace{-4mm}

\problemdef{\WMatch}
	{An undirected graph~$G=(V,E)$, non-negative edge weights~$\omega\colon E \rightarrow \N$, and~$\solSize \in \N$.}
	{Is there a subset~$M \subseteq E$ of nonoverlapping edges of weight~$\sum_{e \in M} \omega(e) \ge \solSize$?}

We remark that all our results extend to the case of rational weights; however, natural numbers are easier to cope with. %

\paragraph{Related work.}
\citet{MV80} were the first to announce an~$O(\sqrt{n}m)$-time algorithm for \CMatch on graphs with~$n$ vertices and~$m$ edges; see \citet{Vaz20} for the details of this algorithm.
This time bound was previously achieved only for bipartite graphs~\cite{HK73}.
While the classic matching algorithm of \citet{HK73} is simple, elegant, and also very efficient in practice (in fact we use it in one kernelization algorithm as subroutine), the algorithm of \citet{MV80} is rather complicated and not (yet) competitive in practice.\footnote{The only implementation of the algorithm of \citet{MV80} we are aware of is due to \citet{HSt17}. This solver was the slowest in our experiments.}
In fact, the fastest solver for \CMatch seems to be still the one by \citet{KP98}, with a worst-case running time of~$O(nm \cdot \alpha(n,m))$ ($\alpha$ denotes the inverse of the Ackermann function). 

The (theoretically) fastest algorithm for \WMatch in sparse graphs is by \citet{DPS18} with a running time of~$O(\sqrt{n}m \log (nN))$ (here~$N$ denotes the largest integer weight).
In practice, the fastest solver we found is due to \citet{Kol09}, which is an implementation of Edmonds' algorithm~\cite{Edm65,Edm65-2} for a perfect matching of minimum cost combined with many heuristic speedups.

Providing parameterized algorithms or kernels for \CMatch has recently gained high interest~\cite{CDP19,KN18,HK19,MNN20,GMN17,FLSPW18,IOO17}.
For \WMatch, however, we are only aware of the work by \citet{IOO17} who provided an algorithm with running time~$O(t (m + n \log n))$ (here~$t$ is the tree-depth of the input graph).

In this work, we transfer some data reduction rules for \textsc{Vertex Cover} to \CMatch.
To this end, we use the algorithm by \citet{IOY14} that in~$O(m\sqrt{n})$ time exhaustively applies an LP-based data reduction rule due to~\citet{NT75}.
We refer to \citet{HLSS20} for a brief overview of practically relevant data reduction rules for \textsc{Vertex Cover}.

Very recently, \citet{KLPU20} provided a fine-tuned implementation of degree-based data reduction rules for \CMatch which is on average three times faster than our implementation when considering the same data reduction rules (see \cref{rule:deg-zero-one-vertices,rule:deg-two-vertices} in \Cref{ssec:low-degree-rules}).

\paragraph{Our contributions.}
We extend kernelization results~\cite{MNN20} for \CMatch and lift them to \WMatch. 
Our data reduction rules for \CMatch are well-known (as crown rule~\cite{JCF04} and LP-based rule~\cite{NT75}) for the NP-hard \textsc{Vertex Cover} problem.
Our theoretical contribution here is to show that the crown rule is also correct for \CMatch.
Moreover, we prove that the exhaustive application of the crown rule and exhaustive application of the LP-based rule lead to the very same graph; thus these two known rules can be seen as equivalent.
We provide algorithms to efficiently apply our data reduction rules (for the unweighted and the weighted case). 
Herein, we have a particular eye on exhaustively applying the data reduction rules in (near) linear time, which seems imperative in an effort to practically improve matching algorithms. 
Hence, our main theoretical contribution lies in developing efficient algorithms implementing the data reduction rules, thereby also showing a purely theoretical guarantee on the amount of data reduction that can be achieved in the worst case (this is also known as kernelization in parameterized algorithmics).
We proceed by implementing and testing the data reduction algorithms for \CMatch and \WMatch, thereby demonstrating their practical effectiveness. 
More specifically, combining them in form of preprocessing with various solvers~\cite{Kol09,HSt17,KP98} yields partially huge speedups on sparse real-world graphs (taken from the SNP library~\cite{snap}).
We refer to \cref{tab:results} for an overview over the various solvers (with the core algorithmic approach they implement) and the speedup factors obtained by apply our data reduction rules as a preprocessing.
\begin{table}
		
	\caption{
		Summary of the speedup factors gained by our kernelization on graphs from the SNP library~\cite{snap} with various solvers. 
		We refer to \cref{sec:experiments} for details.
	}
	\centering
	\begin{tabular}{l l @{\hskip 1cm} r r}  \toprule
		\multicolumn{2}{c}{solver} 						& \multicolumn{2}{c}{speedup}\\
		implemented by				& algorithmic approach by& average 	& median \\ \midrule
		\citet{Kol09} (unweighted) 	& \citet{Edm65,Edm65-2}	& $157.30$ 	& $29.27$ \\
		\citet{HSt17} 				& \citet{MV80}			& 608.79 		& 28.87 \\
		\citet{KP98} 				& \citet{Edm65}		& 4.70 & 2.20 \\ 
		\midrule
		\citet{Kol09} (weighted) 	& \citet{Edm65,Edm65-2}	& 12.72 & 1.40 \\
		\bottomrule
	\end{tabular}
	\label{tab:results}
\end{table}

\paragraph{Notation.}
We use standard notation from graph theory. 
All graphs considered in this work are simple and undirected.
For a graph~$G = (V,E)$, we denote with~$E(G) = E$ the edge set.
For a vertex subset~$V' \subseteq V$, we denote with~$G[V']$ the subgraph induced by~$V'$.
We write $uv$ to denote the edge~$\{u,v\}$ and~$G-v$ to denote the graph obtained from~$G$ by removing~$v$ and all its incident edges.
A \emph{feedback edge set} of a graph~$G$ is a set~$X$ of edges such that~$G-X = (V, E \setminus X)$ is a tree or forest.
The \emph{feedback edge number} denotes the size of a minimum feedback edge set.
A \emph{vertex cover} in a graph is a set of vertices that has a nonempty intersection with each edge in the graph.

A \emph{matching} in a graph is a set of pairwise disjoint edges.
Let~$G$ be a graph and let~$M \subseteq E(G)$ be a matching in~$G$.
We denote by~$\match(G)$ a maximum-cardinality matching respectively a maximum-weight matching in~$G$, depending on whether we have edge weights or not.
If there are edge weights~$\omega\colon E \rightarrow \N$, then for a matching~$M$ we denote by~$\omega(M) := \sum_{e\in M}\omega(e)$ the weight of~$M$.
Moreover, we denote with~$\omega(G)$ the weight of a maximum-weight matching~$\match(G)$, i.\,e.~$\omega(G) := \omega( \match(G) )$.
A vertex~$v \in V$ is called \emph{matched} with respect to~$M$ if there is an edge in~$M$ containing~$v$, otherwise~$v$ is called \emph{free} with respect to~$M$.
If the matching~$M$ is clear from the context, then we omit ``with respect to~$M$''.

\paragraph{Kernelization.}
A \emph{parameterized problem} is a set of instances~$(I,k)$ where~$I \in\Sigma^*$ for a finite alphabet $\Sigma$ and~$k\in \mathbb{N}$ is the \emph{parameter}.
We say that two instances~$(I,k)$ and $(I',k')$ of parameterized problems~$P$ and~$P'$ are \emph{equivalent} if~$(I,k)$ is a yes-instance for~$P$ if and only if~$(I',k')$ is a yes-instance for~$P'$. 
A \emph{kernelization} is an algorithm that, given an instance~$(I,k)$ of a parameterized problem~$P$, computes in polynomial time an equivalent instance~$(I',k')$ of~$P$ (the \emph{kernel}) such that $|I'|+k'\leq f(k)$ for some  computable function~$f$. %
We say that~$f$ measures the \emph{size} of the kernel, and if~$f(k)\in k^{O(1)}$, then we say that $P$~admits a polynomial kernel. 
Typically, a kernel is achieved by applying polynomial-time executable data reduction rules.
We call a data reduction rule~$\mathcal{R}$ \emph{correct} if the new instance~$(I',k')$ that results from applying~$\mathcal{R}$ to~$(I,k)$ is equivalent to~$(I,k)$.
An instance is called \emph{reduced} with respect to some data reduction rule if further application of this rule has no effect on the instance.

\paragraph{Structure of this work.} 
In \Cref{sec:unweighted-kernel,sec:weigted-kernel}, we provide the kernelization results for \CMatch and \WMatch which we experimentally evaluate on real-world data sets in \Cref{sec:experiments}.
In \Cref{sec:unweighted-kernel} we discuss the unweighted case by recalling old and presenting new data reduction rules.
In \Cref{sec:weigted-kernel} we show how to extend some of the data reduction rules presented for \CMatch to \WMatch.
In \Cref{sec:experiments}, we describe our experimental results, discuss the effect of our data reduction rules on state-of-the-art solvers, and evaluate the prediction quality of our theoretical kernelization results.
We conclude in \Cref{sec:conclusion} with a glimpse on future research challenges.

\section{Maximum-Cardinality Matching} \label{sec:unweighted-kernel}

For \CMatch we first recall in \Cref{ssec:low-degree-rules} simple data reduction rules for low-degree vertices due to a classic result of \citet{KS81}.
Then we improve the known kernel-size for \CMatch parameterized by the feedback edge number when only these two data reduction rules are exhaustively applied~\cite{MNN20}.

In \Cref{ssec:crowns-LP}, we discuss the crown data reduction rule (designed for \VC \cite{JCF04}) and 
show that it also works for \CMatch.
To this end, we briefly describe a classic LP-based data reduction due to \citet{NT75}.
It was known that this LP-based data reduction also removes all crowns from the input graph \cite{AI16,IOY14}.
We note that this does not immediately imply that we can use the LP-based data reduction in the context of \CMatch
(the correctness is only known for \VC).
We prove that exhaustively applying the crown data reduction rule is equivalent to ``exhaustively'' applying the LP-based data reduction.
This allows us to use the algorithm of \citet{IOY14} that exhaustively applies the LP-based data reduction in order to remove all crowns and nothing else.

Finally, we show in \cref{ssec:relaxed-crowns} a generalization of the crown data reduction rule.
However, we leave it open how to apply this generalized crown data reduction rule efficiently. 
Note that we have implemented all of these data reduction rules (except the generalized crown data reduction rule); see \cref{sec:experiments} for an evaluation.

\subsection{Removing low-degree vertices}\label{ssec:low-degree-rules}
For \CMatch two simple data reduction rules are due to a classic result of \citet{KS81}. 
They deal with vertices of degree at most two.

\begin{rrule}[\cite{KS81}]\label{rule:deg-zero-one-vertices}
	Let~$v \in V$.
	If~$\deg(v) = 0$, then delete~$v$.
	If~$\deg(v) = 1$, then delete~$v$ and its neighbor, and decrease the solution size~$\solSize$ by one.
\end{rrule}

\begin{rrule}[\cite{KS81}]\label{rule:deg-two-vertices}
	Let~$v$ be a vertex of degree two and let~$u,w$ be its neighbors.
	Then remove~$v$, merge~$u$ and~$w$, and decrease the solution size~$\solSize$ by one.
\end{rrule}

If the degree of the considered vertex~$v$ is zero, then the maximum matching size remains unchanged.
Otherwise, we have $|\match(G)| = |\match(G')|+1$, where~$G'$ is the instance resulting from one application of either \cref{rule:deg-zero-one-vertices} or~\ref{rule:deg-two-vertices}:
When applying \cref{rule:deg-zero-one-vertices}, then $v$~is matched to its only neighbor~$u$.
For \cref{rule:deg-two-vertices} the situation is not so clear as~$v$ is matched to~$u$ or to~$w$ depending on how the maximum-cardinality matching in the rest of the graph looks like.
Thus, one can only fix the matching edge with endpoint~$v$ (in the original graph) in a simple postprocessing step.

Each of the above data reduction rules can be exhaustively applied in linear time.
While for \cref{rule:deg-zero-one-vertices} this is easy to see, for \cref{rule:deg-two-vertices} the algorithm needs further ideas~\cite{BK09a}.
However, to exhaustively apply both data reductions rules together only algorithms with superlinear running times are known \cite{BK20}. 

Using the above data reduction rules, one can show a kernel with respect to the parameter \paramEnv{feedback edge number}, that is, the size of a minimum feedback edge set.
We refer to \cref{sec:eval-kernel} for a practical evaluation of a theoretical upper bound on the kernel size.

\begin{theorem}[\cite{MNN20}]\label{thm:fes-lin-kernel}
	\CMatch{} admits a linear-time computable kernel with at most $2k-1$ vertices and at most $3k-2$ edges,
	when parameterized by the \paramEnv{feedback edge number}~$k$. 
\end{theorem}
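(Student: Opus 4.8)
The plan is to apply \cref{rule:deg-zero-one-vertices} and \cref{rule:deg-two-vertices} exhaustively and then bound the size of the resulting graph by a degree-counting argument, relying on the fact that these rules never increase the feedback edge number. Let $G'=(V',E')$ be the graph obtained from $G$ by exhaustively applying both rules; by the linear-time algorithms cited above (for \cref{rule:deg-two-vertices} relying on~\cite{BK09a}) this takes $O(n+m)$ time, and by the preceding discussion each application preserves equivalence and satisfies $|\match(G)|=|\match(G')|+1$, so the resulting instance $(G',\solSize')$ is equivalent to $(G,\solSize)$. Since the instance is reduced with respect to both rules, $G'$ contains no vertex of degree $0$, $1$, or $2$; that is, $G'$ has minimum degree at least~$3$.

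The crucial intermediate claim is that neither rule increases the feedback edge number. Writing the feedback edge number of an $n$-vertex, $m$-edge graph with $c$ connected components as $m-n+c$, I would inspect each operation. Deleting an isolated vertex, a degree-one vertex together with its neighbour, or any vertex with its incident edges only removes edges and cannot turn a forest into a non-forest, so the feedback edge number does not grow. The only delicate operation is the merge in \cref{rule:deg-two-vertices}: removing the degree-two vertex~$v$ deletes its two incident edges, and identifying its neighbours~$u,w$ decreases the vertex count by two while leaving the component count unchanged (since $u$ and $w$ already lie in a common component through~$v$). In the generic case this changes $m-n+c$ by $-2-(-2)+0=0$; and whenever $u$ and $w$ are adjacent or share a neighbour, the identification collapses a self-loop or parallel edges in the simple graph~$G'$, removing strictly more than two edges and hence strictly decreasing the feedback edge number. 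In either case the feedback edge number of~$G'$ is at most~$k$.

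It then remains to run the counting argument on~$G'$. Let $n'=|V'|$ and $m'=|E'|$, and let $c'$ be the number of connected components of~$G'$, so that its feedback edge number is $k'=m'-n'+c'\le k$. Minimum degree at least three gives $2m'=\sum_{v\in V'}\deg(v)\ge 3n'$, hence
\[
 k' \;=\; m'-n'+c' \;\ge\; \tfrac{3}{2}n'-n'+c' \;=\; \tfrac{1}{2}n'+c' \;\ge\; \tfrac{1}{2}n',
\]
which yields $n'\le 2k'\le 2k$ (note $n'=0$ when $k'=0$, since a nonempty forest has a leaf). For the edge bound, solving $k'=m'-n'+c'$ for~$m'$ and using $c'\ge 0$ together with $n'\le 2k'$ gives $m'=k'+n'-c'\le k'+n'\le 3k'\le 3k$. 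This establishes the claimed kernel with at most~$2k$ vertices and at most~$3k$ edges.

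The main obstacle I expect is the feedback-edge-number bookkeeping for \cref{rule:deg-two-vertices}: one must argue carefully that merging~$u$ and~$w$ does not create a cycle unaccounted for, in particular handling the cases where $u$ and $w$ are adjacent or have a common neighbour so that edges are collapsed in the simple graph. Once this monotonicity is in place, the degree-three counting is routine.
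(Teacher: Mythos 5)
Your counting argument is correct as far as it goes: the observation that neither rule increases the feedback edge number (including the careful case analysis for the merge in \cref{rule:deg-two-vertices}), and the derivation of $n'\le 2k'$ and $m'\le 3k'$ from minimum degree three via $k'=m'-n'+c'$, are both sound. The gap is in the step you rely on to get minimum degree three in the first place: you assume that \cref{rule:deg-zero-one-vertices,rule:deg-two-vertices} can be applied exhaustively \emph{together}, i.e., interleaved until neither rule applies, in $O(n+m)$ time. The linear-time results cited in the paper only cover exhaustive application of each rule \emph{separately}: \cref{rule:deg-zero-one-vertices} alone is easy, and the algorithm of \cite{BK09a} handles \cref{rule:deg-two-vertices} alone. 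Interleaving is the real difficulty: an application of \cref{rule:deg-two-vertices} can create fresh degree-zero and degree-one vertices (when the merge collapses a self-loop or parallel edges), an application of \cref{rule:deg-zero-one-vertices} can create fresh degree-two vertices, and so the two rules can cascade through many alternating phases; no linear-time bound for this cascade is known. The paper is explicit about this in its implementation section, stating that applying the rules exhaustively ``gives in theory a super-linear running time.'' Since linear-time computability is the whole point of \cref{thm:fes-lin-kernel}, this is a genuine gap rather than a presentational one.

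This gap is precisely why the paper's proof is structured differently. It applies \cref{rule:deg-zero-one-vertices} exhaustively and then \cref{rule:deg-two-vertices} exhaustively, once each (each phase is linear-time), and explicitly accepts that the output $G^{(2)}$ may still contain degree-zero and degree-one vertices, so your degree-sum argument is unavailable there. Instead, the size is bounded structurally: after the first phase, $G^{(1)}$ has minimum degree two and still has a feedback edge set of size at most $k$; cutting those at most $k$ edges produces a forest whose leaves are exactly the at most $2k$ fresh cut-vertices, so $G^{(1)}$ has at most $2k-1$ vertices of degree at least three; the second phase then eliminates all degree-two vertices, leaving at most $2k-1$ vertices and consequently at most $3k-2$ edges. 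To rescue your proof you would need either this kind of analysis of the two-phase output, or a genuinely new algorithmic contribution, namely a linear-time algorithm for the interleaved exhaustive application of both rules.
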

\noindent
\citet{MNN20} originally proved a kernel with at most $12k$ vertices and $13k$ edges.
However, we can tighten this upper bound in the following way.
\begin{proof}[Proof of \cref{thm:fes-lin-kernel}]
	Our kernelization procedure consists of the following two steps:
	\begin{enumerate}
		\item Apply \cref{rule:deg-zero-one-vertices} exhaustively in linear time. \label{step:deg-one}
		\item Apply \cref{rule:deg-two-vertices} exhaustively in linear time \cite{BK09a}. \label{step:deg-two}
	\end{enumerate}
	Let~$G^{(1)} = (V^{(1)}, E^{(1)})$ and~$G^{(2)} = (V^{(2)}, E^{(2)})$ be the graphs obtained after Steps \eqref{step:deg-one} and \eqref{step:deg-two}, respectively.
	Thus, $G^{(2)}$ is the graph returned by the kernelization algorithm. 
	Note that~$G^{(2)}$ might contain isolated vertices and degree-one vertices.
	
	To evaluate the size of~$G^{(2)}$, we first analyze the structure of~$G^{(1)}$.
	Since~$G^{(1)}$ is an induced subgraph of the input graph~$G$, it follows that~$G^{(1)}$ has a feedback edge set $F^{(1)} \subseteq E^{(1)}$ of size at most $k$.
	We show that~$G^{(1)}$ contains at most~$2k-1$ vertices of degree at least three.
	To this end, consider the graph~$G^{(1)}_F=(V^{(1)}_F,E^{(1)}_F)$ obtained from~$G^{(1)}$ as follows.
	Remove the edges in $F^{(1)}$.
	For each edge $e = uw \in F^{(1)}$, we remove $uw$ and we introduce two degree-one vertices $uv_u^{e}$ and $wv_w^{e}$ adjacent to $u$ and $v$, respectively.
	Formally, we have
	\[V^{(1)}_F := V^{(1)} \cup \left\{ v_u^{e},v_w^{e} \mid e=uw \in F^{(1)} \right\} \;\;\text{and}\;\; E^{(1)}_F := (E^{(1)} \setminus F^{(1)}) \cup \left\{ uv_u^{e}, wv_w^{e} \mid  e=uw \in F^{(1)} \right\}.\]
	Observe that~$G^{(1)}_F$ is a forest where~$V^{(1)}_F \setminus V^{(1)}$ are the leaves and~$V^{(1)} \subseteq V^{(1)}_F$ are the internal vertices.
	Since~$|F^{(1)}| \le k$, we have at most~$2k$ leaves.
	Since~$G^{(1)}_F$ is a forest, it follows that~$G^{(1)}_F$ has at most~$2k-1$ vertices degree of at least three.
	
	We partition the vertex set of~$G^{(1)}$ into $V^{(1)} = V^{(1)}_2 \cup V^{(1)}_{\geq 3}$, where $V^{(1)}_2$ are the vertices of degree two and $V^{(1)}_{\geq 3}$ are the vertices of degree at least three.
	Since \cref{rule:deg-zero-one-vertices} is exhaustively applied, the minimum degree in~$G^{(1)}$ is at least two.
	Moreover, note that by construction of~$G^{(1)}_F$, it follows that the set of vertices with degree at least three is identical in~$G^{(1)}_F$ and~$G^{(1)}$.
	Thus, $|V^{(1)}_{\geq 3}| \le 2k-1$.
	Note that exhaustively applying \cref{rule:deg-two-vertices} in Step~\eqref{step:deg-two} on $G^{(1)}$ will remove all vertices in~$V^{(1)}_2$ (and possibly some of $V^{(1)}_{\geq 3}$).
	Thus, the resulting graph~$G^{(2)}$ of our kernelization procedure has at most $2k-1$ vertices and consequently at most $3k-2$ edges.
\end{proof}
Applying the~$O(m\sqrt{n})$-time algorithm for \CMatch~\cite{MV80} altogether yields an~$O(n + m + k^{1.5})$-time algorithm, where~$k$ is the feedback edge number.

\subsection{Crown and LP-based data reduction}\label{ssec:crowns-LP}
Crowns are a classic data reduction tool for \VC (given an undirected graph find a smallest set of vertices that covers all edges) and can be seen as a generalization of \cref{rule:deg-zero-one-vertices} \cite{JCF04}.
A crown satisfies the following properties (see \cref{fig:crown} for a visualization):

\begin{definition}\label{def:crown}
	A \emph{crown} in a graph~$G = (V,E)$ is a pair $(H,I)$ such that
	\begin{enumerate}
		\item $I \subseteq V$ is an independent set in~$G$ (no two vertices in~$I$ are adjacent in~$G$), \label{crown:IS}
		\item $H = N(I) := \bigcup_{v \in I} N(v)$, and \label{crown:Neighbors}
		\item there is a matching~$M_{H,I}$ between~$H$ and~$I$ that matches all vertices in~$H$. \label{crown:Matching}
	\end{enumerate}
\end{definition}

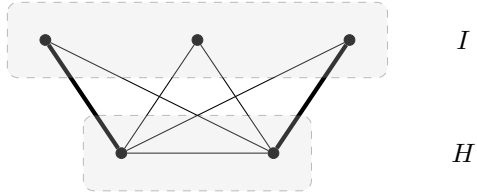
\begin{figure}
	\centering
	\begin{tikzpicture}[auto]

		\node[knoten] at (0,1.5) (v) {};
		\node[knoten] at (1,0) (u) {};
		\node[knoten] at (2,1.5) (a) {};
		\node[knoten] at (4,1.5) (b) {};
		\node[knoten] at (3,0) (c) {};

		\draw[majarr,matched] (v) edge (u);
		\draw[majarr] (v) edge (c);
		\draw[majarr] (u) edge (a);
		\draw[majarr] (c) edge (a);
		\draw[majarr] (u) edge (b);
		\draw[majarr] (u) edge (c);
		\draw[majarr,matched] (b) edge (c);
		
		\node at (5.5,1.5) (v) {$I$};
		\node at (5.5,0) (v) {$H$};
		
		\path[fill=black!15,draw=black,dashed,opacity=.25,rounded corners] (-0.5,2) rectangle (4.5,1);
		\path[fill=black!15,draw=black,dashed,opacity=.25,rounded corners] (0.5,0.5) rectangle (3.5,-0.5);
	\end{tikzpicture}
	\caption{An example of a crown where the vertices in~$H$ but not the ones in~$I$ might have further neighbors in the graph.}
	\label{fig:crown}
\end{figure}

It is not hard to see that, given a crown~$(H,I)$, there is a minimum vertex cover containing all vertices in~$H$: 
The matching~$M_{H,I}$ implies that a minimum size vertex cover in~$G[H \cup I]$ has size~$|M_{H,I}| = |H|$. 
Since~$I$ is an independent set and~$H = N(I)$, taking all vertices in~$H$ into a vertex cover is at least as good as taking some vertices of~$I$.
Thus, we end up with the following data reduction rule.

\begin{rrule}[\cite{JCF04}]\label{rule:crown}
	Let~$(H,I)$ be a crown.
	Then remove all vertices in~$H \cup I$ and decrease the solution size~$s$ by~$|H|$.
\end{rrule}

Luckily, \cref{rule:crown} not only works for \VC but also for \CMatch: 
Simply adding to any maximum-cardinality matching in the reduced graph the matching~$M_{H,I}$ results in a maximum-cardinality matching for~$G$, as proven in the next lemma.

\begin{lemma}
	\cref{rule:crown} is correct, that is, $|\match(G)| = |\match(G')| + |H|$.
\end{lemma}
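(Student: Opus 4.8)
The plan is to show the two inequalities $|\match(G)| \ge |\match(G')| + |H|$ and $|\match(G)| \le |\match(G')| + |H|$ separately, where $G' = G - (H \cup I)$ is the reduced graph. For the first, easier direction, I would take any maximum-cardinality matching $M'$ of $G'$ and form $M' \cup M_{H,I}$, where $M_{H,I}$ is the crown matching guaranteed by \cref{crown:Matching} of \cref{def:crown}. I must check this is a valid matching in $G$: the edges of $M'$ live entirely in $V \setminus (H \cup I)$, while the edges of $M_{H,I}$ have one endpoint in $H$ and one in $I$, so the two edge sets are vertex-disjoint. Since $M_{H,I}$ matches all of $H$, it has exactly $|H|$ edges, giving $|M' \cup M_{H,I}| = |\match(G')| + |H|$, and hence $|\match(G)| \ge |\match(G')| + |H|$.

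The interesting direction is $|\match(G)| \le |\match(G')| + |H|$. Here I would start from a maximum-cardinality matching $M$ of $G$ and argue that I can extract from it a matching of $G'$ of size at least $|M| - |H|$. The key structural observation is that every edge of $M$ that is \emph{not} entirely inside $G'$ must touch $H \cup I$. Because $I$ is independent (\cref{crown:IS}) and $H = N(I)$ (\cref{crown:Neighbors}), any edge incident to a vertex of $I$ has its other endpoint in $H$; thus every matching edge meeting $I \cup H$ is incident to at least one vertex of $H$. Since $M$ is a matching, the number of such edges is at most $|H|$ (each vertex of $H$ lies in at most one matching edge). Deleting these at most $|H|$ edges from $M$ leaves a matching whose remaining edges avoid all of $H \cup I$ and therefore lie in $G'$, so $|\match(G')| \ge |M| - |H| = |\match(G)| - |H|$.

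Combining the two inequalities yields $|\match(G)| = |\match(G')| + |H|$, which is exactly the correctness statement for \cref{rule:crown}, since the rule decreases the solution size by $|H|$.

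I expect the main subtlety to be in the upper-bound direction, specifically in justifying cleanly that at most $|H|$ edges of $M$ need to be removed. The point to be careful about is that a matching edge could meet $H \cup I$ in one or two endpoints, but in all cases it is incident to $H$ (an edge with both endpoints in $I$ is impossible by independence of $I$, and an edge with both endpoints in $H$ still touches $H$), so charging each such edge to a distinct $H$-vertex it covers gives the bound $|H|$. Everything else is routine disjointness bookkeeping.
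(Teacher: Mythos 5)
Your proof is correct and follows essentially the same approach as the paper: the lower bound via adjoining the crown matching $M_{H,I}$ to a maximum matching of $G'$, and the upper bound by noting that every matching edge touching $H \cup I$ must be incident to a vertex of $H$ (using independence of $I$ and $H = N(I)$), so at most $|H|$ edges are lost. The paper merely packages the second direction through the intermediate graph $G - H$ (where $I$ becomes isolated), which is the same counting argument you give directly.
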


\begin{proof}
	Let~$(H,I)$ be a crown in the input graph~$G$ and~$G' := G - (H \cup I)$.
	Observe that $\match(G') \cup M_{H,I}$ is a matching of cardinality~$|\match(G')| + |H|$ in~$G$, thus~$|\match(G)| \ge |\match(G')| + |H|$.
	Conversely, observe that~$|\match(G)| \le |\match(G - H)| + |H|$ as each vertex in~$H$ can be matched at most once.
	However, we have~$\match(G - H) = \match(G - (H \cup I)) = \match(G')$ as each vertex in~$I$ has degree zero in~$G-H$ and can thus be removed (see \cref{rule:deg-zero-one-vertices}).
	Thus, $|\match(G)| = |\match(G')| + |H|$.
\end{proof}

Now that we established that \cref{rule:crown} can also be applied for \CMatch, it remains to do so as fast as possible.
However, to find a crown~$(H,I)$ we need to also find a matching~$M_{H,I}$.
Moreover, the size of~$M_{H,I}$ depends on the size of the crown;
a crown can be quite large 
(consider for example a complete bipartite graph~$K_{n,n}$: there is only one crown which is the whole graph).
Thus, applying \cref{rule:crown} even once in linear time seems hard to do. 
Indeed, the best known algorithm to apply \cref{rule:crown} (even just once) runs in~$O(\sqrt{n} m)$ time~\cite{IOY14}.

Since we can compute in~$O(\sqrt{n} m)$ time also a maximum-cardinality matching for the input graph, 
\cref{rule:crown} seems to be not useful (why decrease the size of the input when one can solve the problem in the same time?).
However, the algorithm of \citet{IOY14} to exhaustively apply \cref{rule:crown} has only a single step that requires superlinear time: 
the computation of one maximum-cardinality matching in a \emph{bipartite} graph with~$O(n+m)$ vertices and edges.
To find such a matching, we used a straightforward implementation of the classic algorithm of \citet{HK73}.
It turned out in our experiments that even this implementation is faster than computing a maximum-cardinality matching in non-bipartite graphs with any of the implementations for \CMatch that we tested.

\paragraph{Exhaustively removing crowns.}
Subsequently, we briefly sketch the algorithm of \citet{IOY14} and prove that it exhaustively applies \cref{rule:crown} in an input graph~$G=(V,E)$.
Note that their algorithm efficiently applies a classic linear programming (LP-) based data reduction rule for \VC~\cite{NT75}. 
We first observe that a straight-forward adaptation of this LP-based data reduction rule to \CMatch (working with the LP-relaxation for the \CMatch-ILP) does not seems to work (\cref{fig:lp-matching-counterexample} provides a counterexample).
However, \citet{AI16} already observed that the LP-based data reduction rule of \citet{IOY14} (working with the LP-relaxation for the \VC-ILP) also removes all crowns from the input graph. 
It was hence already clear that the LP-based data reduction rule is at least as powerful as exhaustively applying the crown data reduction rule (in the context of \VC).
Below, we show that exhaustively applying the crown data reduction rule is exactly as powerful as the LP-based data reduction rule.
To be precise, we prove that ``exhaustively'' applying the LP-based data reduction rule (as \citet{IOY14} did) and exhaustively applying the crown data reduction rule (\cref{rule:crown}) results in exactly the same graph.
Thus, we can use the algorithm of \citet{IOY14} working with the LP-relaxation for the \VC-ILP to apply \cref{rule:crown} in the context of \CMatch.

We start with explaining the LP-based kernelization for \VC (refer to \citet[Chapter 2]{CFK+15} for a more detailed description with all proofs).
The standard \emph{integer} linear program (ILP) for \VC is as follows.
\begin{align*}
	\text{Minimize} 	&& \sum_{v \in V} x_v \\
	\text{Subject to} 	&& x_v + x_u & \ge 1 	& \forall uv \in E \\
					&& x_v & \in \{0,1\} 	& \forall v \in V
\end{align*}
As usual, the LP relaxation (subsequently called VC-LP) is obtained by replacing the constraints~$x_v \in \{0,1\}$ by~$0 \le x_v \le 1$.
This LP and its dual LP always have half-integral solutions, that is, there is always an optimal solution such that each variable is assigned a value in~$\{0,\nicefrac{1}{2},1\}$ (this is true for a more general class of LPs called BIP2~\cite{Hoc02}).
Given a half-integral solution for the VC-LP, define the following three sets of vertices corresponding to variables set to~$0$, $\nicefrac{1}{2}$, and~$1$, respectively:
\begin{itemize}
	\item $V_1 := \{v \mid x_v = 1\}$,
	\item $V_{\nicefrac{1}{2}} := \{v \mid x_v = \nicefrac{1}{2}\}$, and
	\item $V_{0} := \{v \mid x_v = 0\}$.
\end{itemize}
The following classic result of \citet{NT75} forms the basis for the LP-based kernelization for \VC.
\begin{theorem}[\cite{NT75}]\label{thm:NT}
	There is a minimum vertex cover~$S$ for~$G$ such that~$V_1 \subseteq S \subseteq V_1 \cup V_{\nicefrac{1}{2}}$.
\end{theorem}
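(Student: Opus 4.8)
The plan is to establish the Nemhauser–Trotter theorem (Theorem~\ref{thm:NT}) by an exchange argument, starting from an arbitrary minimum vertex cover and modifying it so that it sandwiches between $V_1$ and $V_1 \cup V_{\nicefrac{1}{2}}$ while preserving both its size and the covering property. First I would fix a half-integral optimal solution $x$ to the VC-LP inducing the partition $V_0 \cup V_{\nicefrac{1}{2}} \cup V_1$, and let $S^*$ be any minimum vertex cover of $G$. The goal is to transform $S^*$ into a cover $S$ with $V_1 \subseteq S \subseteq V_1 \cup V_{\nicefrac{1}{2}}$, i.e.\ $S$ must contain all of $V_1$ and avoid all of $V_0$.

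I would define the candidate cover $S := (S^* \setminus V_0) \cup V_1$ and argue two things: that $S$ is still a vertex cover, and that $|S| \le |S^*|$ (hence $|S| = |S^*|$ by minimality). For the \emph{covering} claim, the key structural fact is that the LP constraint $x_u + x_v \ge 1$ forces $V_0$ to be ``protected'' by $V_1$: any edge $uv$ with $u \in V_0$ (so $x_u = 0$) must have $x_v = 1$, i.e.\ $v \in V_1$. Consequently every edge incident to a vertex we remove from $S^*$ (namely a vertex of $V_0$) has its other endpoint in $V_1 \subseteq S$, so removing $V_0$ and adding $V_1$ cannot uncover any edge; edges entirely inside $V_{\nicefrac{1}{2}} \cup V_1$ are handled because at least one endpoint had LP-value $\ge \nicefrac{1}{2}$, and one can check each such edge keeps a covering endpoint in $S$.

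\textbf{The main obstacle} is the \emph{size} inequality $|S| \le |S^*|$, which does not follow from the covering argument alone and is where the optimality of the LP solution must be used. The plan here is a local-exchange / swapping argument: consider $A := V_0 \cap S^*$ (vertices we delete) and $B := V_1 \setminus S^*$ (vertices we add), so that $|S| = |S^*| - |A| + |B|$ and it suffices to show $|B| \le |A|$. To prove this I would perturb the LP solution by shifting mass $\varepsilon$ toward the indicator of $S^*$ on the symmetric-difference vertices: decrease $x_v$ by $\varepsilon$ on each $v \in B$ (these sit at value $1$ but lie outside the cover, so nudging them down is ``cheap'') and increase $x_v$ by $\varepsilon$ on each $v \in A$ (these sit at value $0$ but lie inside the cover). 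The resulting vector stays feasible for small $\varepsilon$ precisely because $S^*$ is a genuine vertex cover — every edge still has an endpoint whose value is not being decreased below the feasible threshold — and its objective changes by $\varepsilon(|A| - |B|)$. Since $x$ was optimal, this change must be nonnegative, yielding $|B| \le |A|$ as required.

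Finally I would combine the two bounds: $S$ is a vertex cover of size at most that of the minimum cover $S^*$, hence $S$ is itself a minimum vertex cover, and by construction $V_1 \subseteq S \subseteq V_1 \cup V_{\nicefrac{1}{2}}$, completing the proof. I expect the feasibility-checking of the perturbed LP solution to require a careful case distinction on the LP-values of the two endpoints of each edge, but each case is routine once the protection property $x_u = 0 \Rightarrow x_v = 1$ is in hand.
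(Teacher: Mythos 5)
Your proof is correct. Note that the paper itself does not prove \cref{thm:NT} at all---it imports the result from \cite{NT75} and points to \citet[Chapter 2]{CFK+15} for proofs---and your argument is essentially the standard Nemhauser--Trotter exchange proof found there: set $S := (S^* \setminus V_0) \cup V_1$, use the independence of $V_0$ and the implication $x_u = 0 \Rightarrow x_v = 1$ for covering, and use an $\varepsilon$-perturbation of the optimal LP solution (shifting mass from $V_1 \setminus S^*$ to $V_0 \cap S^*$) to get $|V_1 \setminus S^*| \le |V_0 \cap S^*|$ and hence $|S| \le |S^*|$. The only place needing care is the tight case in the feasibility check, an edge $uv$ with $u \in V_1 \setminus S^*$ and $v \in V_0$: here $v$ must lie in $S^*$ (else $uv$ is uncovered), so $v \in V_0 \cap S^*$ gets increased and the constraint survives as $(1-\varepsilon) + \varepsilon \ge 1$; your perturbation is designed exactly so that this compensation happens, so the argument goes through.
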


\begin{remark}
The dual of the VC-LP is the following relaxation of the ILP for \CMatch:
\begin{align*}
	\text{Maximize} 	&& \sum_{uv \in E} y_{uv} \\
	\text{Subject to} 	&& \sum_{u \in N(v)} y_{uv} & \le 1 	& \forall v \in V \\
					&& 0 \le y_{uv} & \le 1 				& \forall uv \in E
\end{align*}
Notably, a statement similar to \cref{thm:NT} does not hold for \CMatch, see \cref{fig:lp-matching-counterexample} for two counterexamples.
\newcommand{\drawTikzClique}[4]{%
	\foreach \i in {1,...,#2} {
		\node[knoten] (#1-\i) at ({#3 * cos(#4 + 360 * \i / #2 - 360 / #2))},{#3 * sin(#4 + 360 * \i / #2 - 360 / #2))}) {};
	}
	\foreach \i in {2,...,#2} {
		\pgfmathtruncatemacro{\runs}{\i - 1};
		\foreach \j in {1,...,\runs} {
			\draw[majarr] (#1-\j) edge (#1-\i);
		}
	}
}%
\newcommand{\drawTikzWeightedCycle}[5]{%
	\foreach \i in {1,...,#2} {
		\node[knoten] (#1-\i) at ({#3 * cos(#4 + 360 * \i / #2 - 360 / #2))},{#3 * sin(#4 + 360 * \i / #2 - 360 / #2))}) {};
	}
	\foreach \i in {2,...,#2} {
		\pgfmathtruncatemacro{\ii}{\i - 1};
		\draw[majarr] (#1-\i) edge node{#5} (#1-\ii);
	}
	\draw[majarr] (#1-1) edge node{#5} (#1-#2);
}%
\begin{figure}
	\centering
	\begin{tikzpicture}[auto,scale = 0.75]
		\drawTikzWeightedCycle{L}{3}{1}{180}{0.5}
		\begin{scope}[xshift = 6cm]
			\drawTikzWeightedCycle{R}{3}{1}{0}{0.5}
		\end{scope}
		\node[knoten] (L-0) at (2,0) {};
		\node[knoten] (R-0) at (4,0) {};
		\draw[majarr] (L-0) edge node{1} (R-0);
		\draw[matched] (L-2) edge (L-1);
		\draw[matched] (L-3) edge node{0} (L-0);
		\draw[matched] (R-2) edge (R-1);
		\draw[matched] (R-3) edge node{0} (R-0);
		\draw[majarr] (L-0) edge node{0} (L-2);
		\draw[majarr] (R-0) edge node{0} (R-2);
		\begin{scope}[xshift=10cm]
			\drawTikzWeightedCycle{L}{5}{1}{0}{.5}
			\begin{scope}[xshift = 4cm]
				\drawTikzWeightedCycle{R}{3}{1}{180}{.5}
			\end{scope}
			\draw[matched] (L-1) edge node{0} (R-1);
			\draw[matched] (L-2) edge (L-3);
			\draw[matched] (L-4) edge (L-5);
			\draw[matched] (R-2) edge (R-3);
		\end{scope}
	\end{tikzpicture}
	\caption{
		Examples in which a solution for the matching-LP does not help in finding a maximum-cardinality matching. 
		In both graphs, bold edges indicate a perfect matching and the numbers next to the edges give a valid optimal solution to the LP for \CMatch.
		In the left graph, the single edge (in the middle), whose variable is set to one, is not contained in any perfect matching.
		In the right graph, there is only one perfect matching that contains one edge (in the middle), whose variable is set to zero.
	}
	\label{fig:lp-matching-counterexample}
\end{figure}
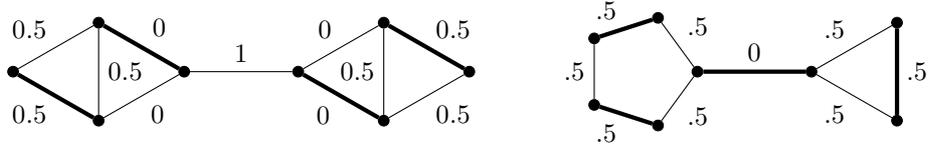%
\end{remark}

The following data reduction rule is an immediate consequence of \cref{thm:NT}:

\begin{rrule}\label{rule:LP}
	Compute a solution for the VC-LP, remove all vertices in~$V_0 \cup V_1$, and decrease the solution size by~$|V_1|$.
\end{rrule}

\citet{AFLS07,CC08} independently showed that \Cref{rule:LP} removes a crown (see \cref{def:crown} for its definition).
We include the proof for the sake of completeness.

\begin{lemma}[\cite{AFLS07,CC08}] \label{lem:lpcrown}
	For any solution $X$ for the VC-LP, $(V_1, V_0)$ is a crown.
\end{lemma}
\begin{proof}
	It is easy to see that~$V_0$ is an independent set since the constraint~$x_v + x_u \ge 1$ for all~$uv \in E$ forbids to set two variables of adjacent vertices to zero.
	Thus, we have that~$N(V_0) \subseteq V_1$.
	It follows from the optimality of~$X$ that~$N(V_0) = V_1$ (any variable of a vertex~$v \in V_1 \setminus N(V_0)$ could be set to~$\nicefrac{1}{2}$ instead).
	To show that there is a matching between~$V_0$ and~$V_1$ that matches all vertices in~$V_1$, 
	we use the optimality of~$X$ together with Hall's marriage 
	theorem\footnote{
		Hall's marriage theorem states that, in any bipartite graph $G=(X \uplus Y,E)$, 
		there is an $X$-saturating matching 
		if and only if 
		for every subset $W \subseteq X$ we have $|W|\leq |N(W)|$.
		In other words: every subset $W \subseteq X$ has sufficiently many adjacent vertices in $Y$.
	}:
	Assuming that there is no such matching, it follows from Hall's marriage theorem that there is a subset~$W \subseteq V_1$ such that~$|N(W) \cap V_0| < |W|$.
	However, this implies that setting all variables corresponding to vertices in~$W \cup (N(W) \cap V_0)$ to $\nicefrac{1}{2}$ results in a better solution to the VC-LP, a contradiction to the optimality of~$X$.
\end{proof}

Subsequently, we first show how to efficiently compute a solution for the VC-LP.
Then, we show that ``exhaustively'' applying \cref{rule:LP} 
and ``exhaustively'' applying \cref{rule:crown} leads to exactly the same kernel.

A half-integral solution for the VC-LP can be obtained by computing a minimum vertex cover in a bipartite graph~$\overline{G}$:
The vertices~$\overline{V}$ of~$\overline{G}$ consist of two copies of~$V$, called~$V_L$ and~$V_R$; for~$i \in \{L,R\}$ we set $V_i := \{v_i \mid v \in V\}$ and~$\overline{V} := V_L \cup V_R$.
The edges~$\overline{E}$ of~$\overline{G}$ are as follows: $\overline{E} := \{v_Lu_R,v_Ru_L \mid uv \in E\}$.
Thus, $\overline{G}$ has~$2n$ vertices and~$2m$ edges.
Then, a solution for the VC-LP can be constructed from a vertex cover~$\overline{S}$ for~$\overline{G}$ as follows~\cite{NT75}:
$$ x_v = \begin{cases}
			1, & \text{if } v_L \in \overline{S} \wedge v_r \in \overline{S}, \\
			0, & \text{if } v_L \notin \overline{S} \wedge v_r \notin \overline{S}, \\
			\nicefrac{1}{2}, & \text{else.}
         \end{cases}$$ 
	 Hence, using Kőnig's theorem\footnote{Kőnig's theorem states that, in any bipartite graph, the number of edges in a maximum matching is equal to the number of vertices in a minimum vertex cover.
	 Moreover, such a minimum vertex cover can be constructed in linear time given a maximum matching.},
	 we can compute a solution for the VC-LP in~$O(\sqrt{n}m)$ time with the algorithm of \citet{HK73}.

Of course, to have \cref{rule:LP} as strong as possible, we would like to have a solution for the VC-LP with the maximum number of variables set to~$0$ and~$1$.
Note that the above solution for the VC-LP does not necessarily fulfill this condition.
However, \citet{IOY14} provided an algorithm that, given any solution for the VC-LP, computes an optimal solution for the VC-LP that minimizes, over all half-integral optimal solutions, the number of variables set to~$\nicefrac{1}{2}$ in \emph{linear time}.
Hence, using the algorithm of \citet{IOY14}, we can exhaustively apply \cref{rule:LP} in~$O(\sqrt{n} m)$ time.
By \emph{exhaustively applying \cref{rule:LP}}, we mean that we apply \cref{rule:LP} until the solution setting all variables to~$\nicefrac{1}{2}$ is the unique optimal solution for the VC-LP.

It remains to show that exhaustively applying \cref{rule:LP} results in the same instance as exhaustively applying \cref{rule:crown}.
This extends the work of \citet{AFLS07} who showed that applying \cref{rule:LP} exhaustively removes all crowns.

\begin{lemma}
	\label{lem:LP-equivalent-to-crown}
	Let~$G$ be a graph, let~$G_{\text{LP}}$ be the graph obtained from exhaustively applying \cref{rule:LP} on~$G$, and let~$G_{\text{crown}}$ be the graph obtained from exhaustively applying \cref{rule:crown}.
	Then~$G_{\text{LP}} = G_{\text{crown}}$.
\end{lemma}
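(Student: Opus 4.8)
The plan is to reduce everything to understanding the \emph{half-integral} part $V_{\nicefrac{1}{2}}$ of the (number-of-halves-minimizing) VC-LP solution. Since one application of \cref{rule:LP} deletes exactly $V_0 \cup V_1$, and re-solving on the remainder deletes nothing more, we have $G_{\text{LP}} = G[V_{\nicefrac{1}{2}}]$; so it suffices to prove that exhaustively applying \cref{rule:crown} also deletes exactly $V_0 \cup V_1$. I would establish this through three ingredients: (A) the set $V_0 \cup V_1$ removed by \cref{rule:LP} is itself a crown; (B) conversely every crown is contained in $V_0 \cup V_1$; and (C) deleting a crown leaves $V_{\nicefrac{1}{2}}$ unchanged. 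Granting these, a short induction over the crown-removal steps finishes the proof.

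For ingredient (A) I would read off the crown structure of \cref{def:crown} directly. Feasibility of the LP (the constraint $x_u + x_v \ge 1$) immediately gives that $V_0$ is independent and that $N(V_0) \subseteq V_1$. Optimality gives the reverse inclusion: if some $v \in V_1$ had all neighbours of value at least $\nicefrac{1}{2}$, then lowering $x_v$ to $\nicefrac{1}{2}$ would stay feasible and strictly decrease the objective, so every $v \in V_1$ has a neighbour in $V_0$, i.e.\ $V_1 = N(V_0)$. Finally I would produce a matching saturating $V_1$ into $V_0$ by a deficiency argument: if Hall's condition failed for some $S \subseteq V_1$, then pushing $S$ down to $\nicefrac{1}{2}$ and $N(S) \cap V_0$ up to $\nicefrac{1}{2}$ would again be feasible and lower the objective, contradicting optimality. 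Hence $(V_1, V_0)$ is a crown, and in particular a crown-free graph has $V_0 = V_1 = \emptyset$ and admits no nontrivial application of \cref{rule:LP}.

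For ingredients (B) and (C) I would run an exchange argument against the minimality of $V_{\nicefrac{1}{2}}$. Given any crown $(H, I)$, modify the optimal solution by setting $x_u := 0$ for $u \in I$ and $x_v := 1$ for $v \in H$. This stays feasible because $I$ is independent and $N(I) = H$, so the only edges touching $I$ go to $H$, which is now fully covered. The saturating matching $M_{H,I}$ shows the old solution already had LP-value at least $|H|$ on $H \cup I$, so the objective does not increase and the modified vector is again optimal; since it is integral on $H \cup I$, its set of halves is $V_{\nicefrac{1}{2}} \setminus (H \cup I)$, and minimality of $V_{\nicefrac{1}{2}}$ forces $(H \cup I) \cap V_{\nicefrac{1}{2}} = \emptyset$, which is ingredient (B). Restricting the modified solution to $G - (H \cup I)$, and transferring optimality and minimality in both directions by padding with $H \to 1$ and $I \to 0$, yields that $V_{\nicefrac{1}{2}}$ is the same for $G$ and for $G - (H \cup I)$, which is ingredient (C).

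Finally I would assemble these. By (B) and (C), each step of an exhaustive crown reduction removes only vertices of $V_0 \cup V_1$ and keeps $V_{\nicefrac{1}{2}}$ fixed; since the process terminates at a crown-free graph, ingredient (A) shows that this graph has empty $V_0 \cup V_1$, hence equals $G[V_{\nicefrac{1}{2}}] = G_{\text{LP}}$, giving $G_{\text{crown}} = G_{\text{LP}}$. (The crown-freeness of $G_{\text{LP}}$ may alternatively be imported from the earlier result that exhaustive application of \cref{rule:LP} removes all crowns, shortening one direction.) I expect the main obstacle to be the careful handling of the minimality of $V_{\nicefrac{1}{2}}$ together with the Hall/deficiency exchange in (A): these are the places where a naive ``local'' modification of the LP solution must be verified to remain both feasible and optimal, and where the uniqueness claimed for the number-of-halves-minimizing solution is actually used.
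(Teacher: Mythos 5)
Your proposal is correct, and its assembly of the argument differs from the paper's in an instructive way. The shared core is your ingredient (A): the paper proves exactly this claim (applied to the graph left after exhaustive crown removal) with the same two exchanges---independence of $V_0$ and $N(V_0)=V_1$ from feasibility/optimality, and the saturating matching via Hall's theorem by pushing a deficient set $W \cup (N(W)\cap V_0)$ down to $\nicefrac{1}{2}$. Where you diverge is in how the two reduction processes get tied together. The paper proves two irreducibility statements---$G_{\text{LP}}$ contains no crown (by turning a hypothetical crown into a feasible solution at least as cheap as the all-$\nicefrac{1}{2}$ solution, a construction also attributed to \citet{AFLS07}), and $G_{\text{crown}}$ admits only the all-$\nicefrac{1}{2}$ optimum---and then concludes by appealing to the uniqueness of the half-minimizing solution of \citet{IOY14}; the step connecting the crown-removal \emph{sequence} on $G$ to an exhaustive application of \cref{rule:LP} on $G$ (which is what that uniqueness appeal implicitly requires) is left to the reader. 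Your ingredients (B) and (C) supply precisely that bookkeeping: (B) shows every crown lies inside the integral part $V_0 \cup V_1$ of the minimizer, and (C) shows that removing a crown shifts the LP optimum by exactly $|H|$ and leaves $V_{\nicefrac{1}{2}}$ invariant, so your induction lands on $G[V_{\nicefrac{1}{2}}]$ on the nose and yields $G_{\text{crown}} = G[V_{\nicefrac{1}{2}}] = G_{\text{LP}}$ directly rather than through mutual irreducibility plus uniqueness. The paper's route is shorter given the cited background; yours is more self-contained and makes the terse final step of the paper rigorous. When writing (C) out in full, state the optimality transfer in both directions explicitly---pad an optimal solution of $G-(H\cup I)$ with $H\to 1$, $I\to 0$ for one inequality, and lower-bound any solution of $G$ by $|H|$ plus its restriction via the saturating matching for the other---but both are the same constructions you already use in (A) and (B), so nothing genuinely new is needed.
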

\begin{proof}
	First we show that there exists a set $V_{\nicefrac{1}{2}}^{\star}$ of vertices such that for every half-integral optimal solution for the VC-LP for~$G$ that minimizes the number of variables set to~$\nicefrac{1}{2}$, the set of vertices whose variables are set to $\nicefrac{1}{2}$ is $V_{\nicefrac{1}{2}}^{\star}$.
	Let $X'$ and $X''$ be two half-integral solutions for the VC-LP with the minimum number of variables set to $\nicefrac{1}{2}$.
	For $i \in \{ 1, 0, \nicefrac{1}{2} \}$, let~$V_i'$ and $V_i''$ be, as defined above, the set of vertices whose variables are set to~$i$ in $X'$ and $X''$, respectively.
	Assume for the sake of contradiction that $V_{\nicefrac{1}{2}}' \ne V_{\nicefrac{1}{2}}''$.
	We claim that the following (call it $X$) is an optimal solution for the VC-LP:
	\begin{align*}
		x_v = \begin{cases}
			1 & \text{if } v \in V_1' \cup (V_{\nicefrac{1}{2}}' \cap V_1''), \\
			0 & \text{if } v \in V_0' \cup (V_{\nicefrac{1}{2}}' \cap V_0''), \\
			\nicefrac{1}{2} & \text{else } (\text{that is}, v \in V_{\nicefrac{1}{2}}' \setminus (V_1'' \cup V_0'')).
		\end{cases}
	\end{align*}
	We define $V_1$, $V_0$, and $V_{\nicefrac{1}{2}}$ analogously for~$X$.

	To prove the claim, we first verify that $X$ is a solution for the VC-LP.
	It suffices to show that $N(v) \subseteq V_1$ for each vertex $v \in V_0$.
	If $v \in V_0'$, then we have $N(v) \subseteq V_1' \subseteq V_1$.
	Otherwise (that is, $v \in V_{\nicefrac{1}{2}}' \cap V_0''$), we have
	\(
		N(v) \subseteq N(V_{\nicefrac{1}{2}}' \cap V_0'')
		\subseteq N(V_{\nicefrac{1}{2}}') \cap N(V_0'')
		\subseteq (V_1' \cup V_{\nicefrac{1}{2}}') \cap V_1''
		\subseteq V_1
	\),
	because $N(V_{\nicefrac{1}{2}}') \subseteq V_1' \cup V_{\nicefrac{1}{2}}'$ and $N(V_0'') \subseteq V_1''$.
	Thus, we see that $X$ is a solution for the VC-LP.

	We then show that $X$ is optimal for the VC-LP.
	By \Cref{lem:lpcrown}, there is a matching $M$ in $G[V_1' \cup V_0']$ that matches all vertices of $V_1'$, and thereby, we have $\sum_{v \in V_1' \cup V_0'} x_v'' \ge \sum_{uv \in M} x_u'' + x_v'' \ge |M| \ge |V_1'|$.
	It follows that
	\begin{align*}
		\sum_{v \in V} x_v''
		= \sum_{v \in V_1' \cup V_0'} x_v'' + \sum_{v \in V_{\nicefrac{1}{2}}'} x_v''
		&\ge |V_1'| + |V_{\nicefrac{1}{2}}' \cap V_1''| + \frac{1}{2} |V_{\nicefrac{1}{2}}' \setminus (V_0'' \cup V_1'')| \\
		&= |V_1'| + \frac{1}{2} |V_{\nicefrac{1}{2}}'| + \frac{1}{2} (|V_{\nicefrac{1}{2}}' \cap V_1''| - |V_{\nicefrac{1}{2}}' \cap V_0''|).
	\end{align*}
	Note that $\sum_{v \in V} x_v'' = \sum_{v \in V} x_v' = |V_1'| + \frac{1}{2} |V_{\nicefrac{1}{2}}'|$ by the optimality of $X'$ and $X''$.
	Thus, we obtain $|V_{\nicefrac{1}{2}}' \cap V_1''| - |V_{\nicefrac{1}{2}}' \cap V_0''| \le 0$.
	Then the cost of $X$ is
	\begin{align*}
		\sum_{v \in V} x_v
		&= (|V_1'| + |V_{\nicefrac{1}{2}}' \cap V_1''|) + \frac{1}{2} |V_{\nicefrac{1}{2}}' \setminus (V_1'' \cup V_0'')| \\
		&= |V_1'| + \frac{1}{2} |V_{\nicefrac{1}{2}}'| + \frac{1}{2} (|V_{\nicefrac{1}{2}}' \cap V_1''| - |V_{\nicefrac{1}{2}}' \cap V_0''|)
		\le |V_1'| + \frac{1}{2} |V_{\nicefrac{1}{2}}'|,
	\end{align*}
	and hence $X$ is an optimal solution for the VC-LP for $G$.
	Since the number of variables set to $\nicefrac{1}{2}$ in~$X$ is smaller than that of $X'$ and $X''$, this contradicts our assumption on $X'$ and $X''$.
	Consequently,~$G_{\text{LP}}$ is well-defined---$G_{\text{LP}} = G[V_{\nicefrac{1}{2}}^\star]$.

	It remains to show that if \cref{rule:crown} is applied exhaustively, 
	then the VC-LP for the resulting graph~$G_{\text{crown}}$ has a unique optimal solution, in which all variables are set to~$\nicefrac{1}{2}$.
	Assume towards a contradiction that the VC-LP for~$G_{\text{crown}}$ has a solution~$X$ with some variables not set to~$\nicefrac{1}{2}$.
	Then by \Cref{lem:lpcrown}, $G_{\text{crown}}$ has a crown, which is a contradiction.
	Hence, in~$G_{\text{crown}}$ the VC-LP has a unique solution: setting all variables to~$\nicefrac{1}{2}$.
\end{proof}

It is known that applying \cref{rule:LP} results in an instance with at most~$2 \tau$ vertices, where~$\tau$ is the vertex cover number (the size of a minimum vertex cover)~\cite{CFK+15}.
Combining this, the algorithm of \citet{IOY14}, and \cref{lem:LP-equivalent-to-crown} gives the following theorem.

\begin{theorem}\label{thm:crown-exhaustive-application}
	\cref{rule:crown} can be exhaustively applied in~$O(\sqrt{n} m)$ time and the resulting instance has at most~$2\tau$ vertices.
\end{theorem}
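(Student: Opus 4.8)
The plan is to derive the theorem as a straightforward synthesis of the three ingredients assembled above: the algorithm of \citet{IOY14} that exhaustively applies \cref{rule:LP}, the equivalence established in \cref{lem:LP-equivalent-to-crown}, and the standard $2\tau$ size bound for the LP-based reduction~\cite{CFK+15}. No new combinatorial idea is required at this point; the work is to check that the running times compose to $O(\sqrt{n}m)$ and that the vertex bound survives the passage from \cref{rule:LP} to \cref{rule:crown}.

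For the running time, I would describe the concrete algorithm as follows. First compute an arbitrary half-integral solution of the VC-LP by computing a minimum vertex cover of the bipartite double cover~$\overline{G}$ (which has $2n$ vertices and $2m$ edges); via Kőnig's theorem this reduces to one maximum-cardinality matching in a bipartite graph, computable with the algorithm of \citet{HK73} in $O(\sqrt{n}m)$ time. Then invoke the linear-time procedure of \citet{IOY14} that turns this solution into the \emph{unique} VC-LP solution minimizing the number of variables set to~$\nicefrac{1}{2}$; by definition this is the exhaustive application of \cref{rule:LP}, and it costs only $O(n+m)$ on top of the initial solve. Hence \cref{rule:LP} is applied exhaustively in $O(\sqrt{n}m)$ time. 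By \cref{lem:LP-equivalent-to-crown} the graph~$G_{\text{LP}}$ thus produced is exactly the graph~$G_{\text{crown}}$ obtained by exhaustively applying \cref{rule:crown}, and since \cref{rule:crown} is correct for \CMatch (by the earlier lemma), this whole procedure is a correct exhaustive application of the crown rule within the claimed time bound.

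For the size bound I would argue as follows. After exhaustion, the all-$\nicefrac{1}{2}$ assignment is the unique optimal VC-LP solution for the reduced graph~$G'$, so every surviving vertex lies in~$V_{\nicefrac{1}{2}}$ and the optimal VC-LP value of~$G'$ equals~$|V(G')|/2$. Since the VC-LP optimum is a lower bound on the vertex cover number, we have $|V(G')|/2 \le \tau(G')$; and because~$G'$ is an induced subgraph of~$G$, deleting vertices cannot increase the vertex cover number, so $\tau(G') \le \tau(G) = \tau$. Combining these inequalities yields $|V(G')| \le 2\tau$, which is exactly the cited bound from~\cite{CFK+15} spelled out for our setting.

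I do not expect a genuine obstacle here, since the heavy lifting has already been done in \cref{lem:LP-equivalent-to-crown} and in the cited subroutine of \citet{IOY14}; the only points demanding care are bookkeeping that the single superlinear step (the bipartite matching) dominates the overall $O(\sqrt{n}m)$ budget while the canonicalization remains linear, and making sure to invoke induced-subgraph monotonicity of~$\tau$ so that the final bound is the clean $2\tau$ rather than $2\tau(G')$.
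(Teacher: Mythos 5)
Your proposal is correct and follows essentially the same route as the paper: combine the algorithm of \citet{IOY14} for exhaustively applying \cref{rule:LP} (one bipartite matching via \citet{HK73} plus linear-time canonicalization), \cref{lem:LP-equivalent-to-crown}, and the $2\tau$ bound for the LP-reduced instance. The only difference is that you derive the $2\tau$ bound explicitly (via the LP optimum $|V(G')|/2$ being a lower bound on $\tau(G')$, together with induced-subgraph monotonicity of $\tau$) where the paper simply cites it from~\cite{CFK+15}; this is a harmless and correct elaboration.
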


We refer to \cref{sec:eval-kernel} for a practical evaluation of the theoretical upper bound of \Cref{thm:crown-exhaustive-application} on the number of vertices in the reduced instance.
However, note that from a theoretical point of view, the kernel given in \cref{thm:fes-lin-kernel} is incomparable to the upper bound of \Cref{thm:crown-exhaustive-application}:
In a large odd cycle~$C_{2n+1}$, applying the kernel behind \cref{thm:fes-lin-kernel} yields a constant-size kernel (as the feedback edge number is one), that is, \cref{thm:fes-lin-kernel} essentially solves this instance.
\cref{rule:crown}, however, does not remove a single vertex in this case.
Conversely, applying \cref{rule:crown} on a complete bipartite graph~$K_{3,n}$ (here~$\tau = 3$) solves the instance again, but the kernel behind \cref{thm:fes-lin-kernel} does not remove a single vertex.

Note that the algorithm behind \cref{thm:crown-exhaustive-application} contains only one step requiring super-linear running time: 
The computation of a bipartite matching to compute an initial solution for the VC-LP.
Since computing matchings in practice is much easier for bipartite than for general graphs 
(even though the known theoretical worst-case running times are the same~\cite{Vaz20}, 
it is not surprising that our implementation that exhaustively applies \cref{rule:crown} is significantly faster than the state-of-the-art matching implementations on the respective input graph (see~\cref{sec:experiments}). 

\subsection{Relaxed Crowns}\label{ssec:relaxed-crowns}
In this subsection, we provide a data reduction rule that generalizes \cref{rule:deg-two-vertices} (for degree-two vertices) in the same sense as \cref{rule:crown} (crown rule) generalizes \cref{rule:deg-zero-one-vertices} (for degree-one vertices).
However, in contrast to \cref{rule:crown} we decided not to implement this rule as we do not have a sufficiently fast algorithm to apply the rule.
Our new rule uses a relaxed crown concept, defined as follows:
\begin{definition}\label{def:generalized-crown}
	A \emph{relaxed crown} in a graph~$G = (V,E)$ is a pair $(H,I)$ such that
	\begin{enumerate}
		\item $I \subseteq V$ is an independent set in~$G$ (no two vertices in~$I$ are adjacent in~$G$), \label{r-crown:IS}
		\item $H = N(I) := \bigcup_{v \in I} N(v)$, and \label{r-crown:Neighbors}
		\item for every~$v \in H$ there is a matching~$M_{H,I,v}$ between~$H\setminus\{v\}$ and~$I$ that matches all vertices in~$H\setminus\{v\}$. \label{r-crown:Matching}
	\end{enumerate}
\end{definition}
Compared to the crown concept we relaxed Condition~\eqref{r-crown:Matching}. 
In particular, note that in the relaxed crown, the set~$H$ can be larger (by one vertex) than~$I$!
Our new data reduction rule is as follows (see \cref{fig:relaxed-crown} for an illustration).
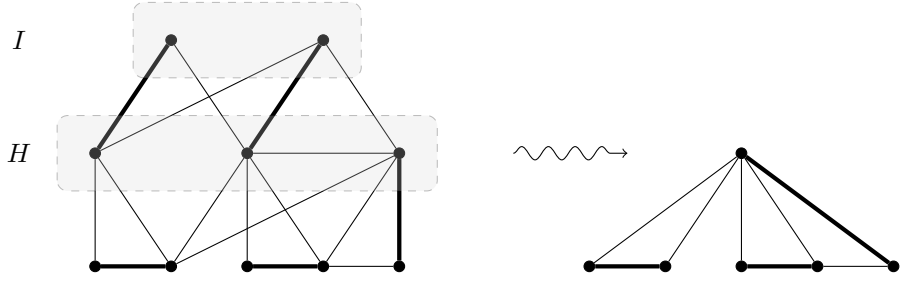
\begin{figure}
	\centering
	\begin{tikzpicture}[auto]
		\foreach \i in {1,2} {
			\node[knoten] at (2*\i - 1,0) (i\i) {};
		}
		\foreach \i in {1,2,3} {
			\node[knoten] at (2*\i - 2,-1.5) (h\i) {};
		}
		\foreach \i in {1,...,5} {
			\node[knoten] at (\i-1,-3) (u\i) {};
		}
	
		\foreach \i in {1,2} {
			\foreach \j in {1,2} {
				\draw[majarr] (i\i) edge (h\j);
			}
		}
		\draw[majarr] (i2) edge (h3);
		\draw[majarr] (h2) edge (h3);
		\draw[majarr,matched] (i1) edge (h1);
		\draw[majarr,matched] (i2) edge (h2);

		\foreach \i / \j in {1/1, 1/2, 2/2, 2/3, 2/4, 3/2, 3/4} {
			\draw[majarr] (h\i) edge (u\j);
		}
		\draw[majarr,matched] (h3) edge (u5);
		\draw[majarr,matched] (u1) edge (u2);
		\draw[majarr,matched] (u4) edge (u3);
		\draw[majarr] (u4) edge (u5);

		\node at (-1,-1.5) (v) {$H$};
		\node at (-1,0) (v) {$I$};
		
		\path[fill=black!15,draw=black,dashed,opacity=.25,rounded corners] (-0.5,-2) rectangle (4.5,-1);
		\path[fill=black!15,draw=black,dashed,opacity=.25,rounded corners] (0.5,0.5) rectangle (3.5,-0.5);
		
		\draw [->,snake=snake,line after snake=1mm] (5.5,-1.5) -- (7,-1.5);

		\begin{scope}[xshift=6.5cm]
			\node[knoten] at (2,-1.5) (w) {};
			\foreach \i in {1,2,3,4,5} {
				\node[knoten] at (\i-1,-3) (u\i) {};
				\draw[majarr] (u\i) edge (w);
			}		
			\draw[majarr,matched] (u5) edge (w);
			\draw[majarr,matched] (u1) edge (u2);
			\draw[majarr,matched] (u4) edge (u3);
			\draw[majarr] (u4) edge (u5);
		\end{scope}
	\end{tikzpicture}
	\caption{Left-hand side: A graph with a relaxed crown~$(H,I)$ and a maximum matching (thick edges) of cardinality five. Right-hand side: The graph obtained by applying \cref{rule:relaxed-crown}. The maximum-cardinality matching highlighted on the right-hand side is by~$|H|-1 = 2$ smaller than the maximum-cardinality matching highlighted on the left-hand side.}
	\label{fig:relaxed-crown}
\end{figure}
\begin{rrule}\label{rule:relaxed-crown}
	Let~$G$ be a crown-free graph and let~$(H,I)$ be a relaxed crown in~$G$.
	Then remove all vertices in~$H \cup I$, add a new vertex~$w$ with~$N(w) = \bigcup_{u \in H} N(u) \setminus (H \cup I)$ and decrease the solution size~$s$ by~$|H|-1$.
\end{rrule}
\begin{lemma}
	\cref{rule:relaxed-crown} is correct.
\end{lemma}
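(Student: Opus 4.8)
The plan is to establish the equation $|\match(G)| = |\match(G')| + (|H|-1)$, where $G'$ is the graph produced by \cref{rule:relaxed-crown}. I will prove this by two inequalities, mirroring the structure of the correctness proof for \cref{rule:crown}. The key structural fact I would lean on is that $w$ in $G'$ acts as a single ``merged'' copy of the vertices of $H$: its neighborhood outside $H \cup I$ is exactly $\bigcup_{u \in H} N(u) \setminus (H \cup I)$, so any edge from $G'$ incident to $w$ corresponds to an edge from some $u \in H$ to the outside world in $G$, and vice versa.

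\textbf{The lower bound $|\match(G)| \ge |\match(G')| + (|H|-1)$.}
First I would take a maximum matching $M'$ in $G'$ and lift it to a matching in $G$ of size $|M'| + (|H|-1)$. If $w$ is free in $M'$, then $M'$ already lives entirely in $G - (H \cup I)$, and I append the crown-style matching $M_{H,I,v}$ for \emph{any} choice of $v \in H$ (using \cref{r-crown:Matching} of \cref{def:generalized-crown}), which matches the $|H|-1$ vertices of $H \setminus \{v\}$ into $I$, contributing $|H|-1$ edges; the vertex $v$ stays free, which is fine. If instead $w$ is matched in $M'$ to some outside vertex $z$, then $z \in N(u)$ for some specific $u \in H$; I replace the edge $wz$ by the edge $uz$ in $G$, and then append $M_{H,I,u}$ (the relaxed matching that saturates $H \setminus \{u\}$ into $I$), adding $|H|-1$ edges. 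In both cases the result is a valid matching in $G$ of size $|M'| + (|H|-1)$.

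\textbf{The upper bound $|\match(G)| \le |\match(G')| + (|H|-1)$.}
For the converse I take a maximum matching $M$ in $G$ and project it down to $G'$. The vertices of $I$ can only be matched into $H$ (since $N(I) = H$ by \cref{r-crown:Neighbors}), so $M$ uses at most $|H|$ edges incident to $H$. Among the edges of $M$ incident to $H$, at most one goes to an outside vertex \emph{that survives as an edge at $w$}; more carefully, I would argue that after removing from $M$ all edges meeting $H \cup I$, I am left with a matching of $G - (H \cup I)$, to which I may add back at most one edge at $w$ (recovering at most one $H$-to-outside edge of $M$ as an edge $w z \in E(G')$). Thus $|M| \le |\match(G')| + (|H|-1)$, since the edges of $M$ touching $H \cup I$ number at most $|H|$, and only one of them is recoverable at $w$ in $G'$ while the remaining at most $|H|-1$ are charged to the deficit.

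\textbf{The main obstacle.}
The delicate point, and where I expect the real work to lie, is the upper bound: I must show that $M$ cannot use more than $|H|$ edges meeting $H \cup I$ \emph{and} that at most one of these can be profitably retained in $G'$. The hypothesis that $G$ is crown-free (not merely that $(H,I)$ is a relaxed crown) is what I would use here: if $(H,I)$ were an actual crown, the vertices of $H$ could all be matched into $I$ internally, and one would expect to save $|H|$ rather than $|H|-1$; the crown-freeness precludes the degenerate case $|H| \le |I|$ with a full $H$-saturating matching and forces the relaxed matching to leave exactly the flexibility of one ``exposed'' vertex of $H$, which is precisely the single edge $w z$ we are allowed to keep. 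Formalizing that exactly one outside edge survives---rather than zero or two---via an exchange argument on $M$ (possibly using alternating paths through the relaxed matchings $M_{H,I,v}$) is the step I would budget the most care for.
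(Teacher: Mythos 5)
Your lower-bound direction ($|\match(G)| \ge |\match(G')| + |H| - 1$) is correct, and in fact more careful than the paper's own treatment: the paper simply writes ``$M \cup M'$ is a matching in $G$'' without addressing a possible edge $wz \in M'$ (which does not exist in $G$), whereas you explicitly replace $wz$ by $uz$ for a suitable $u \in H$ with $z \in N(u)$ and then append $M_{H,I,u}$, which is exactly the right fix.

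The upper-bound direction, however, contains a genuine gap, and it sits precisely where the content of the proof lies. Your accounting---at most $|H|$ edges of $M$ touch $H \cup I$, and one of them is recoverable at $w$---closes the bound only when at least one edge of $M$ goes from $H$ to a vertex outside $H \cup I$: then removing $k \le |H|$ edges and re-adding one edge at $w$ gives $|\match(G')| \ge |M| - k + 1 \ge |M| - (|H|-1)$. In the complementary case, where every edge of $M$ meeting $H \cup I$ lies entirely inside $H \cup I$, nothing is recoverable at $w$ (there need not exist any free neighbor of $w$), so you need the stronger bound $k \le |H|-1$, and this is exactly what the crown-freeness hypothesis delivers via a short counting argument that your proposal does not carry out. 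The argument (which is the paper's Case~1): since $I$ is independent and $N(I) = H$, every edge of $M$ inside $H \cup I$ has at least one endpoint in $H$; if there were $|H|$ such pairwise disjoint edges, each would have \emph{exactly} one endpoint in $H$, hence its other endpoint in $I$, and together they would form a matching between $H$ and $I$ saturating all of $H$---so $(H,I)$ would satisfy all three conditions of \cref{def:crown} and be a crown, contradicting the assumption that $G$ is crown-free. This is pure pigeonhole counting; the ``exchange argument via alternating paths through the matchings $M_{H,I,v}$'' that you budget the most care for is not needed and is not how the hypothesis enters. Relatedly, your framing of the obstacle---``exactly one outside edge survives, rather than zero or two''---is off: zero surviving outside edges is a perfectly legitimate outcome (that is the whole point of the case just described, where $M_0$ alone must suffice), and ``two'' is excluded trivially because $w$ is a single vertex; the only thing that must be ruled out is $|H|$ internal matching edges when no outside edge exists.
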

\begin{proof}
	($\Rightarrow$): Let~$M \subseteq E$ be a maximum-cardinality matching in the input graph~$G$ and let~$G'$ be the reduced graph.
	We split~$M$ into three parts~$M = M_0 \cup M_1 \cup M_2$ as follows (edges with zero, one, or two endpoints in the relaxed crown): 
	$M_i := \{uv \in M \mid i = \mid uv \cap (H \cup I)|\}$.
	Since~$I$ is an independent set, it follows that~$|M_1|+|M_2| \le |H|$.
	We now make a case distinction whether or not~$M_1$ is the empty set.
	
	\emph{Case 1.} $M_1 = \emptyset$:
	Note that~$|M_2| \le |H|.$
	By assumption, $G$ does not contain a crown and, thus, we can strengthen this to $|M_2| \le |H|-1$.
	Hence, $M_0$ is a matching for~$G'$ with~$|M_0| = |M| - |M_2| \ge |M| - (|H|-1)$.
	
	\emph{Case 2.} $M_1 \neq \emptyset$:
	Let~$uv \in M_1$ be an arbitrary edge with~$u \in H$.
	Then, $wv$~is in~$G'$ and~$M_0 \cup \{wv\}$ is a matching for~$G'$ of size at least~$|M_0| + 1 = |M| - (|M_1| + |M_2|) + 1  \ge |M| - (|H|-1)$.
	
	($\Leftarrow$): Let~$M'$ be a maximum-cardinality matching in~$G'$.
	If $w$ is not matched in $M'$, then $M \cup M_{H, I, v}$ is a matching of size $|M'| + |H| - 1$ for an arbitrary vertex $v \in H$.
	So we may assume that $wu \in M'$.
	Let $v \in H$ be an arbitrary vertex adjacent to $u$.
	Then, $(M \setminus \{ uw \}) \cup M_{H, I, v}$ is a matching of size~$|M'| + |H| - 1$ for~$G$.
	Hence, $|\match(G)| = |\match(G')| + |H| - 1$.
\end{proof}

\section{Maximum-Weight Matching} \label{sec:weigted-kernel}
In this section, we show how to lift \cref{thm:fes-lin-kernel} (dealing with \CMatch{}) to the weighted case. %
\cref{rule:deg-zero-one-vertices,rule:deg-two-vertices} are based on the simple observation that for every vertex~$v \in V$ of degree at least one, there exists a maximum-cardinality matching containing~$v$:
If~$v$ is not matched, then take an arbitrary neighbor~$u$ of~$v$, remove the edge containing~$u$ from a maximum-cardinality matching, and add the edge~$uv$.
This observation does not hold in the weighted case---see, e.\,g., \Cref{fig:deg-one-rule-weighted} (left-hand side) where the only maximum-weight matching $\{au,bc\}$ leaves~$v$ free. 
Thus, we need new ideas to obtain efficient data reduction rules for the weighted case.

\paragraph{Vertices of degree at most one.}
We start with the simple case of dealing with vertices of degree at most one. 
Here, the following data reduction rule is obvious.

\begin{rrule}
\label{rule:deg-zero-weighted}
	If $\deg(v)=0$ for a vertex $v \in V$, then delete $v$.
	If $\omega(e) = 0$ for an edge $e \in E$, then delete $e$.
\end{rrule}

Next, we show how to deal with degree-one vertices, see \Cref{fig:deg-one-rule-weighted} for a visualization. 

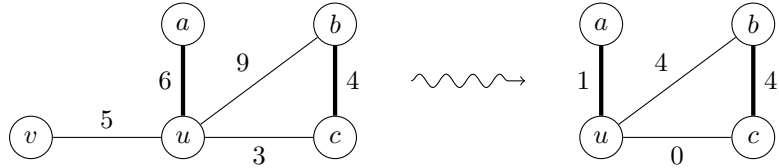
\begin{figure}[t]
	\centering
	\begin{tikzpicture}[auto]
		\node[alter] at (0,0) (v) {$v$};
		\node[alter] at (2,0) (u) {$u$};
		\node[alter] at (2,1.5) (a) {$a$};
		\node[alter] at (4,1.5) (b) {$b$};
		\node[alter] at (4,0) (c) {$c$};

		\draw[majarr] (v) edge node{5} (u);
		\draw[majarr,matched] (u) edge node{6} (a);
		\draw[majarr] (u) edge node{9} (b);
		\draw[majarr] (u) edge node[swap]{3} (c);
		\draw[majarr,matched] (b) edge node{4} (c);

		\draw [->,snake=snake,line after snake=1mm] (5,0.75) -- (6.5,0.75);
		
		\begin{scope}[xshift=5.5cm]
			\node[alter] at (2,0) (u) {$u$};
			\node[alter] at (2,1.5) (a) {$a$};
			\node[alter] at (4,1.5) (b) {$b$};
			\node[alter] at (4,0) (c) {$c$};

			\draw[majarr,matched] (u) edge node{1} (a);
			\draw[majarr] (u) edge node{4} (b);
			\draw[majarr,matched] (b) edge node{4} (c);
			\draw[majarr] (u) edge node[swap]{0} (c);
		\end{scope}
	\end{tikzpicture}
	\caption{Left: Input graph. Right: The graph after applying \cref{rule:deg-one-weighted} to vertex $v$. Bold edges indicate the unique maximum-weight matching in each graph.}
	\label{fig:deg-one-rule-weighted}
\end{figure}

\begin{rrule}
	\label{rule:deg-one-weighted}
	Let $G = (V,E)$ be a graph with non-negative edge weights~$\omega\colon E \rightarrow \N$.
	Let~$v$ be a degree-one vertex and let $u$ be its neighbor. 
	Then delete $v$, set the weight of every edge~$e$ incident with~$u$ to~$\max\{0,\omega(e)-\omega(uv)\}$, and decrease the solution value~$s$ by~$\omega(uv)$.
\end{rrule}

While proving the correctness of this rule (see next lemma) is relatively straightforward, the naive algorithm to exhaustively apply \cref{rule:deg-one-weighted} is too slow for our purpose:
If the edge weights are adjusted immediately after deleting~$v$, then exhaustively applying the rule to a star requires~$\Theta(n^2)$ time.
However, as we subsequently show, \cref{rule:deg-one-weighted} can be exhaustively applied in linear time.

\begin{lemma}
	\label[lemma]{lem:deg-one-weighted-correct}
	\cref{rule:deg-one-weighted} is correct.
\end{lemma}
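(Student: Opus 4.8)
The plan is to show that the maximum-weight matching changes by exactly $\omega(uv)$ under the rule, i.e.\ $\omega(G) = \omega(G') + \omega(uv)$, where $G'$ is the reduced graph. Let $v$ be the degree-one vertex and $u$ its unique neighbor. The key observation is that, since $v$ has degree one, in $G$ the vertex $v$ is either matched to $u$ (contributing $\omega(uv)$) or left free (contributing nothing, and in that case $u$ may be matched along some other incident edge $e$). So every matching in $G$ decomposes according to how it treats $v$, and I want to match this decomposition against matchings in $G'$ under the reweighting $\omega'(e) = \max\{0, \omega(e) - \omega(uv)\}$.

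First I would prove $\omega(G) \ge \omega(G') + \omega(uv)$. Take a maximum-weight matching $M'$ in $G'$. I lift it to a matching $M$ in $G$ as follows: keep every edge of $M'$ not incident with $u$; if $M'$ leaves $u$ free, add the edge $uv$; if $M'$ matches $u$ via some edge $e$, then compare $\omega(e)$ with $\omega(uv)$ and pick the better of ``use $e$ in $G$'' versus ``drop $e$ and use $uv$ instead.'' The reweighting is designed precisely so that the weight $M$ gains in $G$ over what $M'$ scored in $G'$ is exactly $\omega(uv)$ in each case: an edge $e$ incident with $u$ that was worth $\omega'(e) = \max\{0,\omega(e)-\omega(uv)\}$ in $G'$ is worth at least $\omega'(e) + \omega(uv)$ in $G$ after we either keep $e$ (gaining back the $\omega(uv)$ we subtracted, valid when $\omega(e) \ge \omega(uv)$) or swap to $uv$ (gaining $\omega(uv)$ while $\omega'(e)=0$, valid when $\omega(e) \le \omega(uv)$). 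Summing, $\omega(M) \ge \omega(M') + \omega(uv) = \omega(G') + \omega(uv)$.

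For the reverse inequality $\omega(G) \le \omega(G') + \omega(uv)$, I start from a maximum-weight matching $M$ in $G$ and project it down. Discard $v$ and the edge $uv$ if present, and reinterpret the remaining edges under $\omega'$. If $M$ contains $uv$, the other edges form a matching in $G'$ that loses only the $\omega(uv)$ from dropping $uv$ (all surviving edges incident with $u$ are absent, so their reweighting is irrelevant). If $M$ contains an edge $e = uz$ with $z \ne v$, then $v$ is free in $M$, and in $G'$ this edge is worth $\omega'(e) = \max\{0,\omega(e)-\omega(uv)\} \ge \omega(e)-\omega(uv)$, so the projected matching loses at most $\omega(uv)$. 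If $u$ is free in $M$ as well, then again the loss is bounded by $\omega(uv)$. In every case the resulting matching in $G'$ has weight at least $\omega(M) - \omega(uv)$, giving $\omega(G') \ge \omega(G) - \omega(uv)$.

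The main obstacle, and the point requiring care, is the truncation $\max\{0,\cdot\}$ in the reweighting. I expect the clean algebraic identity ``gain in $G$ over $G'$ equals $\omega(uv)$'' to hold exactly only when $\omega(e) \ge \omega(uv)$; when $\omega(e) < \omega(uv)$ the truncated weight is $0$ and the edge $e$ becomes worthless in $G'$, so the matching must be encouraged to use $uv$ rather than $e$. I would therefore organize both directions around the case split $\omega(e) \gtrless \omega(uv)$ for the edge incident with $u$, and verify that the optimal choice in $G$ always recovers exactly the $\omega(uv)$ surplus. The non-negativity assumption $\omega\colon E \to \N$ is used implicitly here to ensure that truncating to $0$ never discards a genuinely profitable (negative-avoidance) option, and that leaving $v$ or $u$ free is never forced to cost more than $\omega(uv)$.
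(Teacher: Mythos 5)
Your proposal is correct and takes essentially the same route as the paper's proof: both directions transfer a matching between $G$ and $G'$ via a case analysis on whether $u$ is matched and, if so, whether its matched edge survives the truncated reweighting, which is exactly the paper's construction of the lifted matching $M''$ and its projection argument in reverse. The only cosmetic difference is in the direction $\omega(G') \ge \omega(G) - \omega(uv)$, where you bound the loss on the edge $e$ at $u$ by $\omega(uv)$ directly via $\max\{0,\omega(e)-\omega(uv)\} \ge \omega(e)-\omega(uv)$, whereas the paper first assumes $M$ inclusion-wise maximal and argues $\omega(e) \ge \omega(uv)$ by an exchange step, a WLOG your version does not need.
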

\begin{proof}
	Let $v$ be a vertex of degree one and let $u$ be its neighbor. 
	Let $M$ be a matching of weight at least $s$ for~$G$. 
	We assume without loss of generality that $M$ is of maximum weight and, hence, $u$ is matched.
	If $uv \in M$, then the deletion of $v$ decreases the weight of the matching by~$\omega(uv)$.
	Hence, the resulting graph $G'$ (with adjusted weights) has a matching of weight at least $s-\omega(uv)$.
	If~$uv \not\in M$, then $M$~is also contained in the resulting graph $G'$.
	As~$v$ is not matched, $M$~contains exactly one edge~$e$ with~$u \in e$. 
	Thus, $e$ has in~$G'$ weight~$\max \{ 0, w(e)-w(uv) \}$ and~$M$ has in~$G'$ weight at least~$s-\omega(uv)$.
	
	Conversely, let~$M'$ be a matching in the reduced graph~$G'$ with weight at least~$s-\omega(uv)$.
	We construct a matching~$M''$ for~$G$ as follows.
	First, consider the case that~$u$ is matched by an edge~$e$.
	If~$e$ has in~$G'$ weight more than zero, then set~$M'' := M'$. %
	If~$e$ has in~$G'$ weight zero, then set~$M'':= (M' \setminus \{e\}) \cup \{uv\}$.
	Second, if~$u$ is free, then set~$M'':= M' \cup \{uv\}$. 
	In all three cases~$M''$ is a matching in $G$ with weight at least~$s$.
\end{proof}

\begin{lemma}
\label[lemma]{lem:deg-one-weighted-time}
	\cref{rule:deg-one-weighted} can be exhaustively applied in $O(n+m)$ time.
\end{lemma}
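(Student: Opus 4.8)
The plan is to implement exhaustive application as a single peeling pass that removes all vertices whose degree eventually drops to one, while avoiding the naive re-scanning of a neighbor's incident edges after every deletion (which, as noted, costs $\Theta(n^2)$ on a star). The crucial observation that unlocks linear time is that the capped subtractions prescribed by \cref{rule:deg-one-weighted} can be accumulated. Concretely, for non-negative amounts $a_1, \dots, a_k$, iterating the update $x \mapsto \max\{0, x - a_i\}$ starting from $x = \omega(e)$ yields exactly $\max\{0, \omega(e) - \sum_{i=1}^k a_i\}$; since every subtracted amount is itself a (non-negative) current edge weight, the order in which the subtractions are applied is irrelevant. Hence I would maintain, for every vertex $u$, a single accumulator $r(u)$, initialized to $0$, that records the total weight subtracted from the edges incident with $u$ by the deletions of $u$'s former degree-one neighbors. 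The invariant is that the current weight of a surviving edge $uw$ equals $\max\{0, \omega(uw) - r(u) - r(w)\}$, which can be read off in $O(1)$ time whenever it is needed.

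Using this representation, the algorithm maintains the current degrees together with adjacency lists in which the two directed copies of each edge carry cross-pointers, so that an edge can be deleted from both lists in $O(1)$ time and a degree-one vertex's list contains exactly its unique surviving neighbor. I would keep a queue of the degree-one vertices. To process a degree-one vertex $v$ with surviving neighbor $u$, I compute the current weight $c := \max\{0, \omega(uv) - r(u) - r(v)\}$ in constant time, decrease the solution value $s$ by $c$, add $c$ to $r(u)$ (this lazily applies the weight update of \cref{rule:deg-one-weighted} to all edges at $u$ at once), delete the edge $uv$ from both adjacency lists, and decrement the degree of $u$; if $u$ thereby becomes degree-one it is enqueued. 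Isolated vertices created along the way are simply left untouched, as \cref{rule:deg-one-weighted} does not apply to them. When the queue empties, no degree-one vertex remains, so the rule has been applied exhaustively; a final sweep over the surviving edges replaces each weight $\omega(uw)$ by $\max\{0, \omega(uw) - r(u) - r(w)\}$ to materialize the reduced instance.

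For the running time, building the cross-pointed adjacency structure takes $O(n+m)$ time. Every vertex is enqueued and processed at most once (when its degree first reaches one), each such step is $O(1)$, and every edge is deleted at most once in $O(1)$ time, so the peeling phase runs in $O(n+m)$ time; the concluding sweep over the at most $m$ surviving edges adds another $O(m)$. The main obstacle is precisely the cap $\max\{0, \cdot\}$: it is what forces the naive re-scanning and what makes a per-vertex accumulator look suspicious, since capping is applied after each individual subtraction rather than once at the end. The heart of the argument is therefore to verify the collapse identity above and to confirm that pooling all subtractions from one endpoint into $r(u)$ faithfully reproduces the step-by-step behavior of the rule for every edge $uw$ simultaneously. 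I would also remark that, although the exact residual weights may depend on the order in which degree-one vertices are processed, the set of deleted vertices does not (it is the complement of the $2$-core), so any processing order yields a valid exhaustively reduced instance; correctness of each simulated step is already guaranteed by \cref{lem:deg-one-weighted-correct}.
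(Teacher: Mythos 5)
Your proposal is correct and takes essentially the same approach as the paper: the paper's proof likewise maintains a per-vertex counter $c(v)$ (your $r(v)$) with the invariant that the current weight of an edge $uv$ is $\max\{0,\omega(uv)-c(u)-c(v)\}$, processes a list of degree-one vertices in constant time per step (re-inserting a neighbor whose degree drops to one), and materializes the reduced weights in a final $O(m)$ sweep. Your explicit verification of the collapse identity for iterated capped subtractions is a welcome addition that the paper leaves implicit behind ``with this invariant, it is easy to see.''
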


\begin{proof}
	The basic idea of the algorithm exhaustively applying \cref{rule:deg-one-weighted} in linear time is as follows: 
	We store in each vertex a number indicating the weight of the heaviest incident edge removed due to \cref{rule:deg-one-weighted}.
	Then, whenever we want to access the ``current'' weight of an edge~$e$, then we subtract from~$\omega(e)$ the two numbers stored in the two incident vertices.
	Once \cref{rule:deg-one-weighted} is no more applicable, then we update the edge weights to get rid of the numbers in the vertices in order to create a \WMatch instance.
	
	The details of the algorithm are as follows.
	First, in~$O(n+m)$ time we collect all degree-one vertices in a list~$L$ and initialize for each vertex~$v$ a counter~$c(v) := 0$.
	Then, we process~$L$ one by one.
	For a degree-one vertex~$v \in L$, let~$u$ be its neighbor. 
	We decrease~$s$ by~$\max\{0,w(uv) - c(u) - c(v)\}$, then set~$c(u) := c(u) + \max\{0, w(uv) - c(u) - c(v)\}$, and then delete~$v$.
	If after the deletion of~$v$ its neighbor~$u$ has degree one, then~$u$ is added to~$L$.
	Thus, after at most~$n$ steps, each one doable in constant time, we processed~$L$.
	When~$L$ is empty, then in~$O(m)$ time we update for each edge~$uv$ its weight~$w(uv) := \max\{0,w(uv) - c(u) - c(v)\}$.
	This finishes the description of the  algorithm.
	
	Observe that we have the following invariant when processing the list~$L$: the weight of an edge~$uv$ is~$\max\{0,w(uv) - c(u) - c(v)\}$.
	With this invariant, it is easy to see that the algorithm indeed applies \cref{rule:deg-one-weighted} exhaustively.
\end{proof}

Note that after applying \cref{rule:deg-one-weighted} we can have weight-zero edges and thus \cref{rule:deg-zero-weighted} might become applicable.
We do not know whether \cref{rule:deg-zero-weighted,rule:deg-one-weighted} \emph{together} can be applied exhaustively in linear time. 
However, for the kernel we present at the end of this section it is sufficient to apply \cref{rule:deg-one-weighted} exhaustively.

\paragraph{Vertices of degree two.}
Lifting \cref{rule:deg-two-vertices} to the weighted case is more delicate than lifting \cref{rule:deg-zero-one-vertices} to \cref{rule:deg-zero-weighted,rule:deg-one-weighted}.
The reason is that the two incident edges might have different weights.
As a consequence, we cannot decide locally what to do with a degree-two vertex.
Instead, we process multiple degree-two vertices at once.
To this end, we use the following notation.

\begin{definition}
	\label[definition]{def:maxpath}
	Let~$G$ be a graph. 
	A path~$P = v_0v_1\ldots v_\ell$ is a \emph{\maxpath{}} in~$G$ if $\ell \ge 3$ and the inner vertices~$v_1, v_2, \ldots, v_{\ell-1}$ all have degree two in~$G$, but the endpoints~$v_0$ and~$v_\ell$ do not, that is, $\deg_G(v_1) = \ldots = \deg_G(v_{\ell-1}) = 2$, $\deg_G(v_0) \neq 2$, and $\deg_G(v_\ell) \neq 2$.
\end{definition}

\begin{definition}
	\label[definition]{def:pendcycle}
	Let~$G$ be a graph. 
	A cycle~$C = v_0v_1\ldots v_\ell v_0$ is a \emph{\pendcycle{}} in~$G$ if at most one vertex in~$C$ does \emph{not} have degree two in~$G$.
\end{definition}

The reason to study \maxpaths{} and \pendcycles{} is that 
we can compute a maximum-weight matching on path and cycle graphs in linear time, as stated next.
This allows us to preprocess all vertices in a \maxpath{} or a \pendcycle{} at once.

\begin{observation}
	\label[observation]{obs:path-cycle-lin-time}
	\WMatch can be solved in~$O(n)$ time on paths and cycles.
\end{observation}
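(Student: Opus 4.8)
The plan is to solve \WMatch on paths and cycles by dynamic programming, exploiting the fact that each internal vertex has degree two, so the "decision" at every vertex is simply whether or not to include the unique forward edge.

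First I would treat the path case. Label the path $P = v_0 v_1 \ldots v_\ell$ and write $\omega_i := \omega(v_{i-1}v_i)$ for the weight of the $i$-th edge. Define $A_i$ to be the weight of a maximum-weight matching in the subpath $v_0 \ldots v_i$, and fill it in left to right with the recurrence $A_i = \max\{A_{i-1},\, A_{i-2} + \omega_i\}$, seeded by $A_0 = 0$ and $A_1 = \omega_1$. The two cases correspond to leaving $v_i$ free (inheriting $A_{i-1}$) or matching $v_i$ to $v_{i-1}$ via the edge of weight $\omega_i$ (which forces $v_{i-1}$ free, hence $A_{i-2}$). Correctness follows because in any matching the vertex $v_i$ is either unmatched or matched along its unique available edge $v_{i-1}v_i$, so these two alternatives are exhaustive and non-overlapping. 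Each of the $\ell$ entries is computed in constant time, giving $O(n)$ overall.

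Next I would reduce the cycle case to the path case. Given a cycle $C = v_0 v_1 \ldots v_\ell v_0$, the only subtlety relative to a path is the wrap-around edge $v_\ell v_0$, so I condition on whether it is used. If the edge $v_\ell v_0$ is \emph{not} in the matching, the problem is exactly a matching on the path obtained by deleting that edge, which the path algorithm solves. If $v_\ell v_0$ \emph{is} in the matching, then both $v_0$ and $v_\ell$ are matched by it, so neither may touch any other edge; the remaining choices live on the path $v_1 v_2 \ldots v_{\ell-1}$, and the total weight is $\omega(v_\ell v_0)$ plus a maximum matching on that shorter path. I would run the path algorithm in both scenarios and return the larger of the two totals. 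This is two linear-time computations, hence still $O(n)$.

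I do not expect a genuine obstacle here, since the bounded-degree structure makes the dynamic program essentially forced; the main point to state carefully is the exhaustiveness of the per-vertex case distinction (free versus matched along the unique edge), which is what makes the recurrence correct, and the clean conditioning on the single wrap-around edge in the cycle case. A minor bookkeeping remark is that \cref{obs:path-cycle-lin-time} is phrased as solving the \emph{decision} problem, but the dynamic program computes the optimal matching weight $\omega(G)$ directly, from which the decision against the threshold $s$ is immediate; I would note this and, if a matching itself is wanted, observe that the standard traceback through the $A_i$ table recovers one in linear time as well.
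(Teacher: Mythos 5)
Your proof is correct, and the cycle case is exactly the paper's argument: condition on a single edge $e$ of the cycle, solve the two resulting paths in linear time, and return the heavier of $M'$ and $M \cup \{e\}$. Where you genuinely differ is the path case. The paper sets up no dynamic program at all; it simply observes that a path is solved by exhaustively applying the weighted data reduction rules it has already developed (\cref{rule:deg-zero-weighted,rule:deg-one-weighted}), whose linear-time exhaustive application is guaranteed by \cref{lem:deg-one-weighted-time} --- repeatedly deleting a degree-one endpoint and propagating its edge weight into the neighboring edge accumulates exactly the optimal matching weight in the solution-value counter. Your recurrence $A_i = \max\{A_{i-1},\, A_{i-2}+\omega_i\}$ is the standard textbook DP and is equally valid; what it buys is self-containment (no dependence on the correctness of \cref{rule:deg-one-weighted} or on \cref{lem:deg-one-weighted-time}) plus a direct traceback that recovers the matching itself, not just its weight. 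What the paper's version buys is brevity and coherence: since \cref{obs:path-cycle-lin-time} sits inside a section whose entire purpose is those reduction rules, reusing them turns the observation into a two-line corollary of machinery already proved. Both approaches run in $O(n)$ time, and your closing remark about the decision-versus-optimization phrasing applies equally to either one.
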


\begin{proof}
	If the input graph~$G$ is a path, then by exhaustively applying \cref{rule:deg-zero-weighted,rule:deg-one-weighted}, we can compute a maximum-weight matching.
	Otherwise, if~$G$ is a cycle, then we take an arbitrary edge~$e$ and distinguish two cases.
	First, we take~$e$ into a matching and remove both endpoints from the graph.
	In the resulting path, we compute in linear time a maximum-weight matching~$M$.
	Second, we delete~$e$ and obtain a path for which we compute in linear time a maximum-weight matching~$M'$.
	We then simply choose between~$M \cup\{e\}$ and~$M'$ the heavier matching as the result.
\end{proof}

Now, using \cref{obs:path-cycle-lin-time}, we introduce data reduction rules for \maxpaths{} and \pendcycles{}.
Both rules are based on a similar idea which is easier to explain for a \pendcycle.
Let~$C$ be a \pendcycle{} and~$u\in C$ be the degree-at-least-three vertex in~$C$. 
Then there are two cases: $u$ is matched with a vertex not in~$C$ or it is not.
Let~$M$ be a maximum-weight matching for~$G$, and let~$M'$ be a maximum-weight matching with the constraint that~$u$ is matched to a vertex outside~$C$.
Clearly, $M \cap E(C)$ is at least as large as~$M' \cap E(C)$.
Looking only at~$C$, all that we need to know is the difference of the weights of these two matchings.
This can be encoded with one vertex~$z$ which replaces the whole cycle~$C$ (see \Cref{fig:rrule cycle3} for an illustration).
\begin{figure}[t]
	\centering
	\begin{tikzpicture}[auto]
		\def\n{15}
		\def\radius{1.25}
		\def\firstAngle{180}
		\node[alter] (K-1) at ({\radius * cos(\firstAngle))},{\radius * sin(\firstAngle))}) {$u$};
		\foreach \i in {2,...,\n} {
			\node[knoten] (K-\i) at ({\radius * cos(\firstAngle + 360 * \i / \n - 360 / \n))},{\radius * sin(\firstAngle + 360 * \i / \n - 360 / \n))}) {};
			\pgfmathtruncatemacro{\j}{\i - 1};
			\path (K-\j) edge[-] (K-\i);
		}
		\path (K-1) edge[-] (K-\n);
		
		\draw[majarr] (K-1) edge (-3,-0.5);
		\draw[majarr] (K-1) edge (-3,0);
		\draw[majarr] (K-1) edge (-3,0.5);

		\begin{scope}[xshift=6.5cm]
			\draw [->,snake=snake,line after snake=1mm] (-4,0) -- (-2,0);
			\node[alter] at (0,0) (u) {$u$};
			\node[alter] at (4,0) (z) {$z$};

			\draw[majarr] (u) edge node[swap]{$\omega(C) - \omega(C-u)$} (z);
			\draw[majarr] (u) edge (-1,-0.5);
			\draw[majarr] (u) edge (-1,0);
			\draw[majarr] (u) edge (-1,0.5);
		\end{scope}
	\end{tikzpicture}
	\caption{Left: A \pendcycle{}~$C$ with~$u$ being the vertex of degree at least three. Right: The graph after applying \cref{rule:pend-cycle} where~$s$ is decreased by~$\omega(C-u)$.}
	\label{fig:rrule cycle3}
\end{figure}
Then, matching~$z$ corresponds to taking the matching in~$C$ and not matching~$z$ corresponds to taking the matching in~$C-u$.
Formalizing this idea, we arrive at the following data reduction rule.

\begin{rrule}
\label{rule:pend-cycle}
	Let $G$ be a graph with non-negative edge weights.
	Let~$C$ be a \pendcycle{} in~$G$, where $u \in C$ has degree at least three in $G$. 
	Then replace $C$ by an edge $uz$ with $\omega(uz)=\omega(C) - \omega(C-u)$ and 
	decrease the solution value~$s$ by $\omega(C-u)$, where $z$ is a new vertex.
\end{rrule}

\begin{lemma}
\label{thm:cycle correct}
	\cref{rule:pend-cycle} is correct.	
\end{lemma}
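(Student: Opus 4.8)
The plan is to show the two-directional weight equality $\omega(G) = \omega(G') + \omega(C-u)$, where $G'$ is the graph obtained by applying \cref{rule:pend-cycle}. Since the rule replaces a \pendcycle{} $C$ by a single edge $uz$ with weight $\omega(uz) = \omega(C) - \omega(C-u)$, and decreases the target $s$ by $\omega(C-u)$, correctness amounts to this identity relating maximum-weight matchings in $G$ and $G'$. The key structural fact I want to exploit is that $u$ is the only vertex of $C$ with neighbors outside $C$; every other vertex of $C$ has degree exactly two and hence interacts with the rest of the graph only through $C$. This means the interaction between $C$ and $G - V(C)$ is entirely mediated by whether $u$ is matched inside $C$ or matched to an outside neighbor.

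\textbf{The forward direction.} First I would take a maximum-weight matching $M$ in $G$ and split it as $M = M_{\mathrm{in}} \cup M_{\mathrm{out}}$, where $M_{\mathrm{in}}$ consists of edges with both endpoints in $V(C)$ and $M_{\mathrm{out}}$ is the rest. I make a case distinction on whether $u$ is matched by an edge leaving $C$. If $u$ is \emph{not} matched to an outside vertex, then $M_{\mathrm{in}}$ is a matching inside the cycle $C$, so $\omega(M_{\mathrm{in}}) \le \omega(C)$ (the weight of the maximum-weight matching on the whole cycle); here I would build a matching in $G'$ by keeping $M_{\mathrm{out}}$ and matching the edge $uz$, gaining $\omega(uz) = \omega(C) - \omega(C-u)$. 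If $u$ \emph{is} matched to an outside vertex via some edge $e \in M_{\mathrm{out}}$, then the portion of $M$ inside $C$ avoids $u$, so it is a matching of $C - u$, giving $\omega(M_{\mathrm{in}}) \le \omega(C-u)$; in this case I keep $M_{\mathrm{out}}$ (including $e$) in $G'$ and leave $z$ free. In both cases I check that the constructed matching $M'$ in $G'$ satisfies $\omega(M') = \omega(M) - \omega(C-u)$, using \cref{obs:path-cycle-lin-time} to guarantee that $\omega(C)$ and $\omega(C-u)$ are the relevant optimal values attainable inside the cycle.

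\textbf{The reverse direction.} Conversely, starting from a maximum-weight matching $M'$ in $G'$, I reconstruct a matching in $G$ of weight $\omega(M') + \omega(C-u)$. If $uz \in M'$, I replace the edge $uz$ by a maximum-weight matching of $C$ (which matches $u$ inside the cycle), contributing $\omega(C)$ and, after accounting for the removed $\omega(uz) = \omega(C) - \omega(C-u)$, a net gain of $\omega(C-u)$. If $uz \notin M'$, then I augment $M'$ by a maximum-weight matching of $C - u$, which is vertex-disjoint from whatever $M'$ does at $u$ and contributes exactly $\omega(C-u)$. Taking the maximum over both reconstruction strategies yields a matching in $G$ of the desired weight, establishing $\omega(G) \ge \omega(G') + \omega(C-u)$ and completing the equality.

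\textbf{Anticipated obstacle.} The main subtlety is \emph{not} the arithmetic but verifying that swapping the cycle-internal portion of a matching for an optimal internal matching never conflicts with $M_{\mathrm{out}}$. Because every internal vertex of $C$ has degree two, the only vertex whose external incidences matter is $u$, so the case split on $u$'s status is exactly what decouples the two regions. I expect the trickiest point to be confirming that the quantity $\omega(uz) = \omega(C) - \omega(C-u)$ is always non-negative (so that the rule produces a valid \WMatch instance with $\omega\colon E \to \N$); this follows since any maximum-weight matching of $C-u$ is in particular a matching of $C$, whence $\omega(C-u) \le \omega(C)$. Once this sign condition and the clean decoupling at $u$ are nailed down, the weight bookkeeping in both directions is routine.
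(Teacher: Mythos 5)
Your proof is correct and follows essentially the same route as the paper's: the same split of a maximum-weight matching into its cycle-internal and cycle-external parts, the same case distinction on whether $u$ is matched by an edge leaving $C$, and the same two-case reconstruction (depending on whether $uz \in M'$) in the reverse direction. Two cosmetic remarks: in the forward direction you only need (and only directly obtain) the inequality $\omega(M') \ge \omega(M) - \omega(C-u)$ rather than the equality you state, and your observation that $\omega(C) - \omega(C-u) \ge 0$ (so the new edge weight is valid) is a worthwhile detail the paper leaves implicit.
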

\begin{proof}
	Let~$C$ be a \pendcycle{} in~$G$ where~$u \in C$ has degree at least three in~$G$ and let~$G'$ be the graph obtained by applying \cref{rule:pend-cycle} to~$C$.
	We show~$\omega(G') = \omega(G) - \omega(C-u)$. 

	Let~$M$ be a maximum-weight matching in~$G$. 
	Let~$M_{\overline{C}} := M \setminus E(C)$.
	Observe that~$\omega(M_{\overline{C}}) = \omega(M) - \omega(M \cap E(C)) \ge \omega(G) - \omega(C)$.
	If~$u$ is matched with respect to~$M_{\overline{C}}$, then we have~$M_{\overline{C}} = M \setminus E(C-u)$.
	Hence, $\omega(G') \ge \omega(M_{\overline{C}}) \ge \omega(G) - \omega(C-u)$.
	If~$u$ is free with respect to $M_{\overline{C}}$, then~$M_{\overline{C}} \cup \{uz\}$ is a matching in $G'$ with weight at least $(\omega(G)-\omega(C))+(\omega(C)-\omega(C-u))=\omega(G)-\omega(C-u)$.
	Hence, in both cases we have~$\omega(G') \ge \omega(G) - \omega(C-u)$.
	
	Conversely, let $M'$ be a maximum-weight matching in $G'$.
	Recall that, for an edge-weighted graph~$H$, $\match(H)$ denotes a maximum-weight matching in~$H$.
	If~$uz \in M'$, then $(M' \setminus \{uz\}) \cup \match(C)$ is a matching in~$G$ with~$\omega(G') - (\omega(C) - \omega(C-u)) + \omega(C) = \omega(G')+\omega(C-u)$.
	Hence, $\omega(G) \ge \omega(G') + \omega(C-u)$.
	If~$uz \not\in M'$, then~$M' \cup \match(C-u)$ is a matching in~$G$ with weight at least $\omega(G') + \omega(C-u)$.
	Again, in both cases we have~$\omega(G) \ge \omega(G') + \omega(C-u)$.
	Combined with~$\omega(G') \ge \omega(G) - \omega(C-u)$, we arrive at~$\omega(G') = \omega(G) - \omega(C-u)$.
\end{proof}

The basic idea for \maxpaths{} is the same as for \pendcycles{}. 
The difference is that we have to distinguish four cases depending on whether or not the two endpoints~$u$ and~$v$ of a \maxpath{}~$P$ are matched within~$P$.
To avoid some trivial case distinctions, we assume that~$\omega(uv) = 0$ if the edge~$uv$ does not exist in~$G$.
We denote by $P-u-v$ the path obtained from removing in~$P$ the vertices $u$ and~$v$.

\Cref{fig:rrule-deg2} visualizes the next data reduction rule. 

\begin{rrule}
\label{rule:max-path}
	Let~$G = (V,E)$ be a graph with non-negative edge weights~$\omega\colon E \rightarrow \N$.
	Let~$P$ be a \maxpath{} in~$G$ with endpoints~$u$ and~$v$. 
	Then remove all vertices in~$P$ except~$u$ and $v$, 
	then add a new vertex~$z$ and, if not already existing, add the edge~$uv$.
	Furthermore, set~$\omega(uz):=\omega(P-v) - \omega(P-u-v)$, $\omega(vz):=\omega(P-u) - \omega(P-u-v)$, and~$\omega(uv):=\max\{\omega(uv),\omega(P) - \omega(P-u-v)\}$, and decrease the solution value~$s$ by~$\omega(P-u-v)$. 
\end{rrule}

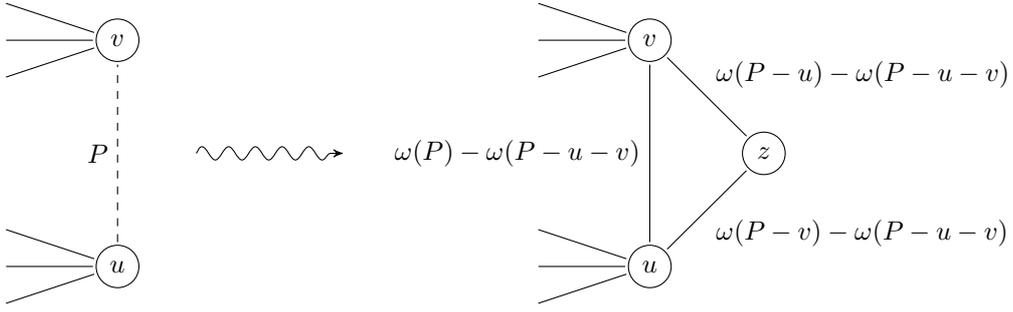
\begin{figure}[t]
	\centering
	\begin{tikzpicture}[auto, >=stealth',shorten <=1pt, shorten >=1pt]
		\node[alter] at (0,0) (u) {$u$};
		\node[alter] at (0,3) (v) {$v$};

		\draw[majarr,dashed] (u) edge node{$P$} (v);
		
		\draw[majarr] (u) edge (-1.5,-0.5);
		\draw[majarr] (u) edge (-1.5,0);
		\draw[majarr] (u) edge (-1.5,0.5);
		\draw[majarr] (v) edge (-1.5,2.5);
		\draw[majarr] (v) edge (-1.5,3);
		\draw[majarr] (v) edge (-1.5,3.5);

		\begin{scope}[xshift=7cm]
			\draw [->,snake=snake,line after snake=1mm] (-6,1.5) -- (-4,1.5);

			\node[alter] at (0,0) (u) {$u$};
			\node[alter] at (0,3) (v) {$v$};
			\node[alter] at (1.5,1.5) (z) {$z$};

			\draw[majarr] (u) edge (-1.5,-0.5);
			\draw[majarr] (u) edge (-1.5,0);
			\draw[majarr] (u) edge (-1.5,0.5);
			\draw[majarr] (v) edge (-1.5,2.5);
			\draw[majarr] (v) edge (-1.5,3);
			\draw[majarr] (v) edge (-1.5,3.5);
			
			\draw[majarr] (u) edge node[swap]{$\omega(P-v) - \omega(P-u-v)$} (z);
			\draw[majarr] (v) edge node{$\omega(P-u) - \omega(P-u-v)$} (z);
			\draw[majarr] (u) edge node{$\omega(P) - \omega(P-u-v)$} (v);
		\end{scope}
	\end{tikzpicture}
	\caption{Applying \cref{rule:max-path} on a path~$P$ with endpoints~$u$ and~$v$ (where~$u$ and~$v$ are not adjacent). The four choices for~$u$ and~$v$ on whether or not they are matched to a vertex within the path are reflected by the three (full) edges on the right where at most one can be taken into a matching. Since the edge~$uv$ is not contained in the input graph the weight of the edge~$uv$ in the reduced graph simplifies to the displayed value.
	}
	\label{fig:rrule-deg2}
\end{figure}

\begin{lemma}
\label[lemma]{thm:path correct}
	\cref{rule:max-path} is correct.
\end{lemma}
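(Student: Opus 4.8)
The plan is to show that \cref{rule:max-path} preserves the optimal matching weight up to the announced additive shift, that is, $\omega(G') = \omega(G) - \omega(P-u-v)$, where $G'$ is the reduced graph. This mirrors the proof of \cref{thm:cycle correct} for \pendcycles{}, the essential difference being that here both endpoints $u$ and $v$ of the \maxpath{}~$P$ can independently be matched inside or outside~$P$, giving rise to four cases rather than two. The gadget introduced by the rule encodes exactly these four possibilities: the three new edges $uz$, $vz$, and $uv$ pairwise share a vertex (the edges $uz$ and $vz$ share $z$, while $uv$ shares $u$ with $uz$ and $v$ with $vz$), so any matching of $G'$ contains at most one of them. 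Taking none of them corresponds to leaving both endpoints free within $P$ (baseline weight $\omega(P-u-v)$); taking $uz$ corresponds to matching $u$ but not $v$ inside $P$; taking $vz$ to matching $v$ but not $u$; and taking $uv$ to matching both endpoints inside $P$ (or to using a pre-existing direct edge $uv$), which is why its weight is the maximum of the two competing options. I will prove the two inequalities separately.

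For the direction $\omega(G') \ge \omega(G) - \omega(P-u-v)$, I start from a maximum-weight matching $M$ of $G$ and set $M_{\text{rest}} := M \setminus (E(P) \cup \{uv\})$; since these edges avoid the interior vertices of $P$ as well as $z$, $M_{\text{rest}}$ is a matching of $G'$. I then branch on whether $u$ and $v$ are matched by $M_{\text{rest}}$ (equivalently, matched to a vertex outside $P$). If both are matched outside, then $M \cap E(P)$ lies in the interior path $P-u-v$ and the direct edge $uv$ cannot be used, so $\omega(M_{\text{rest}}) \ge \omega(M) - \omega(P-u-v)$ and no gadget edge is needed. If exactly one of them, say $v$, is free, then $M \cap (E(P) \cup \{uv\})$ weighs at most $\omega(P-u)$, and appending the edge $vz$ (of weight $\omega(P-u) - \omega(P-u-v)$) to $M_{\text{rest}}$ yields the bound; the symmetric subcase uses $uz$. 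If both endpoints are free, then $\omega(M \cap (E(P)\cup\{uv\}))$ is at most $\max\{\omega(P-u-v),\omega(P-v),\omega(P-u),\omega(P),\omega(uv)+\omega(P-u-v)\}$, and for whichever term attains this maximum the corresponding gadget edge (none, $uz$, $vz$, $uv$, or again $uv$) has weight at least that term minus $\omega(P-u-v)$, so adding it to $M_{\text{rest}}$ produces a matching of $G'$ of weight at least $\omega(G) - \omega(P-u-v)$.

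For the converse $\omega(G) \ge \omega(G') + \omega(P-u-v)$, I take a maximum-weight matching $M'$ of $G'$, remove from it the (at most one) gadget edge it contains, and reinsert a suitable maximum-weight matching of a subpath. Writing $M'_{\text{rest}} := M' \setminus \{uz,vz,uv\}$, which is a matching of $G$ avoiding all interior vertices: if $M'$ uses no gadget edge I add $\match(P-u-v)$; if it uses $uz$ I add $\match(P-v)$, which may occupy $u$, now free once $uz$ is dropped; if it uses $vz$ I add $\match(P-u)$; and if it uses $uv$ I add either $\match(P)$ or the direct edge $uv$ together with $\match(P-u-v)$, according to which of the two terms defines the weight $\omega(uv)=\max\{\omega(uv),\omega(P)-\omega(P-u-v)\}$. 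In each case the reinserted edges occupy only vertices of $P$ that $M'_{\text{rest}}$ leaves free, and a short computation cancelling the gadget-edge weight against the weight of the reinserted subpath matching shows that the resulting matching of $G$ has weight exactly $\omega(M') + \omega(P-u-v)$.

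The only genuinely delicate point is the bookkeeping around the edge $uv$: because $u$ and $v$ may already be adjacent in $G$, its reduced weight is a maximum of two quantities, and one must check in both directions that whichever quantity is larger can be realized by a legitimate matching of $G$ (respectively $G'$) without double-using $u$ or $v$. Everything else is routine verification of vertex-disjointness, relying only on \cref{obs:path-cycle-lin-time} to guarantee that maximum-weight matchings of the subpaths $P$, $P-u$, $P-v$, and $P-u-v$ exist and are what the gadget weights are built from; I also note that these weights are non-negative, since deleting an endpoint only shortens the path and hence $\omega(P) \ge \omega(P-v) \ge \omega(P-u-v)$ and $\omega(P) \ge \omega(P-u) \ge \omega(P-u-v)$.
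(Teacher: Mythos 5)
Your proof is correct and follows essentially the same route as the paper's: both inequalities are established by the same four-way case analysis on whether the endpoints $u$ and $v$ are matched inside or outside the path $P$, with the gadget edges $uz$, $vz$, $uv$ compensating exactly the weight differences against $\omega(P-u-v)$, and with the same subcase distinction on whether $\omega(uv)$ or $\omega(P)-\omega(P-u-v)$ realizes the reduced weight of $uv$. The only cosmetic difference is that you excise a pre-existing direct edge $uv$ from the matching at the outset and fold it into the maximum in your case analysis, whereas the paper keeps it inside $M_{\overline{P}} := M \setminus E(P)$ and resolves the same subtlety in the converse direction's subcases.
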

\begin{proof}
	Let~$G$ be the input graph with a \maxpath{}~$P$ with endpoints~$u$ and~$v$.
	Furthermore, let~$G'$ be the reduced instance with~$z$ defined as in the data reduction rule.
	We show that~$\omega(G') = \omega(G) - \omega(P-u-v)$.
	
	Let~$M$ be a maximum-weight matching for~$G$. 
	We define~$M_{\overline{P}} := M \setminus E(P)$.
	Observe that~$\omega(M_{\overline{P}}) = \omega(M) - \omega(M \cap E(P)) \ge \omega(G) - \omega(P)$.
	We consider four cases.
	\begin{enumerate}
		\item If both~$u$ and~$v$ are matched with respect to~$M_{\overline{P}}$, then~$M_{\overline{P}} = M \setminus E(P - u - v)$ and hence
				\begin{align}
						\omega(M_{\overline{P}}) = \omega(M) - \omega(M \cap E(P - u - v)) \ge \omega(G) - \omega(P-u-v).
				\end{align}

		\item Let one vertex in $\{u,v\}$ be matched and let one be free.
				Without loss of generality, we assume that~$u$ is matched and $v$ is free with respect to $M_{\overline{P}}$.
				Then, we have that $M_{\overline{P}} = M \setminus E(P - u)$ and hence~$\omega(M_{\overline{P}}) \ge \omega(G) - \omega(P-u)$.
				Thus, $M_{\overline{P}} \cup \{vz\}$ is a matching of weight at least
				\begin{align*}
						(\omega(G) - \omega(P-u)) + (\omega(P-u)-\omega(P-u-v)) = \omega(G) - \omega(P-u-v).
				\end{align*}
		\item Finally, if both~$u$ and~$v$ are free with respect to~$M_{\overline{P}}$, then~$M_{\overline{P}} \cup \{uv\}$ is a matching of weight at least $(\omega(G) - \omega(P)) + (\omega(P)-\omega(P-u-v)) = \omega(G) - \omega(P-u-v)$.
	\end{enumerate}
	Thus in each case we have~$\omega(G') \ge \omega(G) - \omega(P-u-v)$.
	
	Conversely, let $M'$ be a maximum-weight matching for~$G'$.
	We define $\overline{M'} := M' \setminus \{uz, vz, uv \}$.
	Again, we distinguish four cases.
	\begin{enumerate}
		\item If both~$u$ and~$v$ are matched with respect to~$\overline{M'}$, then $\overline{M'}=M'$.
			Hence, $\overline{M'} \cup \match(P-u-v)$ is a matching in~$G$ with weight at least~$\omega(G') + \omega(P-u-v)$.
		\item If~$u$ is matched and~$v$ is free with respect to~$\overline{M'}$, then w.l.o.g.~$vz \in M'$.
			Hence, $\overline{M'} \cup \match(P-u)$ is a matching in $G$ with weight at least~$\omega(G') - (\omega(P-u) - \omega(P-u-v)) + \omega(P-u) = \omega(G') + \omega(P-u-v)$.
		\item If~$u$ is matched and~$v$ is free with respect to~$\overline{M'}$, then w.l.o.g.~$uz \in M'$.
			Hence, $\overline{M'} \cup \match(P-v)$ is a matching in $G$ with weight at least~$\omega(G') - (\omega(P-v) - \omega(P-u-v)) + \omega(P-v) = \omega(G') + \omega(P-u-v)$.
		\item Finally, if both~$u$ and~$v$ are free with respect to~$\overline{M'}$, then w.l.o.g~$uv \in M'$ as~$\omega(uv) \ge \omega(uz)$ and~$\omega(uv) \ge \omega(vz)$.
			Now, we encounter two subcases.
			\begin{enumerate}
				\item If~$\omega(uv) > \omega(P) - \omega(P-u-v)$, then the edge~$uv$ is in~$G$ and in~$G'$, having the same weight in both graphs.
				Then, $M' \cup \match(P-u-v)$ is a matching in~$G$ with weight at least~$\omega(G') + \omega(P-u-v)$.
				\item Otherwise, $\overline{M'} \cup \match(P)$ is a matching in~$G$ with weight at least~$\omega(G') - (\omega(P) - \omega(P-u-v)) + \omega(P) = \omega(G') + \omega(P-u-v)$.
			\end{enumerate}
	\end{enumerate}
	Hence, in all cases we have~$\omega(G) \ge \omega(G') + \omega(P-u-v)$.
	Combined with~$\omega(G') \ge \omega(G) + \omega(P-u-v)$, we can infer that~$\omega(G') = \omega(G) - \omega(P-u-v)$.
\end{proof}

\begin{lemma}
\label[lemma]{lem:deg-two-lin-time-weighted}
	\cref{rule:max-path,rule:pend-cycle} can be exhaustively applied in $O(n+m)$ time.
\end{lemma}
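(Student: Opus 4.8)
The plan is to give a linear-time algorithm that first identifies all \maxpaths{} and \pendcycles{} in a single pass, and then applies \cref{rule:max-path} and \cref{rule:pend-cycle} to each of them. The key structural observation is that the set of degree-two vertices, together with their incident edges, decomposes the graph into vertex-disjoint \maxpaths{} and \pendcycles{} (plus edges between high-degree vertices, which we ignore). Concretely, I would first compute all vertex degrees in $O(n+m)$ time and mark the vertices of degree exactly two. The maximal connected chains of degree-two vertices are precisely the interiors of the objects we wish to reduce, so traversing them identifies every \maxpath{} and \pendcycle{}.

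To locate these objects I would iterate over all degree-two vertices and, for each unvisited one, walk along the two incident edges in both directions until reaching a vertex whose degree is not two (or until returning to the start). If both walks terminate at non-degree-two endpoints $u$ and $v$, this yields a \maxpath{} $P$ from $u$ to $v$; if the walk closes up into a cycle, we obtain a \pendcycle{} (the single high-degree vertex, if any, is the endpoint $u$ of the rule). Each degree-two vertex is visited a constant number of times across all walks, so the total traversal cost is $O(n+m)$. A subtle point I would flag is handling degenerate cases cleanly: isolated cycles consisting entirely of degree-two vertices (a \pendcycle{} with no high-degree vertex), and the case where a \maxpath{}'s endpoints $u,v$ coincide or are already adjacent; the rule statement already accommodates the latter via the convention $\omega(uv)=0$ when the edge is absent and the $\max$ in the new weight of $uv$.

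For each identified object, applying the reduction rule requires the weights $\omega(P)$, $\omega(P-u)$, $\omega(P-v)$, $\omega(P-u-v)$ (for paths) or $\omega(C)$ and $\omega(C-u)$ (for cycles). By \cref{obs:path-cycle-lin-time}, each of these maximum-weight matchings on a path or cycle is computable in time linear in the length of the object. Since the \maxpaths{} and \pendcycles{} have pairwise disjoint interiors, the total length summed over all objects is $O(n)$, so all the required matching weights can be computed in $O(n)$ time overall. The actual graph surgery—deleting interior vertices, introducing the fresh vertex $z$, and inserting the three new weighted edges (for a path) or the single edge $uz$ (for a cycle)—costs only constant time per object beyond the traversal already accounted for, hence $O(n+m)$ in total.

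Finally I would argue that one sweep suffices for exhaustiveness. The crucial point is that reducing one \maxpath{} or \pendcycle{} does not lower the degree of any endpoint below three, nor create new degree-two vertices elsewhere: in \cref{rule:max-path} the endpoints $u,v$ retain all their external edges and the fresh vertex $z$ has degree two but is immediately resolved by computing the path matching—or, more cleanly, one observes that the newly created gadget is itself a short \maxpath{}/edge structure whose matching has already been accounted for, so no endpoint's degree changes. I expect the main obstacle to be exactly this non-interference argument: one must verify that the identification of all objects can be done up front and that processing them in any order leaves the remaining objects intact, so that no second pass is needed. Once the disjointness of the interiors and the degree-preservation at endpoints are established, linearity follows by summing the per-object costs, which telescopes to $O(n+m)$.
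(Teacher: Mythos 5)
Your algorithm follows the paper's proof almost step for step: collect all \maxpaths{} and \pendcycles{} in one linear pass, compute the (at most four) required matchings per object via \cref{obs:path-cycle-lin-time} (the interiors are pairwise disjoint, so this totals $O(n)$), perform the local surgery, and then argue that a single sweep over the pre-collected objects is exhaustive. The gap sits exactly where you yourself flagged the main obstacle: the non-interference claim. You assert that reducing an object ``does not lower the degree of any endpoint below three, nor create new degree-two vertices elsewhere,'' but you verify this only for \cref{rule:max-path}; for \cref{rule:pend-cycle} it is simply false. The anchor~$u$ of a \pendcycle{} has its \emph{two} cycle edges replaced by the \emph{single} edge~$uz$, so its degree always drops by one. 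Concretely, let $C$ be a triangle on $\{u,a,b\}$ and attach to $u$ a path $upqw$ whose inner vertices $p,q$ have degree two and whose endpoint $w$ has degree at least three. Up front you collect the \pendcycle{}~$C$ and the \maxpath{}~$upqw$. Applying \cref{rule:pend-cycle} to~$C$ leaves $u$ with degree two, so the recorded path $upqw$ is no longer a \maxpath{} (the rule's precondition now fails for it), and instead $zupqw$ is a new, longer \maxpath{} that your collection never saw. A symmetric failure occurs for \cref{rule:max-path} when an endpoint~$u$ has degree one (which \cref{def:maxpath} permits): afterwards $u$ and $z$ both have degree two and the edge $uv$ exists, i.e., the triangle $uzv$ is a newly created \pendcycle{} with anchor~$v$. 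In both cases one sweep over the pre-collected objects does not exhaustively apply the rules, so your concluding telescoping argument does not go through as stated.

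For comparison, the paper's proof argues the invariant differently: it claims that applying the two rules creates no new \maxpaths{} or \pendcycles{} at all, justified by the remark that a \maxpath{} needs at least two degree-two inner vertices while the gadget contributes only the single degree-two vertex~$z$. That is the right statement to aim for, and it holds whenever \maxpath{} endpoints have degree at least three and \pendcycle{} anchors have degree at least four---but it, too, is silent about the boundary cases above (degree-three anchors, degree-one endpoints), so the cascading phenomenon must be addressed one way or another. To repair your proof you can either (i) invoke the lemma only in the setting of \cref{thm:w-kernel}, where \cref{rule:deg-one-weighted} has already eliminated degree-one vertices, and additionally handle degree-three anchors, or (ii) allow cascades and charge their cost: every application of either rule removes strictly more vertices than it adds, costs time linear in what it removes, and any newly arising \maxpath{}/\pendcycle{} is detectable in $O(1)$ amortized time by continuing the walk from the fresh vertex $z$ and its neighbors, so the total work still telescopes to $O(n+m)$. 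What cannot stand is the degree-preservation statement you currently rely on.
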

\begin{proof} 
	First, we collect in~$O(n+m)$ time all \maxpaths{} and all \pendcycles{} \cite[Lemma 3]{BDKNN20}.
	Given a \maxpath{} or a \pendcycle{} on~$\ell$ vertices due to \cref{obs:path-cycle-lin-time} one can compute the necessary maximum-weight matchings (at most four) in~$O(\ell)$ time.
	Moreover, replacing the \maxpath{} or the \pendcycle{} by the respective structure is doable in~$O(\ell)$ time.
	Applying \cref{rule:max-path,rule:pend-cycle} does not create new \maxpaths{} (recall that a \maxpath{} needs at least two vertices of degree two) or \pendcycles{}.
	Thus, as all \maxpaths{} and \pendcycles{} combined contain at most~$n$ vertices, \cref{rule:max-path,rule:pend-cycle} can be exhaustively applied in $O(n+m)$ time.
\end{proof}

Each of \cref{rule:deg-zero-weighted,rule:deg-one-weighted,rule:max-path} can be exhaustively applied in linear time; however, we do not know whether all these data reduction rules together can be exhaustively applied in linear time.
Note that after applying \cref{rule:pend-cycle} \cref{rule:deg-one-weighted} might become applicable. 
For our problem kernel below it suffices to apply \cref{rule:deg-zero-weighted,rule:deg-one-weighted,rule:max-path} 
in a specific order (using \cref{lem:deg-one-weighted-time,lem:deg-two-lin-time-weighted}).
Note that we might output a problem kernel where \cref{rule:deg-zero-weighted,rule:deg-one-weighted} are applicable.
In our experimental part it turned out that it is beneficial to apply the rules exhaustively (in superlinear time)
to reduce the input graph as much as possible.

\begin{theorem}
	\label{thm:w-kernel}
	\WMatch admits a linear-time computable $13k$-vertex and $17k$-edge  kernel with respect to the parameter feedback edge number~$k$. %
\end{theorem}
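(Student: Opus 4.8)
The plan is to mirror the structure of the unweighted kernel proof (\cref{thm:fes-lin-kernel}) but account for the fact that our weighted degree-two rule (\cref{rule:max-path}) contracts entire \maxpaths{} rather than single degree-two vertices, and crucially that it \emph{introduces a fresh vertex~$z$ per \maxpath{}} together with possibly a new edge~$uv$. First I would fix the order of application dictated by the available linear-time subroutines: apply \cref{rule:deg-zero-weighted,rule:deg-one-weighted} exhaustively (\cref{lem:deg-one-weighted-time}), then apply \cref{rule:max-path,rule:pend-cycle} exhaustively (\cref{lem:deg-two-lin-time-weighted}). Correctness and the linear running time then follow immediately from the cited lemmas; the entire work is the size bound. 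Call the resulting graph~$G'$, and let~$k$ be the feedback edge number of the input, which does not increase under taking subgraphs/contractions in the relevant sense — I would argue the feedback edge number of the graph on which \cref{rule:max-path,rule:pend-cycle} operate is still at most~$k$.

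Next I would bound the structure after the degree-$\le 1$ rules, exactly as in \cref{thm:fes-lin-kernel}: using the ``cut the feedback edges into degree-one stubs'' forest argument, the number of vertices of degree at least three is at most~$2k-1$. The new ingredient is the accounting for \maxpaths{} and \pendcycles{}. After exhaustively applying \cref{rule:max-path}, every \maxpath{} collapses to its two endpoints~$u,v$ plus one fresh vertex~$z$ and the triangle~$uvz$; after \cref{rule:pend-cycle}, every \pendcycle{} collapses to its one high-degree vertex~$u$ plus one fresh vertex~$z$ and the edge~$uz$. So the surviving vertices split into (i)~original vertices of degree~$\ge 3$, of which there are at most~$2k-1$, and (ii)~the fresh~$z$-vertices, one per reduced \maxpath{} or \pendcycle{}. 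The heart of the argument is therefore to bound the number of \maxpaths{} and \pendcycles{}.

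The main obstacle — and the step I expect to require the most care — is bounding the number of \maxpaths{} and \pendcycles{} in terms of~$k$. I would again use the forest~$G^{(1)}_F$ obtained by cutting the~$\le k$ feedback edges into stubs: \maxpaths{} correspond to the (at most~$2k-1$ internal) high-degree vertices joined by degree-two chains, so the number of \maxpaths{} is bounded by the number of edges in the ``topological'' multigraph on the high-degree vertices together with the feedback edges, and likewise each \pendcycle{} consumes a feedback edge (a cycle forces a feedback edge). Concretely, I would argue that the number of degree-two chains (\maxpaths{}) plus \pendcycles{} is~$O(k)$ by charging each to an edge of a minimum feedback edge set or to an incidence at a high-degree vertex; a clean way is to observe that contracting each maximal degree-two chain and deleting each \pendcycle{} yields a multigraph with at most~$2k-1$ vertices all of degree~$\ge 3$, which by handshaking has at most~$3k$ edges, and each such edge/loop is charged exactly one \maxpath{}/\pendcycle{} and hence one fresh~$z$-vertex. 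Combining: at most~$2k-1$ high-degree vertices, at most~$3k$ fresh~$z$-vertices, plus possibly a bounded number of leftover degree-$\le 1$ vertices (which the chosen application order may leave behind, but each is incident to a feedback-edge endpoint and so is~$O(k)$ many), giving the claimed~$7k$ vertices; the edge count~$9k$ then follows by summing the original surviving feedback and forest edges with the~$\le 3$ new edges ($uz$, $vz$, $uv$) contributed per \maxpath{}. I would be most careful that the constants actually add up to~$7k$ and~$9k$ under this charging, and that the fresh~$z$-vertices and the added~$uv$-edges are not themselves re-triggering rules in a way that changes the count — which is precisely why the theorem only promises the kernel after this fixed, finite application order rather than after truly exhaustive simultaneous reduction.
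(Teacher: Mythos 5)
Your overall plan is the paper's: the same order of rules, correctness and linear running time from \cref{lem:deg-one-weighted-time,lem:deg-two-lin-time-weighted}, and a stub-forest/feedback-edge charging for the size bound. Where you genuinely diverge is the key counting step: the paper bounds the number of \maxpaths{} after \cref{rule:deg-one-weighted} by~$k$ (citing \cite[Lemma~5]{BDKNN18}) and the number of \pendcycles{} by~$k$ (each consumes a distinct feedback edge), and then re-runs the stub argument of \cref{thm:fes-lin-kernel} on the reduced graph, whose feedback edge set has grown to at most~$2k$; you instead contract the degree-two chains into a multigraph~$H$ on the at most~$2k-1$ vertices of degree at least three. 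That self-contained route can be made to work, but two of your steps fail as written.

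First, ``by handshaking $H$ has at most $3k$ edges'' is backwards: handshaking with minimum degree three yields the \emph{lower} bound $|E(H)| \ge \tfrac{3}{2}|V(H)|$ and no upper bound at all (parallel edges between two degree-$\ge 3$ vertices are unbounded). The upper bound must come from the cycle rank: $H$ arises from a subgraph of~$G$ by contractions, so its feedback edge number is still at most~$k$, hence $|E(H)| \le |V(H)| - 1 + k \le 3k-2$. Second, your decomposition of the surviving vertices into ``original degree-$\ge 3$ vertices plus fresh $z$-vertices'' is not exhaustive: \cref{def:maxpath} requires $\ell \ge 3$, i.e.\ at least \emph{two} internal degree-two vertices, so a chain with exactly one internal degree-two vertex is not a \maxpath{}, is untouched by \cref{rule:max-path}, and survives into the kernel. (Example: the $1$-subdivision of~$K_4$ is fully reduced, has feedback edge number~$3$, and contains six degree-two vertices, none of which is a~$z$.) Such vertices can number up to roughly~$3k$, so your vertex and edge counts are incomplete as stated. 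Both gaps are repaired by one charging scheme: every edge of~$H$ (a direct edge, a one-vertex chain, a \maxpath{}, or a \pendcycle{} loop) leaves at most one ``middle'' vertex (a subdivision vertex or a~$z$) and at most three edges ($uz$, $vz$, $uv$) in the kernel, giving at most $(2k-1)+(3k-2) \le 5k$ vertices and at most $3(3k-2) \le 9k$ edges, i.e.\ within the claimed bounds. Finally, do not forget the paper's initial normalization: components that are bare cycles or paths contain no degree-$\ge 3$ vertex for your charging to attach to, and must be solved separately via \cref{obs:path-cycle-lin-time} before any of this counting applies.
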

\begin{proof}
	Let~$G = (V,E)$ be the input instance and $F \subseteq E$ a feedback edge set of size at most $k$.
	Without loss of generality,
	one can assume that the input graph does not contain a cycle 
	where each vertex has degree two, 
	or a path where the endpoints have degree one and the internal vertices have degree two, 
	because otherwise such a cycle or path can be solved independently 
	in linear time (see~\cref{obs:path-cycle-lin-time}).

	The kernelization algorithm works as follows: 
	First, exhaustively apply \cref{rule:deg-zero-weighted} in~$O(n+m)$ time.
	Second, exhaustively apply \cref{rule:deg-one-weighted} in $O(n+m)$ time (see~\cref{lem:deg-one-weighted-time}). 
	Third, exhaustively apply \cref{rule:pend-cycle,rule:max-path} in $O(n+m)$ time (see \cref{lem:deg-two-lin-time-weighted}).
	Note that when applying the rules in this order, 
	the resulting graph~$\widehat G = (\widehat V, \widehat E)$ does not contain any \maxpaths, or \pendcycles.
	But for each \pendcycle{} we introduced a new degree one vertex.
	However, for each pending cycle there is at least one (distinct) edge in $F$.
	Hence, $\widehat G$ has at most $k$ degree one vertices.
	Let $V_1$ be the set of degree-one vertices in $\widehat G$.
	Moreover, after the second step of our kernelization (\cref{rule:deg-one-weighted}) the graph contains at most~$3k$ \maxpaths{}~\cite[Lemma 2]{BDKNN20}.
	Thus, a feedback edge set~$\widehat F \subseteq \widehat E$ for~$\widehat G$ of minimum size
	contains at most $4k$ edges (each application of \cref{rule:max-path} increases the feedback edge set by one).

	To analyze the size of~$\widehat G$ in terms of $k$, 
	we transform $\widehat G$ into a forest by cutting the edges in~$\widehat F$ such that 
	for each edge in~$\widehat F$ two new vertices of degree one are introduced.
	Formally, we have graph~$G'=(V',E')$, 
	where~$V' := \widehat V \cup \left\{ v_u^{uw},v_w^{uw} \mid uw \in \widehat F \right\}$ 
	and $E' := (\widehat E \setminus \widehat F) \cup \left\{ uv_u^{uw},wv_w^{uw} \mid  uw \in \widehat F \right\}$.

	Observe that $G'$ is a forest where $(V' \setminus \widehat V) \cup V_1$ are the leaves 
	and~$\widehat V \setminus V_1$ are the internal vertices.
	Hence, we have at most $9k$ leaves and thus at most $9k-1$ internal vertices of degree at least three.
	Since there are at most $3k$ internal vertices of degree two (one for each application of \cref{rule:max-path}), 
	we have $|\widehat V \setminus V_1| < 12k$ and $|\widehat V | \leq 13k$.
	Furthermore, $\widehat E \setminus \widehat F$ are edges 
	of the forest $G'[\widehat V]$.
	Hence, we have~$|\widehat E| = |\widehat E \setminus \widehat F| + |\widehat F| < 17k$.
\end{proof}

\section{Experimental Evaluation}
\label{sec:experiments}

In this section, we provide an experimental evaluation of the presented 
data reduction rules on real-world graphs ranging from a few thousand vertices and edges 
to a few million vertices and edges.
We analyze the effectiveness and efficiency of the kernelization as well as the effect on the subsequently used state-of-the-art solvers of \citet{HSt17,KP98}, and \citet{Kol09}.

In \cref{sec:setup}, we give details about our test scenario.
Then we first focus in \cref{sec:eval-kernel} on the evaluation of \cref{rule:crown,rule:deg-zero-one-vertices,rule:deg-two-vertices,rule:deg-one-weighted,rule:deg-zero-weighted,rule:pend-cycle,rule:max-path} in terms of running time needed to apply them and size of resulting instances.
In \cref{sec:eval-time-unweighted}, we then analyze the effect of applying \cref{rule:crown,rule:deg-zero-one-vertices,rule:deg-two-vertices} in combination with a solver for \CMatch.
Afterwards, in \cref{sec:eval-time-weighted}, we analyze the effect of applying \cref{rule:deg-one-weighted,rule:deg-zero-weighted,rule:pend-cycle,rule:max-path} in combination with a solver for \WMatch.

\subsection{Setup and Implementation Details}
\label{sec:setup}
Our program is written in C++14 and the source code is available from \url{https://git.tu-berlin.de/akt-public/matching-data-reductions.git}.
One can replicate all experiments by following the manual provided with the source code.
We ran all our experiments on an Intel(R) Xeon(R) CPU E5-1620 3.60\,GHz machine with 64\,GB main memory 
under the Debian GNU/Linux 7.0 operating system, 
where we compiled the program (including the solvers of \cite{Kol09,KP98}) with GCC~7.3.0.
For the solver of \citet{HSt17} we used Python 2.7.15rc1.

\paragraph{Data set.}
All tested graphs are from the established SNAP~\cite{snap} data set with a time limit of one hour per instance.
See \Cref{tab:short-list} for a sample list of graphs with their respective numbers of vertices and edges. 
\begin{table}
\def\ModRows{8}
	\caption{A selection of our test graphs from SNAP~\cite{snap} with their respective size.}
	\label{tab:short-list}
	\begin{center}
	\pgfplotstabletypeset[
	columns={filename,misc_n,misc_m},
		columns/filename/.style={string type,column name=Graph,column type = {r}},
		columns/misc_n/.style={column name=$n$,precision=1,column type = {r}},
		columns/misc_m/.style={column name=$m$,precision=1,column type = {r}},
		every head row/.style ={before row=\toprule, after row=\midrule},
		every last row/.style ={after row=\bottomrule},
		row predicate/.code={%
			\pgfmathparse{int(mod(#1,\ModRows)}%
			\ifnum\pgfmathresult=1\relax%
			\else\pgfplotstableuserowfalse%
			\fi%
	}, col sep = comma] {\resultsAllTab}
	\pgfplotstabletypeset[
	columns={filename,misc_n,misc_m},
		columns/filename/.style={string type,column name=Graph,column type = {r}},
		columns/misc_n/.style={column name=$n$,precision=1,column type = {r}},
		columns/misc_m/.style={column name=$m$,precision=1,column type = {r}},
		every head row/.style ={before row=\toprule, after row=\midrule},
		every last row/.style ={after row=\bottomrule},
		row predicate/.code={%
			\pgfmathparse{int(mod(#1,\ModRows)}%
			\ifnum\pgfmathresult=3\relax%
			\else\pgfplotstableuserowfalse%
			\fi%
	}, col sep = comma] {\resultsAllTab}
	\end{center}
\end{table}
The full list is given in \cref{tab:unweighted} in the Appendix.
The weighted graphs are generated from the unweighted graphs by adding edge-weights between~1 and~1000 chosen independently and uniformly at random.

\paragraph{Implementation details of our kernelization algorithms.}
We implemented kernelization algorithms for the unweighted and weighted case.
The first kernelization is for \CMatch{}, which exhaustively applies \cref{rule:deg-zero-one-vertices,rule:deg-two-vertices}.
Our implementation here is rather simplistic in the sense that it maintains a list of degree-one and degree-two vertices which are processed one after the other (in a straightforward manner). 
We apply the rules exhaustively although this gives in theory a super-linear running time.
Note that one can (theoretically) improve our implementation of \cref{rule:deg-two-vertices} by a linear-time algorithm of \citet{BK09a}.
Very recently, \citet{KLPU20} provided a fine-tuned algorithm that exhaustively applies \cref{rule:deg-zero-one-vertices,rule:deg-two-vertices} on bipartite graphs roughly three times faster than our naive implementation; their general approach should be also applicable for general graphs.
However, our naive (super-linear time) implementation for exhaustively applying \cref{rule:deg-zero-one-vertices,rule:deg-two-vertices} was at least two times faster than reading and parsing the input graph and at least three times faster than the fastest implementation for finding maximum-cardinality matchings.
Thus, applying \cref{rule:deg-zero-one-vertices,rule:deg-two-vertices} was not a bottleneck in our implementation and we did not optimize it further.

The second kernelization is also \CMatch{} and it exhaustively applies \cref{rule:crown}.
To this end, we used the algorithm described by \citet{IOY14}.
The main steps of this algorithm are:
\begin{enumerate}
	\item compute a maximum-cardinality matching in a given bipartite graph~$\overline{G}$ (to compute the initial LP-solution; see \cref{ssec:crowns-LP}) and \label{step:bip-match} 
	\item determine the topological ordering of the DAG formed by the strongly connected components of a given digraph~$\overline{D}$. \label{step:topOr-DAG-SCC}
	
		(The underlying undirected graph of~$\overline{D}$ is~$\overline{G}$; the matching computed in Step~\ref{step:bip-match} determines how the edges in~$\overline{G}$ are directed in~$\overline{D}$. 
		Each crown in the input graph~$G$ corresponds to a strongly connected component in~$\overline{D}$; refer to \citet{IOY14} for details.)
\end{enumerate}
Both steps can be solved using classic algorithms.
We implemented for Step~\ref{step:bip-match} the classic $O(\sqrt{n}m)$-time algorithm of \citet{HK73} for finding a bipartite matching in~$\overline{G}$.
For Step \ref{step:topOr-DAG-SCC}, we implemented Kosaraju's algorithm~\cite{AHU83} for finding the strongly connected components of~$\overline{D}$ in reverse topological order.

The third kernelization is for \WMatch{}. 
We use the algorithms described in \cref{lem:deg-one-weighted-time,lem:deg-two-lin-time-weighted} to apply \cref{rule:pend-cycle,rule:max-path,rule:deg-one-weighted}.
Deviating from the algorithm described in \cref{thm:w-kernel}, based on empirical observations our program applies \cref{rule:pend-cycle,rule:max-path,rule:deg-one-weighted,rule:deg-zero-weighted} as long as possible.
Hence, the kernelization does not run in linear time but further shrinks the input graph.

\newcommand{\solverKPE}{\texttt{KP-Edm}\xspace}
\newcommand{\solverKE}{\texttt{Kol-Edm}\xspace}
\newcommand{\solverKEW}{\texttt{Kol-Edm-W}\xspace}
\newcommand{\solverHSMV}{\texttt{HS-MV}\xspace}

\begin{table}
	\caption{Set of solvers we used in our experiments.
		Here, ``\textsc{MM}$\leadsto$\textsc{W-PM}'' is the folklore reduction from \CMatch to
		\textsc{Minimum Weighted Perfect Matching} and ``\textsc{W-M}$\leadsto$\textsc{W-PM}'' is
		the folklore reduction from \WMatch to \textsc{Minimum Weighted Perfect Matching}.
	}
	\centering
	\begin{tabular}{l r r r}  \toprule
		
		acronym 		& implementation by	& core algorithm 										& language \\ \midrule
		\solverKPE&\citet{KP98}&\citet{Edm65} 										& C\\
		\solverKE& \citet{Kol09} 	& \citet{Edm65,Edm65-2},\textsc{MM}$\leadsto$\textsc{W-PM} 	& C++\\
		\solverKEW& \citet{Kol09} 	& \citet{Edm65,Edm65-2},\textsc{W-M}$\leadsto$\textsc{W-PM}	& C++\\
		\solverHSMV& \citet{HSt17} 	& \citet{MV80} 										& Python 2.7\\

		\bottomrule
	\end{tabular}
	\label{tab:solvers}
\end{table}

\paragraph{Used solvers.}
To test the effect of our data reduction rules, we compare the running time of a solver on an input instance 
against the running time of our kernelization procedure plus the running time of the same solver on the output 
of our kernelization procedure.
We refer to \cref{tab:solvers} for an overview of the tested solvers.
For the data reductions rules for \WMatch we used the solver of \citet{Kol09} (implemented in C++) which is a fine-tuned version 
of Edmonds' algorithm for \textsc{Minimum-Weight Perfect Matching} \cite{Edm65,Edm65-2}. 
Note that the solver of \citet{Kol09} finds perfect matchings of minimum weight. 
We thus use this solver on graphs obtained from applying the folklore reduction from \WMatch (and thus also from \CMatch) to \textsc{Minimum-Weight Perfect Matching}. 
Applied on a graph~$G$ with~$n$ vertices and~$m$ edges, the reduction adds a copy of~$G$ and adds a weight-zero edge between each vertex in~$G$ and its added copy.
Thus, the resulting graph can be computed in linear time and has~$2n$ vertices and~$2m + n$ edges.
To the best of our knowledge, the solver of \citet{Kol09} plus the folklore reduction yields the currently fastest algorithm for \WMatch.
For the rest of this paper, \solverKEW denotes the solver of \citet{Kol09} plus the folklore reduction for \WMatch.

To test the data reduction rules for \CMatch, we used three different solvers.
First, we used the solver (denoted by \solverKE) of \citet{Kol09} plus the folklore reduction for \CMatch, 
which we get basically for free from the weighted case.
Second, we used the solver (denoted by \solverKPE) of \citet{KP98} (implemented in C) which is a fine-tuned version of Edmonds' algorithm \cite{Edm65}.
To the best of our knowledge this is in practice still the fastest algorithm.
In our experiments, \solverKPE was clearly the fastest solver.
Third, we used the solver (denoted by \solverHSMV) of \citet{HSt17} (implemented in Python).\footnote{In a few cases \solverHSMV returned an edge set that is not a matching. However, a maximum-cardinality matching was easily recoverable from the returned edge set by removing one or two edges. The authors (Huang and Stein) and are working on a fix for this issue.}
This is the only implementation of the Micali-Vazirani algorithm \cite{MV80} we are aware of.
The Micali-Vazirani algorithm has currently the best asymptotic worst-case running time. 
However, in our tests \solverHSMV (Python) was clearly outperformed by \solverKPE (C).%

\subsection{Efficiency and Effectiveness of our Data Reduction Rules}
\label{sec:eval-kernel}

\paragraph{Effectiveness of our rules.}
The effectiveness of our kernelization algorithms is displayed in \Cref{fig:kernel-size2}: 
\begin{figure}[t!]
	\begin{tikzpicture}[scale=1.0]
		\begin{axis}[
				width=\textwidth,
				height=0.42\textwidth,
				ylabel={\% [$100 \% = n + m$]},
				legend style = {
						at={(1, 1.02)},
						anchor={south east},
						font = \small
				},
				grid,
				legend cell align = left,
				legend columns = 5,
				ytick={0,25,50,75,100,200},
				xtick=data,
				xticklabels from table={\namesTable}{filename}
				,tick label style={font=\footnotesize}  
				,xmax=2+\TotalRowsResultsAll
				,xmin=0
				,x tick label style={rotate=60,anchor=east}
				]

			\addplot[blue,mark=triangle] table 
				[y expr=100*((\thisrow{k_crown_result_n_part_2}+\thisrow{k_crown_result_m_part_2}) / (\thisrow{misc_n} + \thisrow{misc_m})), x={id},col sep = comma] {\resultsAllTab};
			\addlegendentry{\Cref{rule:crown,rule:deg-zero-one-vertices,rule:deg-two-vertices}}

			\addplot[red,mark=o] table 
				[y expr=100*((\thisrow{k2_n_part_2}+\thisrow{k2_m_part_2}) / (\thisrow{misc_n} + \thisrow{misc_m})), x={id},col sep = comma] {\resultsAllTab};
			\addlegendentry{\Cref{rule:crown}}

			\addplot[blue!50!red,mark=triangle*] table 
				[y expr=((\thisrowno{7} + \thisrowno{8}) / (\thisrowno{1} + \thisrowno{2})) * 100, x={id}] {\dataUnweightedPerm};
			\addlegendentry{\Cref{rule:deg-zero-one-vertices,rule:deg-two-vertices}}

		\end{axis}
	\end{tikzpicture}
	\begin{tikzpicture}[scale=1.0]
		\begin{axis}[
				width=\textwidth,
				height=0.42\textwidth,
				ylabel={\% [$100 \% = n + m$]},
				legend style = {
						at={(1, 1.02)},
						anchor={south east},
						font = \small
				},
				grid,
				legend cell align = left,
				legend columns = 5,
				ytick={0,25,50,75,100,200},
				xtick=data,
				xticklabels from table={\namesTable}{filename}
				,tick label style={font=\footnotesize}  
				,xmax=2+\TotalRowsResultsAll
				,xmin=0
				,x tick label style={rotate=60,anchor=east}
				]

			\addplot[blue,mark=triangle] table 
				[y expr=100*((\thisrow{k_crown_result_n_part_2}+\thisrow{k_crown_result_m_part_2}) / (\thisrow{misc_n} + \thisrow{misc_m})), x={id},col sep = comma] {\resultsAllTab};
			\addlegendentry{\Cref{rule:crown,rule:deg-zero-one-vertices,rule:deg-two-vertices}}

			\addplot[orange,mark=square,discard if={kerneltime}{}] table 
				[x={id}, y expr=(100 * (\thisrow{kernelvertices}+\thisrow{kerneledges}) / (\thisrow{vertices}+\thisrow{edges})), col sep = comma]{\dataHighWeighted};
			\addlegendentry{\Cref{rule:deg-one-weighted,rule:deg-zero-weighted,rule:pend-cycle,rule:max-path}}

			\addplot[black,mark=x] table 
				[y expr=((\thisrow{k1_rule_0}+\thisrow{k1_rule_1}) / (\thisrow{k1_rule_0}+\thisrow{k1_rule_1}+\thisrow{k1_rule_2})) * 100, x={id}, col sep = comma] {\resultsAllTab};
			\addlegendentry{\# of app.\@ \Cref{rule:deg-zero-one-vertices}}
		\end{axis}
	\end{tikzpicture}
	\vspace{-10pt}
	\caption{Kernel sizes (in \%; 100\,\% = $n+m$ of input graph) for several subsets of our data reduction rules.
	We tested: all unweighted rules (\Cref{rule:crown,rule:deg-zero-one-vertices,rule:deg-two-vertices}), only the crown rule (\cref{rule:crown}), only the unweighted rules for low degree vertices (\Cref{rule:deg-zero-one-vertices,rule:deg-two-vertices}), and all weighted rules (\Cref{rule:deg-one-weighted,rule:deg-zero-weighted,rule:pend-cycle,rule:max-path}).
	The crosses show the number of applications of \cref{rule:deg-zero-one-vertices} (in \%; 100\,\% = number of applications of \Cref{rule:deg-zero-one-vertices,rule:deg-two-vertices}).  
	The graphs are ordered in both plots by relative size of the kernel after applying \cref{rule:crown,rule:deg-zero-one-vertices,rule:deg-two-vertices}.
	}	
	\label{fig:kernel-size2}
\end{figure}
Few graphs remained almost unchanged while other graphs were essentially solved by the kernelization algorithm.

For the unweighted case the situation is as follows:
On the~44 tested graphs, on average 72\% of the vertices and edges are removed by the kernelization; the median is 82\%.
The least amenable graph was \texttt{amazon0302} with a size reduction of only 7\%. 
In contrast, on 16 out of the~44~graphs the kernelization algorithm reduces more than 99\% of the vertices and edges.
This is mostly due to \cref{rule:deg-zero-one-vertices,rule:deg-two-vertices}: 
if \cref{rule:deg-zero-one-vertices,rule:deg-two-vertices} are exhaustively applied,
then an application of the crown rule (\cref{rule:crown}) further
reduces the kernel size only on four instance substantially.
Moreover, a closer look on how often the 
degree-based rules \cref{rule:deg-zero-one-vertices,rule:deg-two-vertices} are applied
reveals that on the majority of our tested graphs \cref{rule:deg-zero-one-vertices} 
is applied twice as much as \cref{rule:deg-two-vertices}.

While the data reduction rules are less effective in the weighted case (see \cref{fig:kernel-size2}), 
they reduce the graphs on average still by 51\% with the median value being a bit lower with 48\%.
The least amenable graph is again \texttt{amazon0302} with a size reduction of only 3\%. 

\paragraph{Efficiency of our rules.}
In \cref{fig:crown-is-slow}, we compare the running times of the \solverKPE solver (the fastest state-of-the-art solver on our instances for the unweighted case) when applied directly on the input graph together with running times of our data reduction rules for the unweighted case (\Cref{rule:deg-zero-one-vertices,rule:deg-two-vertices,rule:crown}).

\pgfplotstablesort[sort key=k_degree_result_time]{\sortedTable}{\namesTable}

\begin{figure}[t!]
	\begin{tikzpicture}[scale=1.0]
		\begin{axis}[
				width=\textwidth,
				height=0.45\textwidth,
				ymode=log,
				ylabel={time in seconds},
				legend style = {
						at={(1, 1.02)},
						anchor={south east},
						font = \small
				},
				grid,
				legend cell align = left,
				legend columns = 6,
				ytick={0.001,0.01,0.1,1,10,100,1000,10000,100000,1000000},
				xtick=data,
				xticklabels from table={\sortedTable}{filename}
				,tick label style={font=\footnotesize}  
				,xmax=2+\TotalRowsResultsAll
				,xmin=0
				,x tick label style={rotate=60,anchor=east}
				]
			\addplot[blue,mark=triangle] table [y expr=(\thisrow{k_degree_result_time} + \thisrow{k_crown_result_lp_time} + \thisrow{k_crown_result_time} + \zeroOffset)/1000, x expr=\coordindex + 1,col sep = comma] {\sortedTable};
			\addlegendentry{\Cref{rule:deg-zero-one-vertices,rule:deg-two-vertices,rule:crown}}

			\addplot[red,mark=o] table [y expr=((\thisrow{k2_lp_time} + \thisrow{k2_time} + \zeroOffset) / 1000), x expr=\coordindex + 1,col sep = comma] {\sortedTable};
			\addlegendentry{\Cref{rule:crown}}

			\addplot[blue!50!red,mark=triangle*] table [y expr=(\thisrow{k_degree_result_time} +  \zeroOffset)/1000, x expr=\coordindex + 1,col sep = comma] {\sortedTable};
			\addlegendentry{\Cref{rule:deg-zero-one-vertices,rule:deg-two-vertices}}

			\addplot[green!50!black,mark=*,only marks,filter if empty={c_time}{\sortedTable}
				] table [y expr=(\thisrow{c_time} + \zeroOffset)/1000, x expr=\coordindex + 1] {\sortedTable};
			\addlegendentry{\solverKPE}

		\end{axis}
	\end{tikzpicture}
	\vspace{-10pt}
	\caption{Running time of various data reduction rules and the solvers (unweighted case; without kernelization). To all values \zeroOffset{}\,millisecond was added to display 0-values. The graphs are ordered by running time for \cref{rule:deg-zero-one-vertices,rule:deg-two-vertices}.}
	\label{fig:crown-is-slow}
\end{figure}

On some graphs it does take more time to apply \cref{rule:crown,rule:deg-zero-one-vertices,rule:deg-two-vertices} than executing the \solverKPE solver directly.
But if only \cref{rule:deg-zero-one-vertices,rule:deg-two-vertices} are applied, then the running time of the kernelization stays far below the running time of the \solverKPE solver while the resulting kernel size is mostly the same (see \cref{fig:kernel-size2}).
Applying \cref{rule:crown} without \cref{rule:deg-zero-one-vertices,rule:deg-two-vertices} is clearly not a good idea as of \solverKPE is faster in finding a maximum-cardinality matching.

In \cref{fig:weighted-times}, we compare the running times of the \solverKEW solver (the state-of-the-art solver for the weighted case) when applied directly on the input graph together with running times of our data reduction rules for the weighted case (\Cref{rule:max-path,rule:pend-cycle,rule:deg-one-weighted,rule:deg-zero-weighted}).

\pgfplotstablesort[sort key=kerneltime,col sep = comma]{\sortedWeightedTable}{\dataHighWeighted}

\begin{figure}[t]
	\begin{tikzpicture}[scale=1.0]
		\begin{axis}[
				width=\textwidth,
				height=0.45\textwidth,
				ymode=log,
				ylabel={time in seconds},
				legend style = {
						at={(1, 1.02)},
						anchor={south east},
						font = \small
				},
				grid,
				legend cell align = left,
				legend columns = 6,
				ytick={0.001,0.01,0.1,1,10,100,1000,10000},
				xtick=data,
				xticklabels from table={\sortedWeightedTable}{graphname}
				,tick label style={font=\footnotesize}  
				,xmax=2+\TotalRowsResultsAll
				,xmin=0
				,x tick label style={rotate=60,anchor=east}
				]

 			\addplot[orange,mark=square,discard if={kerneltime}{}] table 
				[x expr=\coordindex + 1,
				y expr=(\thisrowno{6}+\zeroOffset)*0.001]{\sortedWeightedTable};
			\addlegendentry{\Cref{rule:max-path,rule:pend-cycle,rule:deg-one-weighted,rule:deg-zero-weighted}}

 			\addplot[only marks,red,mark=diamond*,
				filter if empty={mmtime}{\sortedWeightedTable}
				] table 
				[x expr=\coordindex + 1, y expr=(\thisrow{mmtime}+\zeroOffset)*0.001)]{\sortedWeightedTable};
			\addlegendentry{\solverKEW}

		\end{axis}
	\end{tikzpicture}
	\vspace{-10pt}
	\caption{Running time of applying \Cref{rule:max-path,rule:pend-cycle,rule:deg-one-weighted,rule:deg-zero-weighted} and the \solverKEW solver on weighted graphs (without kernelization). To all values \zeroOffset{}\,millisecond was added to display 0-values.
	The graphs are ordered by running time for \cref{rule:max-path,rule:pend-cycle,rule:deg-one-weighted,rule:deg-zero-weighted} (the ordering is slightly different from the ordering in \cref{fig:crown-is-slow}.)}
	\label{fig:weighted-times}
\end{figure}
The picture is similar to the unweighted case: 
On most graphs the data reduction rules are applied much faster than the solver (\solverKEW), but there are a few exceptions.

\paragraph{Advice on which reduction rules to apply.}
If one has to find maximum-cardinality matchings on large real-world graphs, then we advise to always apply \cref{rule:deg-zero-one-vertices,rule:deg-two-vertices} before feeding the graph to a solver.
Whether or not applying \cref{rule:crown}, however, depends on the specific type of real-world data at hand and should be tested on a few test cases: In most of our test cases the benefit paid with the higher running times was rather small.
For the weighted case, we advise to apply our data reduction rules, but maybe invest time in a more efficient implementation of the rules (our implementation of the weighted rules could probably profit from further optimizations).

\paragraph{Kernel size: theory versus practice.}
In theory, we are used to measure the effectiveness of data reduction rules in terms of 
provable upper bounds for the size of resulting graph (the kernel) in a function only depending on \emph{some} parameter.
If \cref{rule:deg-zero-one-vertices,rule:deg-two-vertices} are not applicable, then \cref{thm:fes-lin-kernel} states that a resulting graph has at most~$2k$~vertices and at most~$3k$~edges, where $k$ is the \paramEnv{feedback edge number}.
In \cref{fig:kernel-vs-input-vs-fes}, we measure the gap between the actual size of the kernel and the proven upper bound from \cref{thm:fes-lin-kernel}.
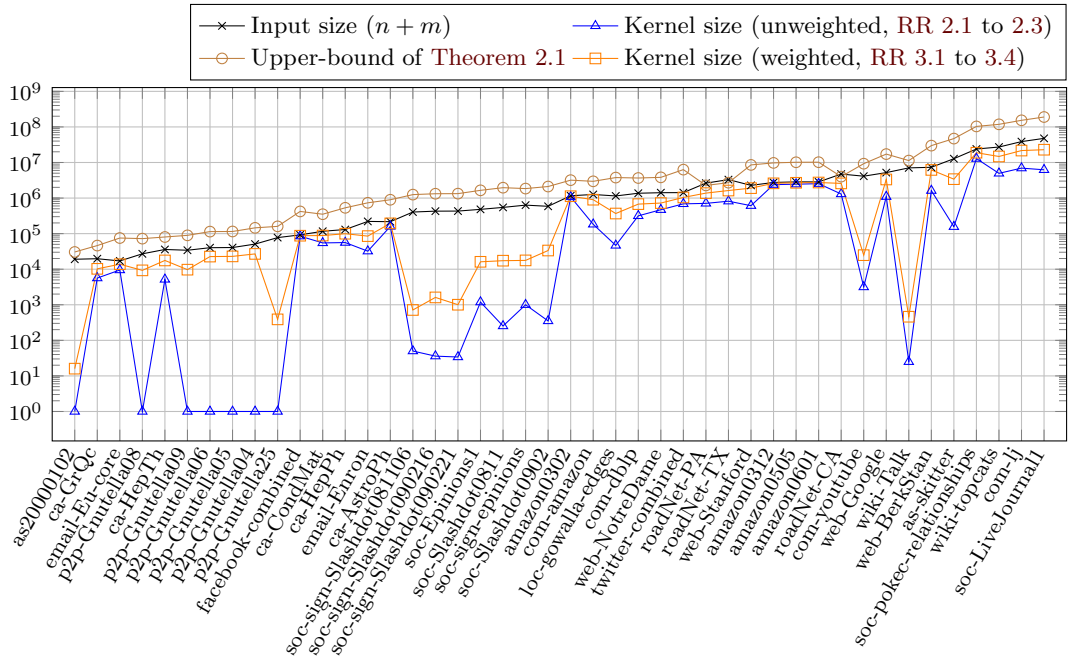
\begin{figure}[t]
	\begin{tikzpicture}[scale=1.0]
		\begin{axis}[
				width=\textwidth,
				height=0.42\textwidth,
				ymode=log,
				legend style = {
						at={(1, 1.02)},
						anchor={south east},
						font = \small
				},
				grid,
				legend cell align = left,
				legend columns = 2,
				ytick={1,10,100,1000,10000,100000,1000000,10000000,100000000,1000000000},
				xtick=data,
				xticklabels from table={\namesTable}{filename}
				,tick label style={font=\footnotesize}  
				,xmax=2+\TotalRowsResultsAll
				,xmin=0
				,x tick label style={rotate=60,anchor=east}
				]
			\addplot[black,mark=x] table [y expr=\thisrow{misc_n} +\thisrow{misc_m} + \zeroOffset, x={id}, col sep = comma] {\resultsAllTab};
			\addlegendentry{Input size ($n+m$)}
			\addplot[blue,mark=triangle] table [y expr=\thisrow{k_crown_result_n_part_2} + \thisrow{k_crown_result_m_part_2} + \zeroOffset, x={id}, col sep = comma] {\resultsAllTab};
			\addlegendentry{Kernel size (unweighted, \Cref{rule:crown,rule:deg-zero-one-vertices,rule:deg-two-vertices})}
			\addplot[brown,mark=o] table [y expr= 5*(\thisrow{misc_m}-\thisrow{misc_n}-1)+ \zeroOffset, x={id}, col sep = comma] {\resultsAllTab};
			\addlegendentry{Upper-bound of \Cref{thm:fes-lin-kernel}}

			\addplot[orange,mark=square] table [y expr=(\thisrowno{7} + \thisrowno{8}+\zeroOffset), x={id},col sep = comma] {\dataHighWeighted};
			\addlegendentry{Kernel size (weighted, \Cref{rule:deg-zero-weighted,rule:deg-one-weighted,rule:pend-cycle,rule:max-path})}
		\end{axis}
	\end{tikzpicture}
	\vspace{-10pt}
	\caption{Sizes and bounds (in terms of number of vertices plus edges) of various structures and kernels. 
	To all values \zeroOffset{} was added to display 0-values.
	The upper-bound from \cref{thm:crown-exhaustive-application} is not displayed as it only bounds the number of vertices (and has no non-trivial bound on the number of edges).
	The graphs are ordered by relative size of the remaining graph after the data reduction rules (as in \cref{fig:kernel-size2}).
	}
	\label{fig:kernel-vs-input-vs-fes}
\end{figure}
As a matter of fact, we can clearly observe that the upper bound shown in \cref{thm:fes-lin-kernel} 
is not suitable to explain why \cref{rule:deg-zero-one-vertices,rule:deg-two-vertices} 
perform so well in real-world graphs: in 41 of our tested 44 graphs the 
input size itself is already smaller than the guaranteed upper bound of \cref{thm:fes-lin-kernel}.
We exclude \cref{thm:w-kernel} from the discussion here, as the upper bounds given are even weaker than the ones in \cref{thm:fes-lin-kernel}.

\cref{rule:crown} and \cref{thm:crown-exhaustive-application} are better suitable to explain the results:
As can be seen in \Cref{fig:core-vs-size-percentages}, this $2\tau$-bound on the number of vertices is not optimal.
\begin{figure}[t]
	\begin{tikzpicture}[scale=1.0]
		\begin{axis}[
				width=\textwidth,
				height=0.42\textwidth,
				ylabel={size in \%  [$100,\% = n$]},
				legend style = {
						at={(1, 1.02)},
						anchor={south east},
						font = \small
				},
				ymax=160,
				grid,
				legend cell align = left,
				legend columns = 5,
				ytick={0,50,100,150,200},
				xtick=data,
				xticklabels from table={\namesTable}{filename}
				,tick label style={font=\footnotesize}  
				,xmax=2+\TotalRowsResultsAll
				,xmin=0
				,x tick label style={rotate=60,anchor=east}
				]
			\addplot[black,mark=x] table [y expr=100*\thisrow{degeneracy_core_2}/\thisrow{misc_n}, x={id},col sep = comma] {\resultsAllTab};
			\addlegendentry{2-core}
			\addplot[orange!50!yellow,mark=diamond] table [y expr=100*\thisrow{degeneracy_core_3}/\thisrow{misc_n}, x={id},col sep = comma] {\resultsAllTab};
			\addlegendentry{3-core}
			\addplot[blue,mark=triangle] table [y expr=100*\thisrow{k_crown_result_n_part_2}/\thisrow{misc_n}, x={id},col sep = comma] {\resultsAllTab};
			\addlegendentry{Kernel size (\Cref{rule:crown,rule:deg-zero-one-vertices,rule:deg-two-vertices})}
			\addplot[brown,mark=o,discard if={misc_vc_size}{}] table [y expr=200*(\thisrow{misc_vc_size} + 0.0001)/\thisrow{misc_n}, x={id}, col sep = comma] {\resultsAllTab};
			\addlegendentry{$2\cdot$vertex cover number}
		\end{axis}
	\end{tikzpicture}
	\vspace{-10pt}
	\caption{Relative sizes and theoretical upper bounds (in \%; 100\,\% = $n$) of various structures and kernels. The 2-core (3-core) of the graph resulting from iteratively removing all vertices of degree less than 2 (less than 3).
	The vertex cover number was computed with an ILP-solver; on very few graphs we could not compute a minimum vertex cover in a couple of hours and the corresponding graph is skipped in the corresponding plot.
	}
	\label{fig:core-vs-size-percentages}
\end{figure}
However, there is a clear similarity between the lines indicating theoretical upper bound and measured kernel size.
Moreover, on roughly~$2/3$ of the instances the worst-case upper bound~$2\tau$ is smaller than~$n$, that is, the deletion of some vertices is guaranteed.

Overall, \cref{thm:crown-exhaustive-application} seems to deliver the better theoretical explanation.
However, \cref{thm:crown-exhaustive-application} is based on \cref{rule:crown} and, as can be seen in \cref{fig:kernel-size2}, just applying \cref{rule:deg-zero-one-vertices,rule:deg-two-vertices} is almost always better than only applying \cref{rule:crown} (the exceptions are the graphs ``web-NotreDame'', ``web-Stanford'', ``web-Google'', and ``web-BerkStan'').
Thus, while \cref{thm:crown-exhaustive-application} somewhat explains the effects of \cref{rule:crown}, no explanation is provided for the good performance of \cref{rule:deg-zero-one-vertices,rule:deg-two-vertices}.

Summarizing, the current theoretical upper bounds for the kernel size need improvement. 
The most promising route 
seems to be a multivariate analysis in the sense that one should use more than one parameter in the analysis~\cite{Nie10}.
Of course, the challenging part here is finding the ``correct'' parameters.

\subsection{Running times for Maximum-Cardinality Matching}
\label{sec:eval-time-unweighted}
In this section we evaluate the effect on the running time 
of state-of-the-art solvers (see \cref{tab:solvers}) for \CMatch if \cref{rule:crown,rule:deg-zero-one-vertices,rule:deg-two-vertices}
are applied in advance.

Note that all reported running times involving \solverKE are averages over 100 runs where we randomly permute vertex indices in the input.
Although this permutation yields an isomorphic graph, we empirically observed that in the unweighted case the running time of \solverKE heavily depends on the permutation.
For example, choosing a ``good'' or a ``bad'' permutation for the same graph 
may yield speedup of factor 20 or more. 
Precise data on the spectrum of running time variation (for graphs where the time limit was not reached) are shown in \cref{fig:blossom5runningtime}.
\begin{figure}
	\begin{tikzpicture}
		\pgfplotstablegetrowsof{\boxPlotsTab}
		\pgfmathtruncatemacro\TotalRows{\pgfplotsretval-1}
		\pgfmathtruncatemacro\MaxTick{2*\pgfplotsretval + 1.5}
		\pgfplotsset{log base 10 number format code/.code={\pgfmathparse{#1 - 3}$10^{\pgfmathprintnumber{\pgfmathresult}}$}}  %
		\begin{axis}[
				width=\textwidth,
				height=0.42\textwidth,
				grid,
				ymode=log,
				boxplot/draw direction=y,
				ylabel={Time in seconds},
				xticklabels from table={\boxPlotsTab}{filename},
				x tick label as interval=true,
				x tick label style={rotate=40,anchor=north east,font=\footnotesize},
				ytick={1, 10, 100, 1000, 10000, 100000, 1000000},
				xtick={0.5,2.5,4.5,...,\MaxTick},
				xmin=-0.5,
				xmax=1+\MaxTick,
				legend columns=2,
				legend style={
					at={(1,1.02)},
					anchor={south east}
				}
			]
			\pgfplotsinvokeforeach{0,...,\TotalRows}
			{
				\addplot+[
					red,solid,thick,
					boxplot prepared from table={
						table=\boxPlotsTab,
						row=#1,
						lower whisker=blossom_result_time_min,
						upper whisker=blossom_result_time_max,
						lower quartile=blossom_result_time_lower_p,
						upper quartile=blossom_result_time_higher_p,
						median=blossom_result_time_median
					},
					boxplot prepared,
					area legend
				]
				coordinates {};

				\addplot+[
					blue,solid,thick,
					boxplot prepared from table={
						table=\boxPlotsTab,
						row=#1,
						lower whisker=k_blossom_result_time_min,
						upper whisker=k_blossom_result_time_max,
						lower quartile=k_blossom_result_time_lower_p,
						upper quartile=k_blossom_result_time_higher_p,
						median=k_blossom_result_time_median
					},
					boxplot prepared,
					area legend
				]
				coordinates {};

			}
			\addlegendimage{black}
			\addlegendentry{\solverKE (input graph) running time}

			\addlegendimage{blue}
			\addlegendentry{\solverKE (on kernel) running time}
		\end{axis}
	\end{tikzpicture}
	\vspace{-10pt}
	\caption{
		The spectrum of the \solverKE running times
		over 100 runs with random vertex permutations on unweighted graphs.
		The lower (upper) whisker is the minimum (maximum) running time.
		The solid box shows the median and the lower and upper quartile.
		For each graph we have two datasets: left (red) \solverKE on the input graph
		and right (blue) on the kernel.
		We excluded input graphs where we could not perform 100 runs in one hour.	
		To all values \zeroOffset{}\,millisecond was added to display 0-values.
	}
	\label{fig:blossom5runningtime}
\end{figure}%
The running time of all other solvers and our kernelization algorithm were only marginally affected by changing the permutation.

We noticed that the different implementations vary greatly in the time they need for parsing the input graph 
(especially in the smaller graphs \solverHSMV (Python) needs more time to parse the graph 
than \solverKPE (C) needs to find a maximum-cardinality matching).
Moreover, we mainly care about the speedup of the respective algorithms 
(when run on the kernel instead of the original input) and not about the speedup of the graph parsing.
Hence, we neglect the time to parse the input graph in all running-time measures and discussions.
Note that we do \emph{not} neglect the time the implementation needs for parsing the kernel, as this is something that needs only be done with data reduction but not without.
More precisely, for runs without data reduction, 
we report the time the particular implementation needs \emph{after} the graph was loaded; 
for runs with data reduction we report the time of our data reduction rules plus the total time of the implementation 
(including parsing the kernel).

\begin{figure}
	\def\maxValue{10000}
	\def\minValue{0.0005}
	\begin{tikzpicture}[scale=0.95]
		\begin{loglogaxis}[
					width=0.5\textwidth,
					height=0.4\textwidth,
					xlabel={With kernelization [sec]},
					ylabel={Without kernelization [sec]},
					legend style = {
									at={(1, 1.07)},
									anchor={south east},
									font = \small
					},
					legend cell align = left,
					legend columns = 2,
					ytick distance=10^1,
					,xmax=\maxValue
					,xmin=\minValue
					,ymax=\maxValue
					,ymin=\minValue
					]

				\addplot[red,mark=diamond*,only marks,discard if not={blossom_result_n_runs}{100},discard if not={k_blossom_result_n_runs}{100}] table 
				[
				x expr=(\thisrow{k_degree_result_time} + \thisrow{k_crown_result_lp_time} + \thisrow{k_crown_result_time} + \thisrow{k_blossom_result_time_average} + \zeroOffset)/1000,
				y expr=(\thisrow{blossom_running_time_without_k} + \zeroOffset)/1000, col sep = comma] {\resultsAllTab};
			\addlegendentry{\solverKE (no timeouts)}

				\addplot[blue,mark=diamond,only marks,discard if={blossom_result_n_runs}{100},discard if not={k_blossom_result_n_runs}{100}] table 
				[
				x expr=(\thisrow{k_degree_result_time} + \thisrow{k_crown_result_lp_time} + \thisrow{k_crown_result_time} + \thisrow{k_blossom_result_time_average} + \zeroOffset)/1000,
				y expr=(3600), col sep = comma] {\resultsAllTab};
			\addlegendentry{\solverKE (timeouts)}

			\addplot[color=black,domain=\minValue:\maxValue,samples=4] {x};
			\addplot +[color=black,mark=none] coordinates {(\minValue, 3600) (\maxValue, 3600)};
			\addplot +[color=black,mark=none] coordinates {(3600, \minValue) (3600, \maxValue)};

			\addplot[dashed,color=black!75,domain=\minValue:\maxValue,samples=4] {2*x};
			\addplot[dash dot,color=black!75,domain=\minValue:\maxValue,samples=4] {5*x};
			\addplot[dotted,color=black,domain=\minValue:\maxValue,samples=4] {25*x};
			\addplot[dashed,color=black!75,domain=\minValue:\maxValue,samples=4] {0.5*x};
			\addplot[dash dot,color=black!75,domain=\minValue:\maxValue,samples=4] {0.2*x};
			\addplot[dotted,color=black,domain=\minValue:\maxValue,samples=4] {0.04*x};
		\end{loglogaxis}
	\end{tikzpicture}
	\begin{tikzpicture}[scale=0.95]
		\begin{loglogaxis}[
					width=0.5\textwidth,
					height=0.4\textwidth,
					xlabel={With kernelization [sec]},
					ylabel={Without kernelization [sec]},
					legend style = {
									at={(1, 1.07)},
									anchor={south east},
									font = \small
					},
					legend cell align = left,
					legend columns = 2,
					ytick distance=10^1,
					,xmax=\maxValue
					,xmin=\minValue
					,ymax=\maxValue
					,ymin=\minValue
					]
			\addplot[green!50!black,mark=*,only marks,replace if={c_time}{}{8.18868912444}] table 
				[x expr=(\thisrow{k_degree_result_time} + \thisrow{k_crown_result_lp_time} + \thisrow{k_crown_result_time} + \thisrow{k_c_time} + \thisrow{k_c_parsing} + \zeroOffset)/1000,
				y expr=(\thisrow{c_time}+ \zeroOffset)/1000, col sep = comma] {\resultsAllTab};
				\addlegendentry{\solverKPE}

			\addplot[black,mark=x,only marks,replace if={mv_time}{}{8.18868912444}] table 
				[x expr=(\thisrow{k_degree_result_time} + \thisrow{k_crown_result_lp_time} + \thisrow{k_crown_result_time} + \zeroOffset)/1000 + \thisrow{k_mv_time} + \thisrow{k_mv_parsing},
				y expr=\thisrow{mv_time} + (\zeroOffset)/1000, col sep = comma] {\resultsAllTab};
				\addlegendentry{\solverHSMV}

			\addplot[color=black,domain=\minValue:\maxValue,samples=4] {x};
			\addplot +[color=black,mark=none] coordinates {(\minValue, 3600) (\maxValue, 3600)};
			\addplot +[color=black,mark=none] coordinates {(3600, \minValue) (3600, \maxValue)};
			\addplot[dashed,color=black!75,domain=\minValue:\maxValue,samples=4] {2*x};
			\addplot[dash dot,color=black!75,domain=\minValue:\maxValue,samples=4] {5*x};
			\addplot[dotted,color=black,domain=\minValue:\maxValue,samples=4] {25*x};
			\addplot[dashed,color=black!75,domain=\minValue:\maxValue,samples=4] {0.5*x};
			\addplot[dash dot,color=black!75,domain=\minValue:\maxValue,samples=4] {0.2*x};
			\addplot[dotted,color=black,domain=\minValue:\maxValue,samples=4] {0.04*x};
		\end{loglogaxis}
	\end{tikzpicture}
	\caption{Running time of the three state-of-the-art solvers with and without kernelization (each mark indicates one instance). 
	The inclined solid/dashed/dash dotted/dotted lines indicate a factor of 1/2/5/25 difference in the running time.
	To all values \zeroOffset{}\,millisecond was added to display 0-values.
	Timeouts are counted as 1\,h (solid vertical and horizontal line).
	For \solverKE, the timeout behavior is special due to taking the average over 100 runs (see \cref{fig:blossom5runningtime}).
	If at least one of the 100 runs took more than 1\,h, then we aborted the computation and count a timeout (although the average running time might be below 1\,h).
	Hence, we separated the plot for \solverKE (left diagram): 
	The red, filled diamonds indicate instances without any timeout (with or without kernelization).
	The blue, non-filled diamonds indicate instances where without kernelization there was at least one timeout in the 100 runs and with kernelization there was no timeout.
	Thus, the true values for the blue, non-filled diamonds might be below 1\,h (hence the separation).
	(There was no instance where a timeout occurred with kernelization but not without kernelization.)
	}
	\label{fig:new-running-times}
\end{figure}
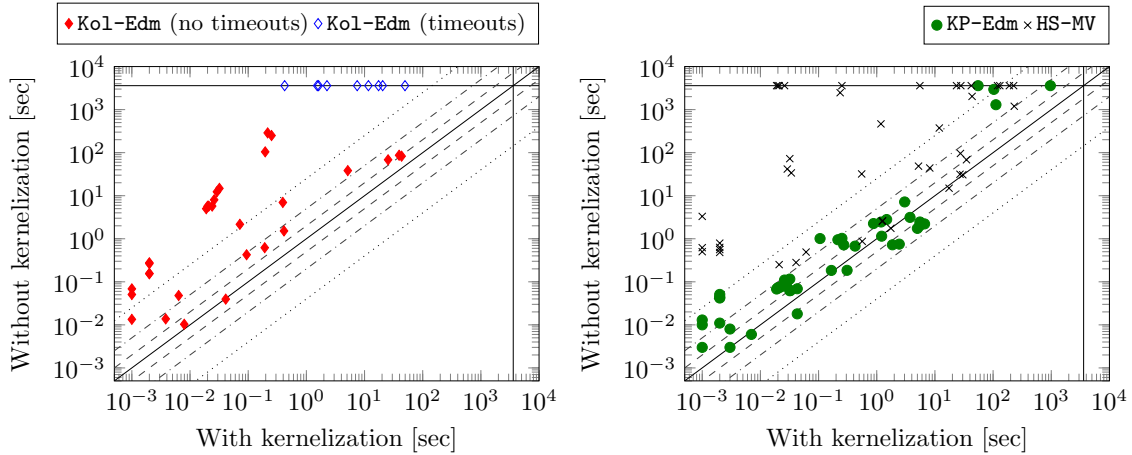

In \cref{fig:new-running-times}, we compare the running times of the solvers when they are directly applied on the input instances
against the running time of the kernelization plus the running time of the solvers on the resulting kernel.
The running time of the \solverKE solver is improved on average by a factor of $157.30$ (median: $29.27$) 
when the kernelization is applied (left-side of \cref{fig:new-running-times}).
The running time of the \solverHSMV solver is improved on average by a factor of $608.79$ (median: $28.87$) when the kernelization is applied (right-side of \cref{fig:new-running-times}).
Hence, it is safe to say that these two algorithms clearly benefit from the kernelization.
However, on the instance \emph{amazon0302} the \solverHSMV solver is 25\% faster without the kernelization and on the instance \emph{facebook-combined} the \solverKE solver is (on average over the hundred runs) 6\% faster without the kernelization.
On all other instances the running times of both solvers where improved by the kernelization.

In contrast, the message drawn by the results for the \solverKPE solver is less clear (right-hand side of \cref{fig:new-running-times}).
The running time of the \solverKPE solver is impaired on 9 of our 44 instances by the kernelization.
On average we \emph{still} get an improvement of the running time by a factor of $4.70$ (median: $2.20$).
The reason for the unclear result for the \solverKPE solver is that most of the instances are too easy for it, 
that is, they are solved very quickly with or without kernelization.
On harder instances (like our largest four graphs) we could observe a more significant speedup gained due to the kernelization.

\subsection{Running times for Maximum-Weight Matching}
\label{sec:eval-time-weighted}
In this section, we evaluate 
\cref{rule:max-path,rule:pend-cycle,rule:deg-one-weighted,rule:deg-zero-weighted}
for \WMatch.
The weighted graphs we used for our tests were generated from the unweighted graphs 
by adding edge-weights between~1 and~1000 chosen independently and uniformly at random.
In the weighted case we tested our kernelization only together with the 
\solverKEW solver since the \solverKPE and \solverHSMV solvers only work in the unweighted setting.
In contrast to the unweighted case (see \cref{fig:blossom5runningtime}),
we could not observe that the running time of \solverKEW is affected when the vertices are permuted.
For consistency, however, we take the average running times also in the weighted case.
Note that for different permutations the data reduction rules were applied in different order resulting in kernels slightly differing in size 
(see \cref{fig:not-confluent} for an example).

\begin{figure}[t]
	\centering
	\begin{tikzpicture}[auto]

		\begin{scope}[scale=0.7]
			\node[knoten] (1) at (0,0) {};
			\node[knoten] (2) at (1,0) {};
			\node[knoten] (3) at (4,0) {};
			\node[knoten] (4) at (2,1) {};
			\node[knoten] (5) at (3,1) {};
			\node[knoten] (6) at (2.5,-1) {};

			\draw[majarr] (1) edge node[above] {$1$} (2);
			\draw[majarr] (2) edge node[above] {$1$} (3);
			\draw[majarr] (2) edge node[left] {$3$} (4);
			\draw[majarr] (4) edge node[above] {$1$} (5);
			\draw[majarr] (5) edge node[right] {$3$} (3);

			\draw[majarr] (2) edge node[left] {$2$} (6);
			\draw[majarr] (6) edge node[right] {$2$} (3);

		\end{scope}

		\draw [->,snake=snake,line after snake=1mm] (3.5,0.0) -- 
			node {\footnotesize $($\Cref{rule:max-path}$,$ \Cref{rule:deg-one-weighted}$)$} (6,0);

		\draw [->,snake=snake,line after snake=1mm] (-0.5,0) -- 
			node[above] {\footnotesize $($\Cref{rule:deg-one-weighted}$,$ \Cref{obs:path-cycle-lin-time}$)$} (-3,0);
			\node at (-4.25,0) {$(\emptyset,\emptyset)$};

		\begin{scope}[yshift=0cm,xshift=6.5cm,scale=0.7]
			\node[knoten] (1) at (0,0) {};
			\node[knoten] (2) at (1,1) {};
			\node[knoten] (3) at (1,-1) {};
			\node[knoten] (4) at (2,0) {};

			\draw[majarr] (1) edge node[above] {$4$} (4);
			\draw[majarr] (1) edge node[left] {$1$} (2);
			\draw[majarr] (1) edge node[left] {$1$} (3);
			\draw[majarr] (2) edge node[right] {$2$} (4);
			\draw[majarr] (3) edge node[right] {$2$} (4);
		\end{scope}
		\begin{scope}[yshift=-2cm]
		
		\begin{scope}[scale=0.7]
			\node[knoten] (1) at (0,0) {};
			\node[knoten] (2) at (1,0) {};
			\node[knoten] (3) at (4,0) {};
			\node[knoten] (4) at (2,1) {};
			\node[knoten] (5) at (3,1) {};
			\node[knoten] (6) at (2.5,-1) {};

			\draw[majarr] (1) edge node[above] {$1$} (2);
			\draw[majarr] (2) edge node[above] {$2$} (3);
			\draw[majarr] (2) edge node[left] {$4$} (4);
			\draw[majarr] (4) edge node[above] {$2$} (5);
			\draw[majarr] (5) edge node[right] {$3$} (3);

			\draw[majarr] (2) edge node[left] {$2$} (6);
			\draw[majarr] (6) edge node[right] {$2$} (3);

		\end{scope}

		\draw [->,snake=snake,line after snake=1mm] (3.5,0.0) -- 
			node {\footnotesize $($\Cref{rule:max-path}$,$ 3x \Cref{rule:deg-one-weighted}$)$} (6,0);

		\draw [->,snake=snake,line after snake=1mm] (-0.5,0) -- 
			node[above] {\footnotesize $($\Cref{rule:deg-one-weighted}$,$ \Cref{rule:max-path}$)$} (-3,0);
			\node at (7.25,0) {$(\emptyset,\emptyset)$};

		\begin{scope}[yshift=0cm,xshift=-5cm,scale=0.7]
			\node[knoten] (1) at (0,0) {};
			\node[knoten] (2) at (1,1) {};
			\node[knoten] (3) at (1,-1) {};
			\node[knoten] (4) at (2,0) {};

			\draw[majarr] (1) edge node[above] {$4$} (4);
			\draw[majarr] (1) edge node[left] {$1$} (2);
			\draw[majarr] (1) edge node[left] {$1$} (3);
			\draw[majarr] (2) edge node[right] {$2$} (4);
			\draw[majarr] (3) edge node[right] {$2$} (4);
		\end{scope}
		\end{scope}
	\end{tikzpicture}
	\caption{In the middle we see two weighted graphs where both \cref{rule:deg-one-weighted} and \cref{rule:max-path} can be applied. 
	On the left-hand side is the resulting graph when \cref{rule:deg-one-weighted} is applied first and
	the right-hand side displays the resulting graph when \cref{rule:max-path} is applied first.}
	\label{fig:not-confluent}
\end{figure}

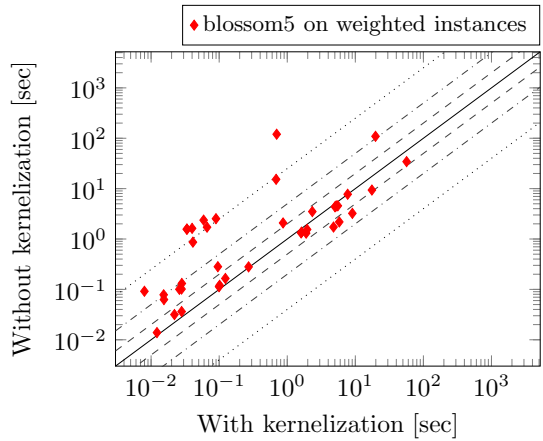
\begin{figure}
		\centering
	\def\maxValue{5200}
	\def\minValue{0.003}
	\begin{tikzpicture}[scale=0.95]
		\begin{loglogaxis}[
					width=0.5\textwidth,
					height=0.4\textwidth,
					xlabel={With kernelization [sec]},
					ylabel={Without kernelization [sec]},
					legend style = {
									at={(1, 1.02)},
									anchor={south east},
									font = \small
					},
					legend cell align = left,
					legend columns = 2,
					ytick distance=10^1,
					,xmax=\maxValue
					,xmin=\minValue
					,ymax=\maxValue
					,ymin=\minValue
					]
 			\addplot[only marks,red,mark=diamond*,discard if={mmtime}{}] table [x expr=(\thisrowno{6} + \thisrowno{10} + \thisrowno{11}+\zeroOffset)*0.001, y expr=(\thisrowno{3}+\thisrowno{4}+\zeroOffset)*0.001,col sep = comma]{\dataHighWeighted};
			\addlegendentry{\solverKEW on weighted instances}
			\addplot[color=black,domain=\minValue:\maxValue,samples=4] {x};
			\addplot[dashed,color=black!75,domain=\minValue:\maxValue,samples=4] {2*x};
			\addplot[dash dot,color=black!75,domain=\minValue:\maxValue,samples=4] {5*x};
			\addplot[dotted,color=black,domain=\minValue:\maxValue,samples=4] {25*x};
			\addplot[dashed,color=black!75,domain=\minValue:\maxValue,samples=4] {0.5*x};
			\addplot[dash dot,color=black!75,domain=\minValue:\maxValue,samples=4] {0.2*x};
			\addplot[dotted,color=black,domain=\minValue:\maxValue,samples=4] {0.04*x};
		\end{loglogaxis}
	\end{tikzpicture}
	\caption{Running time comparison with and without using our 
		kernelization (\Cref{rule:max-path,rule:pend-cycle,rule:deg-one-weighted,rule:deg-zero-weighted}) algorithms before \solverKEW. 
	The solid/dashed/dash dotted/dotted lines indicate a factor of 1/2/5/25 difference in the running time.
	To all values \zeroOffset{}\,millisecond was added to display 0-values.}
	\label{fig:kernel-time}
\end{figure}

In \cref{fig:kernel-time}, we illustrate the running time comparison of our kernelization for the weighted data reduction rules
against various unweighted data reduction rules and the running time of \solverKEW when applied without kernelization 
(the four largest graphs are missing since we could not solve them without kernelization).
Our weighted kernelization algorithm becomes slower than in the unweighted case 
(here, we mean just \cref{rule:deg-zero-one-vertices,rule:deg-two-vertices}).
This is not surprising as our
algorithm for \cref{rule:max-path,rule:pend-cycle,rule:deg-one-weighted,rule:deg-zero-weighted} 
is more involved than the one for the \cref{rule:deg-zero-one-vertices,rule:deg-two-vertices}.
Furthermore, the solver of \citet{Kol09} is significantly faster in the weighted case (\solverKEW) than in the unweighted case (\solverKE).
On three graphs the \solverKEW computes a maximum-weighted matching faster 
than we can produce the kernel.
However, on most graphs, our kernelization algorithm reduces the overall running time of \solverKEW
(on average by a factor of $12.72$; median: $1.40$).
Note that also in the weighted case 
the kernelization is more frequently beneficial than it is not. 

\section{Conclusion}\label{sec:conclusion}
Our work shows that it practically pays off to use (linear-time) data reduction rules for computing maximum (unweighted and weighted) matchings. 
Our current state of the theoretical (kernel size upper bounds)  analysis, however, is insufficient to fully explain this success. 
Here, a multivariate approach in which more than one parameter is taken into consideration seems like the natural next step~\cite{Nie10}.
Finding the \emph{right} parameters is the challenging part here.
In fact, adding to any graph~$G$ to each vertex a new degree-one vertex as neighbor results in a graph~$G'$ where \cref{rule:deg-zero-one-vertices} reduces everything, whereas the original graph~$G$ might not be amenable at all to the data reduction rules. 
Many graph parameters (including feedback edge number) cannot differentiate between~$G$ and~$G'$ and, hence, are not suited to explain the practical effectiveness. 

\paragraph{Future research for unweighted matchings.}
Through the connection between \VC and \CMatch one might be able to transfer further kernelization results from \VC to \CMatch.
However, there are known limitations: obtaining a kernel for \VC with~$O(\tau^{2-\varepsilon})$ edges ($\tau$ is the vertex cover number) for any~$\varepsilon > 0$ is unlikely in the sense that the polynomial hierarchy would collapse~\cite{DM14}.
Thus, obtaining a kernel with~$O(\tau^{2-\varepsilon})$ edges for \CMatch requires a new approach that should \emph{not} work for \VC.
So far, the kernelization algorithms we discuss in this paper \emph{do} work for both \CMatch and \VC.
However, as \CMatch is polynomial-time solvable, an $O(\tau^{2-\varepsilon})$-sized kernel trivially exists for \CMatch. 
The challenge here is to find such a kernel that is (near-)linear-time computable.

In future research, one might also study the combination of data reduction with linear-time approximation algorithms for matching~\cite{DP14}.
Furthermore, it would be interesting to know whether there is an efficient way of applying \cref{rule:relaxed-crown} or a variation of it.

The solver of \citet{Kol09} is significantly faster on weighted graphs (\solverKEW) than on unweighted graphs (\solverKE).
We believe that the reason for this is that in unweighted graphs there are a lot of symmetries, and unlucky tie-breaking seems to have a strong impact on the solver of \citet{Kol09}.
In the weighted case, the performance of the solver of \citet{Kol09} was much more consistent under permuting the vertices in the input graph.
As a consequence, we believe that the following might speedup the algorithm:
given an unweighted graph, introduce edge-weights such that a maximum-weight matching in the then weighted graph is also a maximum-cardinality matching in the unweighted graph.
Using the famous Isolation Lemma \cite{MVV87} one might even enrich and support this with a theoretical analysis.

\paragraph{Future research for weighted matchings.}

While our naive implementation for the unweighted case proved to be quite fast, 
the algorithm for the weighted case could benefit from further tuning.
Note that in the unweighted case \cref{rule:deg-zero-one-vertices,rule:deg-two-vertices} only make changes in the local neighborhood of the affected vertices.
This is not the case in the weighted case, where the application of \cref{rule:deg-one-weighted,rule:max-path,rule:pend-cycle} involve iterations over all edges, see \cref{lem:deg-one-weighted-time,lem:deg-two-lin-time-weighted}.
Hence, applying the data reduction rules exhaustively requires a larger overhead.
Although some improvements in the implementation might be possible, 
an improved algorithmic approach to exhaustively apply the data reduction rules is needed.
Is there a (quasi-)linear-time algorithm to exhaustively apply \cref{rule:deg-zero-weighted,rule:deg-one-weighted,rule:max-path,rule:pend-cycle}?
Furthermore, is there a variant of the crown data reduction (\cref{rule:crown}) for \WMatch?

\medskip

\textbf{Acknowledgement.} 

We are very grateful to anonymous reviewers of ESA~'18 and of ACM JEA for constructive and detailed feedback.

TK was supported by DFG, project FPTinP (NI 369/16).

\bibliographystyle{ACM-Reference-Format}
\bibliography{bib}

\newpage

\appendix %

\section{Appendix - Full Data Set}

\pgfplotstableset{%
	create on use/maxMatch/.style={
		create col/expr={
			\thisrow{k_degree_result_matchings}+\thisrow{k_crown_result_matchings}+\thisrow{k_c_matchings}
		},
	}
}

\begin{table}[h!]\small
	\caption{A full list of graph from the SNAP \cite{snap} data set which we used in our test scenario; here~$|V^{=i}|$ is the number of degree-$i$ vertices, $\Delta$ the maximum degree, and ``MCM'' the cardinality of a maximum matching.}
	\label{tab:unweighted}
	\centering
\pgfplotstabletypeset[columns={filename,misc_n,misc_m,misc_degree_1,misc_degree_2,misc_max_degree,degeneracy_degeneracy,maxMatch},
	columns/filename/.style={string type,column name=Graph,column type = {r}},
	columns/misc_n/.style={column name=$n$,precision=1,column type = {r}},
	columns/misc_m/.style={column name=$m$,precision=1,column type = {r}},
	columns/misc_degree_1/.style={column name=$|V^{=1}|$,precision=1,column type = {r}},
	columns/misc_degree_2/.style={column name=$|V^{=2}|$,precision=1,column type = {r}},
	columns/misc_max_degree/.style={column name=$\Delta$,precision=1,column type = {r}},
	columns/degeneracy_degeneracy/.style={column name=degeneracy,precision=1,column type = {r}},
	columns/maxMatch/.style={column name=MCM,precision=1,column type = {r}},
	every head row/.style ={before row=\toprule, after row=\midrule},
    every last row/.style ={after row=\bottomrule},col sep = comma] {\resultsAllTab}
\end{table}

\end{document}